\documentclass[11pt]{article}
\usepackage[utf8]{inputenc}
\usepackage[margin= 1.1 in]{geometry}
\usepackage{amsmath}
\usepackage[normalem]{ulem}
\usepackage{fontaxes}
\usepackage{trimspaces}
\usepackage{setspace}
\usepackage{inconsolata}
\usepackage{libertine}
\usepackage{todonotes}
\usepackage[T1]{fontenc}
\usepackage{amsmath, amssymb}
\usepackage{xargs}
\usepackage{mathtools}
\usepackage{amsthm}
\usepackage{thm-restate}
\usepackage{stmaryrd}
\usepackage{natbib}
\usepackage{graphicx}
\usepackage{hyperref}
\usepackage{xcolor}
\usepackage{comment}
\usepackage{todonotes}
\usepackage{hyperref}
\usepackage{xspace}

\makeatletter
\let\oldhypertarget\hypertarget
\renewcommand{\hypertarget}[2]{%
	\Hy@raisedlink{\oldhypertarget{#1}{}}
	#2%
	\protected@write\@mainaux{}{%
		\string\expandafter\string\gdef
		\string\csname\string\detokenize{#1}\string\endcsname{#2}%
	}%
}
\newcommand{\mylink}[1]{%
	\hyperlink{#1}{\csname #1\endcsname}%
}
\makeatother

\usepackage{stmaryrd}
\usepackage{textcomp}

\newcommand{\linkdest}[2]{}

\usepackage{enumitem}
\usepackage{framed}
\usepackage[capitalise]{cleveref}
\usepackage[framemethod=TikZ]{mdframed}

\usepackage[noend]{algorithmic}
\usepackage[boxed]{algorithm}

\usepackage{multirow}
\usepackage{tabularx}

\usepackage{mathtools}

\newcommand{\algcomment}[1]{\colorbox{black!10}{#1}}

\newcommand{\TSP}{\textsc{TSP}\xspace}

\newtheorem{theorem}{Theorem}[section]
\newtheorem{definition}[theorem]{Definition}
\newtheorem{lemma}[theorem]{Lemma}
\newtheorem{corollary}[theorem]{Corollary}
\newtheorem{claim}[theorem]{Claim}

\newtheorem{observation}[theorem]{Observation}
\newtheorem{conjecture}[theorem]{Conjecture}

\newtheorem{problem}{Open Problem}

\newcommand{\poly}{\mathtt{poly}}

\newcommand{\cA}{\mathcal{A}}
\newcommand{\cB}{\mathcal{B}}

\newcommand{\cF}{\mathcal{F}}

\newcommand{\cP}{\mathcal{P}}

\newcommand{\cT}{\mathcal{T}}

\newcommand{\cX}{\mathcal{X}}
\newcommand{\cY}{\mathcal{Y}}

\title{Improved Space-Time Tradeoffs for Permutation Problems via Extremal Combinatorics\footnote{All authors are supported by the project COALESCE that has received funding from the European Research Council (ERC), grant agreement No 853234.}}

\author{
  Afrouz Jabal Ameli\thanks{
    Department of Information and Computing Sciences, Utrecht University, The Netherlands.\\
    Email addresses:
    \texttt{a.jabalameli@uu.nl}, \texttt{j.nederlof@uu.nl}, \texttt{s.wang5@uu.nl}
  }
  \and
  Jesper Nederlof\footnotemark[2]
  \and
  Shengzhe Wang\footnotemark[2]
}
\date{}
\begin{document}
\maketitle
\begin{abstract}
	We provide improved space-time tradeoffs for permutation problems over additively idempotent semi-rings. In particular, there is an algorithm for the \textsc{Traveling Salesperson Problem} that solves $N$-vertex instances using space $S$ and time $T$ where $S\cdot T \leq 3.1861^{N}$. This improves a previous work by Koivisto and Parviainen~[SODA'10] where $S\cdot T \leq 3.9271^N$, and overcomes a barrier they identified, as their bound was shown to be optimal within their framework.

    To get our results, we introduce a new parameter of a set system that we call the \emph{chain efficiency}. This relates the number of maximal chains contained in the set system with the cardinality of the system. We show that set systems of high efficiency imply efficient space-time tradeoffs for permutation problems, and give constructions of set systems with high chain efficiency, disproving a conjecture by Johnson, Leader and Russell~[Comb. Probab. Comput.'15].
\end{abstract}

\section{Introduction}
The area of \emph{exact exponential time algorithms} (or \emph{fine-grained complexity of NP-complete problems}) asks for the most fundamental NP-complete problems how efficiently they can be solved exactly in the worst case. While early papers initiated this endeavor already in the 20th century~\cite{Bellman62,HeldK61,HorowitzS74}, the area flourished  around 2010 thanks to influential surveys by Woeginger~\cite{Woeginger04,Woeginger08}. There is also a textbook on the topic~\cite{FominK10}, and several more recent surveys~\cite{FominK13,Nederlof26}.

Besides the main goal of fast running times there has also consistently been major emphasis on designing algorithms with minimum space usage. Space usage is already prevalent in (even the title of) the initial survey~\cite{Woeginger04}, and the space usage of the aforementioned classical results was improved already many years ago~\cite{SchroeppelS81,Karp82}, while further progress continues to be made for certain problems~\cite{BansalGN018,BelovaCKM24,NederlofW21}. 

One of the most central computational problems studied in the area of exact exponential time algorithms is the~\textsc{Traveling Salesperson Problem (TSP)}, and it was already a topic of very early work~\cite{Bellman62,HeldK61} that solves $N$-vertex instances in $O^*(2^N)$ time.\footnote{The $O^*(\cdot)$ notation omits factors polynomial in the input size.} It poses the following very ambitious question:
\begin{problem}[Problem 6 in~\cite{Woeginger04}]\label{op}
\ 
\begin{itemize}
\item[] \textbf{(a)} Find an algorithm that solves $N$-vertex \TSP in $O^*(1.99^N)$ time,
\item[] \textbf{(b)} Find an algorithm that solves $N$-vertex \TSP in $O^*(2^N)$ time and polynomial space.
\end{itemize}
\end{problem}
While some progress on Open Problem~\ref{op}\textbf{(a)} in the unweighted and bipartite cases has been made~\cite{Bjorklund14,Nederlof20}, it still remains wide open. For Open Problem~\ref{op}\textbf{(b)}, \cite{GurevichS87} gave an $O^*(4^N)$ time and poly space algorithm\footnote{Many papers cite the algorithm of~\cite{GurevichS87} as running in $O^*(4^N N^{O(\log N)})$, but it seems folklore knowledge that it can be improved relatively easily to $O^*(4^N)$ time and polynomial space with a simple trick from~\cite{dividecolor}. For completeness we provide the simple algorithm in Appendix~\ref{sec:gs}.}, and it is a well-known open question whether this can be improved to even $O^*(3.99^N)$ time and polynomial space.

\paragraph{Space-Time Tradeoffs.}
Instead of requiring polynomial space as done in Open Problem~\ref{op}\textbf{(b)}, one can also study \emph{time-space trade-offs}: For every $S=S(N)$, what is the fastest running time $T=T(N)$ of any algorithm that solves $N$-vertex \TSP using only $S$ space? Such space-time tradeoffs are studied for several other NP-complete problems, like for example the \textsc{Subset Sum} problem~\cite{AustrinKKM13,DinurDKS12}. 

The algorithms of~\cite{Bellman62,HeldK61} and~\cite{GurevichS87} can be mixed to get, for every $i=0,\ldots,\lceil \log_2 N\rceil$ an algorithm that runs in $S=O^*(2^{N/2^i})$ space and $T=O^*(2^{(2-1/2^i)N})$ time (confer~\cite{FominK10}) and one may expect that improving over the \emph{time-space product} $S\cdot T = O^{*}(4^{N})$ is not much easier than solving Open Problem~\ref{op}.

However, perhaps somewhat surprisingly, Koivisto and Parviainen~\cite{KoivistoP10} were able to get a better trade-off for a general family of computational problems that includes \TSP. To define their general setup, let $S_n$ be the set of all permutations from $\{1,\ldots,n\}$ to $\{1,\ldots,n\}$, fix a semi-ring\footnote{Recall a \emph{semi-ring} is a set equipped with operations $\oplus$ and $\otimes$ such that addition is associative and commutative with an identity, multiplication is associative with an identity, and multiplication distributes over addition. Two important semi-rings are the \emph{Boolean semi-ring} (set $\{0,1\}$; operations $\wedge$ and $\vee$) and the \emph{Min-Sum semi-ring} (set $\mathcal{N} \cup \{\infty\}$; operations $\min$, $+$).} with addition $\oplus$ and multiplication $\otimes$, and consider the following:

\begin{definition}[\cite{KoivistoP10}]
A \emph{permutation problem of degree $d$ over a semi-ring with addition $\oplus$ and multiplication $\otimes$} is the task of evaluating $\oplus_{\sigma \in S_N} f(\sigma)$, where we can write
\[
    f(\sigma) = \bigotimes_{j=1}^N f_j(\{\sigma_1,\ldots,\sigma_j\},\sigma_{\max\{1,j-d+1\}},\ldots,\sigma_{j-1},\sigma_j)
\]
for some functions $f_1,\ldots,f_N$.
\end{definition}
As observed in~\cite{KoivistoP10}, permutation problems of bounded degree (i.e. $d=O(1)$) model many computational problems, including \TSP, \textsc{Directed Feedback Arc Set}, and~\textsc{Cutwidth}.
By generalizing the approaches from respectively~\cite{Bellman62,HeldK61} and~\cite{GurevichS87}, it can be shown relatively easily with dynamic programming over subsets of $\{1,\ldots,N\}$ that any permutation problem can be solved in time $2^N$ and space $2^N$, and respectively in $4^N$ time and $O^*(1)$ space. The following result implies the mentioned improved space-time tradeoff for \TSP:
\begin{theorem}[\cite{KoivistoP10}]\label{lem:kk}
Any permutation problem of bounded degree over a semi-ring can be solved using $S$ space and time $T$, for some $S$ and $T$ satisfying $S\cdot T \leq 3.9271^N$. 
\end{theorem}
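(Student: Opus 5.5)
The plan is to reconstruct the Koivisto--Parviainen argument: partial orders that cover all permutations, combined with dynamic programming over the downsets of each order.

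\emph{Step 1: covering permutations by bucket orders.} Split $\{1,\ldots,N\}$ into $N/k$ blocks of constant size $k$. Inside each block $B$ fix a family $\cF_B$ of subsets of $B$ that contains $\emptyset$ and $B$; this is a ``template'' set system. For a block of size $k$ we use the family of all subsets of size $0$, $k$, or in some chosen set of intermediate levels. Given an interleaving pattern $\pi$ (how the elements of different blocks alternate), we define a partial order $P$ whose downsets are exactly the prefixes consistent with the templates. We then choose a collection $\cP$ of such partial orders so that every permutation $\sigma\in S_N$ is a linear extension of at least one $P\in\cP$. Because the semiring is additively idempotent (as it is for Min-Sum and Boolean), double counting a permutation does no harm. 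Hence
\[
  \bigoplus_{\sigma} f(\sigma)=\bigoplus_{P\in\cP}\;\bigoplus_{\sigma \text{ extends } P} f(\sigma).
\]

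\emph{Step 2: dynamic programming for one order.} For a fixed $P$, we evaluate the inner sum by dynamic programming over the downsets (ideals) $I$ of $P$. The state is the ideal $I$ together with the last $d-1$ elements placed. The recurrence extends $I$ by one element, which takes polynomial time per state since $d=O(1)$. Time and space are both $O^*(|\mathcal{I}(P)|)$, where $\mathcal{I}(P)$ is the set of ideals of $P$. If $P$ is a disjoint union of $N/k$ pieces, each with $m$ ideals, this gives $m^{N/k}$.

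\emph{Step 3: accounting.} The total cost is space $S=O^*(m^{N/k})$ and time $T=|\cP|\cdot m^{N/k}$. The number of orders needed is at least $N!$ divided by the number of linear extensions of one $P$. With $e$ the number of linear extensions per block piece, and ignoring how blocks interleave, this gives $|\cP|\approx (k!/e)^{N/k}$. Hence
\[
  S\cdot T\approx\left(\frac{m^2\,k!}{e}\right)^{N/k}.
\]
To balance space against time we also allow only part of the ground set to be processed this way, with the rest recursed by Gurevich--Shelah style splitting. The concrete choice, following Koivisto--Parviainen, is $k=2p$ with the template ``pick all $p$-subsets of the block'': the ideals are the empty set, all $p$-subsets, and the whole block, so $m=\binom{2p}{p}+2$. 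Optimizing over $p$ and combining with the $4^N$ divide-and-conquer at the outer level should give a base around $3.9271$. I would check this numerically for the best $p$.

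\emph{Main obstacle.} The hard step is Step 1: building $\cP$ explicitly and with only polynomial overhead, so that every permutation is covered and $|\cP|$ is close to the counting lower bound. The naive approach, enumerating all ways each block's first $p$ elements can be chosen, covers everything but may overcount. I would instead take for each block a family of partitions $B=L\cup R$, $|L|=p$, requiring $L$ to precede $R$ within the block. Every permutation restricted to $B$ has some first-$p$ set, so taking all $\binom{2p}{p}$ choices per block is an exact cover. The tradeoff then comes from the product of $\binom{2p}{p}^{N/k}$ orders, each costing $(2^{p+1})^{N/k}$ rather than $m^{N/k}$. I would redo the calculation above in this form and tune $p$, together with a recursive split into halves, to reach the constant $3.9271$.
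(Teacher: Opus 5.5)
The construction you arrive at in your last paragraph is the Koivisto--Parviainen construction the paper cites, and it already proves the statement on its own: blocks of size $2p$, for each block the $\binom{2p}{p}$ two-bucket orders ``$L$ before $B\setminus L$'', disjoint unions over blocks, and $p=13$. Several other parts of the proposal are wrong and must be dropped.

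\textbf{Idempotency is not available.} Step~1 relies on additive idempotency, but the statement is for arbitrary semi-rings, for example the counting semi-ring, where a mere cover overcounts. Relaxing ``partition'' to ``cover'' is exactly this paper's new ingredient, and it is valid only in the idempotent case. The KP10 argument works because the family is a \emph{partition}. A permutation of $[N]$ is a linear extension of $\bigsqcup_i P_{L_i}$ iff, for every block $B_i$, the first $p$ elements of $B_i$ it visits are exactly $L_i$. Hence it extends exactly one member of the family, and $\bigoplus_{\sigma}f(\sigma)=\bigoplus_{P}\bigoplus_{\sigma\in L(P)}f(\sigma)$ holds in any semi-ring.

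\textbf{The first template is wrong.} Your template (all subsets of size $0$, $p$, $2p$, with $m=\binom{2p}{p}+2$) fails for $p\ge 2$. It is not closed under union or intersection, so it is not the ideal family of any poset. It also contains no maximal chain at all, so the one-element-at-a-time DP has nothing to run over. The ideals of a two-bucket order are $\{A: A\subseteq L\}\cup\{L\cup A: A\subseteq B\setminus L\}$, exactly $2^{p+1}-1$ of them.

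\textbf{No divide-and-conquer is involved.} With $p=13$ you directly get space $O^*\bigl((2^{14}-1)^{N/26}\bigr)$ and time $O^*\bigl((\binom{26}{13}(2^{14}-1))^{N/26}\bigr)$. Hence $S\cdot T=O^*\bigl((\binom{26}{13}(2^{14}-1)^2)^{N/26}\bigr)$, whose base is $\approx 3.92709$. Stacking the $4^N$ recursion on top only interpolates the base toward $4$; one level already gives about $(2\sqrt{3.92709})^N\approx 3.963^N$. KP10 use that combination to reach smaller space, not to improve the product.

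\textbf{The $-1$ matters.} At the stated precision, using your count $2^{p+1}$ instead of $2^{p+1}-1$ makes the base $\approx 3.92711>3.9271$.

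When $26\nmid N$, pad with dummy elements forced to the end, as in the paper's proof of Lemma~\ref{lem:algo}.
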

Theorem~\ref{lem:kk} is proven in the following very elegant way: Let $P$ be a partially ordered set\footnote{Recall a \emph{poset} (short for partially ordered set) is a set equipped with a partial order $\preceq$ which is a binary relation that is reflexive, antisymmetric and transitive. We describe a poset $P$ with its \emph{Hasse diagram} which is a digraph $D$ with an arc $(u,v)$  whenever $u \preceq v$ and there is no $w$ distinct from $u$ and $v$ such that $u \preceq w\preceq v$.} with $\alpha(P)$ ideals.\footnote{An \emph{ideal} is a set of elements $X$ such that $u \preceq v$ and $v \in X$ imply that $u \in X$.} By restricting the "dynamic programming over subsets" algorithm~\cite{Bellman62,HeldK61} one can compute the quantity $\oplus_{\sigma \in L(P)} f(\sigma)$ in $O^*(\alpha(P))$ time and space, where $L(P)$ denotes the set of linear extensions of $P$. Then it is shown that there is a family $\cP$ of partial orders such that $\{L(P) : P \in \cP\}$ partitions $S_n$ and yet $\alpha(P)$ is small, leading to an algorithm that computes $\oplus_{\sigma \in S_n} f(\sigma)$ in $\sum_{P \in \cP} \alpha(P)$ time and $\max_{P \in \cP}\alpha(P)$ space and hence an algorithm with time-space product
\[
 \theta(\cP) := \left(\sum_{P \in \cP} \alpha(P)\right) \left(\max_{P \in \cP}\alpha(P) \right).
\]
The families $\cP$ in~\cite{KoivistoP10} were obtained by combining \emph{bucket orders}, which are partially ordered sets $P$ whose elements can be partitioned in sets $B_1,\ldots,B_k$ such that $(x,y) \in P$ for all $x \in B_{i}$ and $y \in B_{i+1}$. In particular, the authors used the bucket order with $k=2$ and $|B_1|=|B_2|=13$ and obtained a family $\cP$ of size $|\cP|=\tbinom{26}{13}$ with $\alpha(P)=2^{14}-1$ for all $P \in \cP$ and hence $\theta(\cP) = \tbinom{26}{13}(2^{14}-1)^2\approx 3.9271^{26}$. Additionally, they showed that this is optimal within their paradigm of combining bucket orders.

\paragraph{Improved Space-Time Tradeoffs over Idempotent Semi-rings.}
Our starting observation is that, for permutation problems over \emph{idempotent} semi-rings\footnote{Idempotent semi-rings are semi-rings with idempotent addition (i.e. $a \oplus a =a$ for every $a$). The aforementioned Boolean semi-ring and Min-Sum semi-ring are idempotent, and hence we do not lose any of the algorithmic applications mentioned in~\cite{KoivistoP10} with this restriction.}, the above still holds when $\cP$ is a family of posets such that the sets $\{L(P): P \in \cP\}$ \emph{cover} $S_n$ (i.e., every permutation in $S_n$ is a linear extension of a poset in $\cP$). This weaker requirement actually allows us to obtain the desired poset families from any fixed poset with an easy probabilistic argument: If $P$ is an $n$-element poset with $\lambda(P)$ linear extensions, then by the probabilistic method there exists a family $\cP$ of size $\frac{n!}{\lambda(P)}n \ln n$ such that, for any $P' \in \cP$,  $\alpha(P')=\alpha(P)$ and $S_n = \cup_{P \in \cP}L(P)$: Indeed, we can just take $|\cP|$ copies of $P$ whose elements are randomly relabeled and argue that with non-zero probability every permutation is a linear extension of an element of $\cP$.\footnote{The original paper~\cite{KoivistoP10} also obtained their family by such relabelings, but in a more restricted manner.}

Fixing the space usage to $m$ and observing that the number of ideals of a poset equals the number of antichains, this immediately brings our attention to the following clean question in extremal combinatorics posed by Johnson, Leader and Russell~\cite{JohnsonLR15}:

\begin{problem}[Question 8 in~\cite{JohnsonLR15}]\label{opposet}
What is the maximum number of linear extensions of a poset on $n$ elements that contains at most $m$ antichains?
\end{problem}
Somewhat surprisingly, this problem has been hardly studied in the literature. The authors of~\cite{JohnsonLR15} conjecture that extremal examples for Open Problem~\ref{opposet} basically coincide with the bucket orders from~\cite{KoivistoP10}.
This suggests that bucket orders are optimal for our purposes.

\subsection*{Our results}
As our first result, we further generalize the method from~\cite{KoivistoP10} to not only work for set systems formed by the set of ideals of a bucket order poset, but for \emph{any} set system. The parameter of interest of the set system resembles the number of linear extensions of a poset and was also already defined by Johnson, Leader and Russell~\cite{JohnsonLR15}:

\begin{definition}[Maximal chains~\cite{JohnsonLR15}]
If $\cA \subseteq 2^{[n]}$, a \emph{maximal chain} is a sequence $A_0,\ldots,A_n \in \cA$ such that $A_i \subset A_{i+1}$ for all $i=0,\ldots,n-1$. We denote $c(\cA)$ for the number of maximal chains of $\cA$.
\end{definition}
Note that $A_0=\emptyset$ and $A_n=[n]$, whenever $A_0,\ldots,A_n$ is a maximal chain. The following parameter resembles the parameter $\theta(P)$ and quantifies how many maximal chains $\cA$ has in comparison to its size and universe-size:
\begin{definition}[Chain efficiency of a set system]\label{def:chaineff}
	Let $\cA \subseteq 2^{[n]}$ be a set system on $n$ elements. Define the \emph{chain efficiency} (or \emph{efficiency} for short) $\eta(\cA)$ of $\cA$ as follows:
	\[
		\eta(\cA) := \left(\frac{c(\cA)}{|\cA|^2\cdot n!}\right)^{1/n}.
	\]
\end{definition}

By generalizing the aforementioned ``dynamic programming over subsets'' method (originally from~\cite{Bellman62,HeldK61}) and the application of the probabilistic method (derandomized in a relatively standard way), we show the following:
\begin{restatable}{lemma}{lemAlgo}\label{lem:algo}
    Let $\cA \subseteq 2^{[n]}$. Then there is an algorithm that solves permutation problems of bounded degree on instances of size $N$ over idempotent semi-rings in space $S$ and time $T$ with $S\cdot T = \eta(\cA)^{-N}N^{O(\sqrt{N})}$, assuming $n = O\left(\sqrt{N}\right)$.
\end{restatable}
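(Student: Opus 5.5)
We build a set system on $[N]$ from $\cA$, run a restricted subset dynamic program over it, and then cover all permutations using random relabelings. Assume first that $n$ divides $N$; otherwise pad with dummy elements, which costs only polynomial factors. Write $k = N/n$ and split $[N]$ into consecutive blocks $B_1,\ldots,B_k$ of size $n$, each identified with $[n]$.

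\textbf{The base family.} Let $\cB$ be the family of sets $X\subseteq[N]$ of the form $B_1\cup\cdots\cup B_{i-1}\cup A$ with $A\in\cA$ placed inside $B_i$. This is the "bucket order" construction with $\cA$ applied in each bucket. Then $|\cB|\le k|\cA|$. Every maximal chain of $\cB$ is a concatenation of maximal chains of $\cA$, one per block, so $c(\cB)=c(\cA)^k$.

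\textbf{Dynamic program over $\cB$.} For a permutation $\pi$ of $[N]$, let $\pi(\cB)$ denote the relabeled family. We evaluate $\bigoplus_\sigma f(\sigma)$ over all $\sigma$ whose prefix sets $\{\sigma_1,\ldots,\sigma_j\}$ all lie in $\pi(\cB)$. Do this by dynamic programming over pairs $(X,\text{last } d-1 \text{ elements})$ with $X\in\pi(\cB)$, extending $X$ only to $X\cup\{v\}\in\pi(\cB)$. Membership can be tested in polynomial time given $\pi$ and a list of $\cA$. Since $d=O(1)$, this takes time and space $O^*(|\cB|)=O^*(|\cA|)$, and $|\cA|\le 2^n=2^{O(\sqrt N)}$. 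Idempotency lets us take $\oplus$ over several relabelings $\pi_1,\ldots,\pi_t$ without any double-counting issue, provided every $\sigma\in S_N$ has all its prefixes in some $\pi_i(\cB)$. The resulting algorithm uses space $S=O^*(|\cA|)$ and time $T=O^*(t|\cA|)$.

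\textbf{Covering.} A uniformly random $\pi$ makes a fixed $\sigma$ a maximal chain of $\pi(\cB)$ with probability $c(\cB)/N!$. Taking $t=\frac{N!}{c(\cA)^k}\cdot N\ln N\cdot O(1)$ random relabelings therefore covers all of $S_N$ with positive probability, by a union bound. This gives
\[
S\cdot T = O^*\!\left(t|\cA|^2\right)=\frac{N!\,|\cA|^2}{c(\cA)^{k}}\,N^{O(1)}.
\]
Compare this with
\[
\eta(\cA)^{-N}=\left(\frac{|\cA|^2 n!}{c(\cA)}\right)^{k}=\frac{|\cA|^{2k}(n!)^k}{c(\cA)^k}.
\]
Using $N!/(n!)^k\le k^N$ is too lossy. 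Instead, use Stirling: $N!/(n!)^k = (N/e)^N/((n/e)^{n})^k\cdot 2^{O(k\log n)} = k^N\cdot N^{O(\sqrt N)}$. The $k^N$ factor is exactly the number of ways to assign elements to blocks, which is wasteful. So the right base family should not fix the block partition. Instead, take one relabeling per choice of (ordered partition into blocks, relabeling inside blocks) as the random object. Then the probability that $\sigma$ is covered is $c(\cA)^k/(n!)^k$ times the probability that the blocks are exactly $\sigma$'s consecutive $n$-segments, and the latter is $(n!)^k/N!$. Hence we need $t\approx N!/c(\cA)^k$, the same as before, and the bound $S\cdot T\approx N!\,|\cA|^2/c(\cA)^k$ stands.

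\textbf{The main obstacle.} We must get $|\cA|^{2}$ per whole space, not $|\cA|^{2k}$. So the plan changes: space must be $|\cA|^k$, and the family should be the product $\{A_1\cup\cdots\cup A_k\}$ restricted to chains. The correct $\cB$ is the set of $X$ with $X\cap B_i\in\cA$ for all $i$. Then $|\cB|=|\cA|^k$, and $c(\cB)=\binom{N}{n,\ldots,n}c(\cA)^k$ because the per-block chains interleave freely. The coverage count becomes $t=\frac{N!}{c(\cB)}N\ln N=\frac{(n!)^k}{c(\cA)^k}N\ln N$, which gives
\[
S\cdot T=t|\cB|^2 N^{O(1)}=\eta(\cA)^{-N}N^{O(1)}.
\]
The remaining work is computational: the DP visits only sets in $\cB$; enumerating and indexing $\cB$ costs an extra $2^{O(n)}$ per step; and derandomization uses the method of conditional expectations over the $t$ relabelings. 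Exhaustive derandomization would cost $(N!)$-type time, so instead we use a splitter/greedy set-cover argument across blocks of size $n$. This yields overhead $N^{O(\sqrt N)}$, since $n=O(\sqrt N)$ and $k\cdot 2^n$ and $n!^{O(1)}$ terms are all $N^{O(\sqrt N)}$. I expect this derandomization to be the delicate step.
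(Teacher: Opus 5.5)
Your final plan is the right construction, and it is the paper's $\cA^*$: the product family $\cB=\{X\subseteq[N] : X\cap B_i\in\cA \text{ for all } i\}$, the subset DP restricted to a relabeled $\cB$, and covering $S_N$ by relabelings using idempotency. Your existence count $t\,|\cB|^2=\eta(\cA)^{-N}N^{O(1)}$ is also correct.

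\textbf{The gap.} It is the step you flag yourself: you never give a deterministic way to produce the relabelings, and the route you sketch does not achieve the claimed overhead. Greedy set cover separately on each block of size $n$ is valid once you note one thing. For a block-preserving relabeling $\pi=(\pi_1,\dots,\pi_k)$, $\sigma$ is covered iff each subsequence $\sigma|_{B_i}$ is covered by $\pi_i$, so a product of per-block covers covers $S_N$. But each block family has size $\frac{n!}{c(\cA)}\cdot\poly(n)$, so the total is $\bigl(\frac{n!}{c(\cA)}\bigr)^{N/n}\cdot\poly(n)^{N/n}$ relabelings. The factor $\poly(n)^{N/n}$ is $N^{O(\sqrt N)}$ only when $n=\Theta(\sqrt N)$. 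For constant $n$ it is $2^{\Theta(N)}$ and changes the base of the exponent. Constant $n$ is the case that matters, e.g.\ the fixed system $\cA_{m,\tau}$ with $m=50000$. A better per-block cover does not rescue this: an optimal cover of $S_n$ by relabelings of a fixed $\cA$ typically exceeds $n!/c(\cA)$ by a constant factor (that ratio need not even be an integer), and this factor also compounds over $N/n$ blocks. Conditional expectations over $S_N$ is too expensive, as you note, and splitters do not address the issue.

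\textbf{The missing idea.} Regroup before covering. Partition $[N]$ into $s\approx\sqrt N$ super-blocks of size $gn\approx\sqrt N$ and place a copy of the Cartesian power $\cA^g$ on each.
\begin{itemize}
\item By Lemma~\ref{lem:power_efficiency}, $\eta(\cA^g)=\eta(\cA)$, and the overall family is still your $\cB$; only the granularity of the relabelings changes.
\item Lemma~\ref{lem:det} (greedy set cover over $S_{gn}$) now runs in $\poly((gn)!)=N^{O(\sqrt N)}$ time.
\item The compounded slack becomes $\poly(\sqrt N)^{O(\sqrt N)}=N^{O(\sqrt N)}$.
\end{itemize}
By the same factorization, block-diagonal products of the per-super-block covers cover $S_N$, and the DP is unchanged. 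A minor point: padding $N$ up to a multiple of the block size costs a factor $\eta(\cA)^{-O(\sqrt N)}\le N^{O(\sqrt N)}$, not a polynomial factor as you state. This is harmless for the final bound.
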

Note that the sub-exponential factor $N^{O(\sqrt{N})}$ can always be omitted in this paper when applying Lemma~\ref{lem:algo} since the constant $\eta(\cA)$ is often rounded down, which suppresses sub-exponential factors.

By the following observation, Lemma~\ref{lem:algo} indeed generalizes the earlier bucket order-based and poset-based paradigms:
\begin{observation}\label{obs:posetEfficiency}
If $P$ is a poset of $n$ elements, $\alpha(P)$ ideals and $\lambda(P)$ linear extensions, then the set of ideals of $P$ has $\lambda(P)$ maximal chains and hence has chain efficiency $\left(\frac{\lambda(P)}{\alpha(P)^2 \cdot n!}\right)^{1/n}$.
\end{observation}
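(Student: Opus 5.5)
The plan is to exhibit an explicit bijection between the maximal chains of the ideal family $\cI(P) \subseteq 2^{[n]}$ and the linear extensions of $P$; the efficiency formula then follows by substitution. Identify the ground set of $P$ with $[n]$, and recall that $|\cI(P)| = \alpha(P)$ by definition.

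First, map each linear extension $\sigma = (\sigma_1,\ldots,\sigma_n)$ of $P$ to the sequence of prefixes $A_j := \{\sigma_1,\ldots,\sigma_j\}$ for $j=0,\ldots,n$. Each prefix is an ideal. Indeed, suppose $u \preceq v$ and $v = \sigma_i$ with $i \le j$. Since $\sigma$ is a linear extension, $u$ appears no later than $v$, so $u \in A_j$. The prefixes are strictly increasing, $A_0=\emptyset$, and $A_n=[n]$, so the sequence is a maximal chain of $\cI(P)$.

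Conversely, take a maximal chain $A_0 \subset A_1 \subset \cdots \subset A_n$ in $\cI(P)$. Strict inclusion forces $|A_j| = j$, so each $A_j \setminus A_{j-1}$ is a single element $\sigma_j$. Then $\sigma=(\sigma_1,\ldots,\sigma_n)$ is a permutation of $[n]$. It is a linear extension: if $u \preceq v$ with $v=\sigma_i$, then $v \in A_i$, and since $A_i$ is an ideal, $u \in A_i$. Hence $u=\sigma_k$ for some $k\le i$.

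The two maps are mutually inverse: the prefixes of $\sigma$ recover the chain, and the successive differences of the chain recover $\sigma$. Therefore $c(\cI(P)) = \lambda(P)$. Substituting $c = \lambda(P)$ and $|\cA| = \alpha(P)$ into Definition~\ref{def:chaineff} gives the stated chain efficiency. There is no real obstacle here; the only point needing care is that strict inclusion along $n+1$ sets in $2^{[n]}$ forces each step to add exactly one element.
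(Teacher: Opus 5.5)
Your proof is correct and is the argument the paper has in mind: the paper states this as an observation with no written proof, and relies on exactly this prefix-ideal bijection between linear extensions and maximal chains of the ideal family. It invokes the same correspondence for bucket orders and $\cT_{n/2}$ in Section~\ref{subsec:counter}, and the efficiency formula then follows directly from Definition~\ref{def:chaineff}.
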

This observation also suggests the following natural definition:
\begin{definition}[Efficiency of a Poset]
The efficiency $\eta(P)$ of a poset $P$ denotes $\left(\frac{\lambda(P)}{\alpha(P)^2 \cdot n!}\right)^{1/n}$.
\end{definition}
Note that in Lemma~\ref{lem:algo}, family $\cA$ does not need to be known to the algorithm, so the question about the best time-space tradeoffs within our new paradigm is equivalent to asking what is the most efficient set system. While we are not able to determine this exactly, we give a poset (and hence a set system) that is significantly more efficient than the bucket order-based construction of~\cite{KoivistoP10}. To describe our construction, we define a \emph{bipartite poset} to be a poset whose Hasse Diagram is bipartite.
\begin{restatable}{lemma}{lemConstruction}\label{lem:construction}
    Let $P$ be a bipartite poset with parts $X=\{x_0,\ldots,x_{28}\}$ and $Y=\{y_0,\ldots,y_{28}\}$ such that $(x_i,y_j) \in P$ if and only if $(i -j) \mod 29 \in \{0,1,3,6,10,15\}$. Then $\eta(P) > 1/3.7492$. 
\end{restatable}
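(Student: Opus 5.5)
We need $\eta(P)^{58} = \lambda(P)/(\alpha(P)^2\cdot 58!) > 3.7492^{-58}$. With $n=58$, this reduces to computing (or lower bounding) $\lambda(P)$ and computing (or upper bounding) $\alpha(P)$ exactly. The plan is a computer-assisted exact count that exploits the structure of $P$.

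\textbf{Structure of the poset.} The difference set $D=\{0,1,3,6,10,15\}$ has $6$ elements in $\mathbb{Z}_{29}$. The pairwise differences of $D$ are distinct: there are $30$ ordered differences and only $28$ nonzero residues, so $D$ is a near-perfect Sidon-type set. Hence any two $y$'s share at most one common $x$-predecessor, and the relations are $x_i < y_j$ only. Consequently $P$ has height $2$, every $x$ is minimal, every $y$ is maximal, and each $y_j$ has exactly $6$ predecessors $x_{j+d}$, $d\in D$ (indices mod $29$).

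\textbf{Counting ideals.} An ideal is determined by a set $Y'\subseteq Y$ together with any superset of $N(Y')$ inside $X$. Therefore
\[
\alpha(P)=\sum_{Y'\subseteq Y} 2^{\,29-|N(Y')|},
\]
where $N(Y')$ is the set of predecessors of $Y'$. I would compute this with a transfer-matrix dynamic program around the cycle $\mathbb{Z}_{29}$. Processing indices $j$ in order, the state records which of the last $15$ $y$'s are chosen, so that one knows whether $x_j$ is forced. The cycle is closed by guessing the first $15$ bits, giving roughly $2^{15}\cdot 2^{15}\cdot 29$ work; this is feasible but should be organized carefully. A more economical alternative is to sum over $X'=X\setminus(\text{ideal}\cap X)$: the $y$'s all of whose predecessors lie in the ideal are free, so $\alpha=\sum_{I_X\subseteq X}2^{|\{j:\, j+D\subseteq I_X\}|}$, which is the same transfer computation with state equal to the last $16$ bits of $I_X$.

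\textbf{Counting linear extensions.} Dynamic programming over the ideals computes $\lambda(P)$ exactly: $\lambda$ equals the number of maximal chains in the ideal lattice, obtained by $g(I)=\sum_{v \text{ maximal in } I} g(I\setminus v)$. This runs in $O(\alpha(P)\cdot 58)$ time. I expect $\alpha(P)$ to be of order a few times $10^{10}$ at most, since a bound of $2^{29}\cdot(\text{small})$ seems plausible. If that is too large, I would use the symmetry of the cyclic shift, which acts on ideals, to cut the work by a factor of $29$, or use a cleverer transfer counting of linear extensions. That alternative exploits the fact that a linear extension is a shuffle in which each $y_j$ appears after its $6$ $x$'s: one counts sequences via the order-type pattern, which is harder. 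The fallback of plain DP over ideals with hashing is the simplest, and I would commit to it.

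\textbf{Main obstacle.} The main obstacle is making the exact enumeration of ideals and linear extensions computationally feasible and certifiable, since everything else is elementary. A rigorous alternative for $\lambda$ that only needs a \emph{lower} bound is to restrict to linear extensions of a refining bucket-type structure. However, the margin to $3.7492$ is likely thin, so I would rely on the exact computation. The final step plugs $\lambda(P)$ and $\alpha(P)$ into $\lambda/(\alpha^2\,58!)$ using exact big-integer arithmetic and checks that it exceeds $3.7492^{-58}$.
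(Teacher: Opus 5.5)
Your plan is essentially the paper's proof: the paper reports the exact values $\alpha(P)=2125130762$ (computed with a cyclic transfer-matrix DP, as you propose) and $\lambda(P)\approx 5.46339\times 10^{63}$, then evaluates $(\lambda/(\alpha^2\,58!))^{1/58}$. The one methodological difference is how $\lambda(P)$ is counted. You run a DP over all $\approx 2.1\times 10^9$ ideals with big-integer values, which is correct but heavy. The paper instead uses $F(S,t)$ for $S\subseteq X$, where $t$ is the number of $Y$-elements already placed:
$F(S,t)=\sum_{x\notin S}F(S\cup\{x\},t)+(e(S)-t)\,F(S,t+1)$ and $F(X,t)=(29-t)!$.
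Here $e(S)$ counts the $y$'s whose six lower covers lie in $S$. This works because every $y$ is maximal, so it only matters how many available $y$'s have been used, not which ones. Together with orbit compression under the rotation symmetry, this runs in seconds.

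There are two corrections. First, your Sidon-type claim is false, and your own count refutes it: $30$ ordered differences cannot be distinct among $28$ nonzero residues. Concretely, $3-0=6-3$, so $y_j$ and $y_{j+3}$ share the two lower covers $x_{j+3}$ and $x_{j+6}$. You never use the claim, so simply delete it.

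Second, your final check will not succeed as stated. With the values above, $\alpha^2\cdot 58!/\lambda\approx 1.94303\times 10^{33}$, whereas $3.7492^{58}\approx 1.94285\times 10^{33}$. Hence $1/\eta(P)\approx 3.749206$, and the test ``exceeds $3.7492^{-58}$'' fails by about $0.01\%$. Your instinct that the margin is thin was right; it is in fact slightly negative. What the exact computation certifies is $\eta(P)>1/3.7493$, matching the $3.7493^N$ bound the paper mentions for its first version. The paper's own proof only asserts $\eta(P)\approx 1/3.7492$, so this is a rounding slip in the statement rather than a flaw in your method. Still, you should state the bound with the constant $3.7493$.
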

Since this is more efficient than the best bucket order-based construction, our construction disproves the conjecture of~\cite{JohnsonLR15}.

Departing from posets and making use of the generality of arbitrary set systems, we are able to improve the efficiency much further:

\begin{lemma}\label{lem:setSystemConstruction}
    There exists a set system $\cA$ with $\eta(\cA) > 1/3.1861$.
\end{lemma}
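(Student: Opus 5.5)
The plan is to prove the lemma by an explicit, computer-checkable construction. We choose a concrete $n$ and a set system $\cA\subseteq 2^{[n]}$ built from a few parameters. We compute $|\cA|$ and $c(\cA)$ exactly and check that $c(\cA)/(|\cA|^2 n!) > 3.1861^{-n}$. Lemma~\ref{lem:algo} then turns this into the claimed time-space tradeoff.

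\textbf{Guiding principle.} By Observation~\ref{obs:posetEfficiency}, posets are restricted: their ideal families are closed under union and intersection. In the bucket-order extreme this forces whole Boolean lattices to appear inside $\cA$. General set systems avoid this. A set of size $i$ that lies on many maximal chains should have many subsets of size $i-1$ in $\cA$, and also many supersets of size $i+1$ in $\cA$, while each level $\cA_i=\cA\cap\binom{[n]}{i}$ stays small.

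\textbf{Candidate construction.} The first candidate I would try has three parts.
\begin{itemize}
\item Levels near the bottom and near the top are complete: $\cA_i=\binom{[n]}{i}$ for $i\le a$ and for $i\ge n-a$. These levels are tiny, so they cost little.
\item The middle levels are taken from a sparse, highly regular structure. For example, $\cA_i$ consists of the $i$-sets that are unions of blocks of a fixed partition of $[n]$, or the $i$-sets that meet every block of a partition in a prescribed number of elements. The latter can be viewed as a product/interleaving of small complete systems on the blocks, with a cap on the imbalance between blocks.
\item Regularity is the point. When every set in $\cA_i$ has the same number $d_i^{-}$ of subsets in $\cA_{i-1}$, we get
\[
c(\cA)=|\cA_n|\prod_{i=1}^{n} d_i^{-},
\]
and likewise with up-degrees. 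So $c(\cA)$ has a closed form.
\end{itemize}

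\textbf{Counting in general.} Without exact regularity, $c(\cA)$ is still computed by the obvious dynamic program: $g(\emptyset)=1$ and $g(S)=\sum_{x\in S,\,S\setminus\{x\}\in\cA} g(S\setminus\{x\})$. This runs in time $O(|\cA|\,n)$, and then $c(\cA)=g([n])$. The size $|\cA|$ is a sum of binomial or multinomial counts.

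\textbf{Optimization and verification.} The efficiency of the block-product family is a ratio of multinomials. I would write $\log\eta$ as a function of the normalized parameters: block sizes, the allowed imbalance profile along the chain, and the cutoff $a/n$. I would optimize this numerically, asymptotically via entropy (Stirling) estimates, and then choose a concrete finite instance. For that instance I would compute $c(\cA)$ and $|\cA|$ in exact integer arithmetic and certify $\eta(\cA)>1/3.1861$ directly from Definition~\ref{def:chaineff}.

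\textbf{Expected obstacle.} The main difficulty is finding a family whose efficiency actually beats the poset bound of Lemma~\ref{lem:construction} by this large a margin. If the block-product family stalls near $1/3.7$, I would next try randomized or algebraic middle levels. These are families in which each middle set has many neighbours in adjacent levels, chosen via a cyclic-shift rule similar to the difference set $\{0,1,3,6,10,15\}$ used in Lemma~\ref{lem:construction}. I would use a local search on small $n$, scoring candidates with the dynamic program above, to find the structure and then generalize it. A secondary issue is making $n$ small enough that the exact count is feasible, while not losing too much to the $n!$ and rounding in the exponent.
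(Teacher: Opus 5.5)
Your proposal is a search plan, not a proof. The lemma is a pure existence statement, so its whole content is one explicit family together with a certified computation of its efficiency. You never supply such a family. You list candidates, flag beating the poset bound $1/3.7492$ by this margin as ``the main difficulty'', and defer to local search or randomized/algebraic middle levels.

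The concrete candidates also point the wrong way. The union-of-blocks variant is essentially a tower of cubes (a bucket order), which is capped at $1/3.9271$. Letting membership depend only on the block profile $(|S\cap B_1|,\dots,|S\cap B_k|)$ is the right class. However, capping the \emph{imbalance} keeps $\cA$ exactly where the binomial counts are largest. For two halves $L,R$ of size $m$, such a band contains every $S$ with $|S\cap L|=|S\cap R|=m/2$. Hence $|\cA|\ge\binom{m}{m/2}^2=2^{n-O(\log n)}$, and since $c(\cA)\le n!$ always, $\eta(\cA)\le|\cA|^{-2/n}=(1+o(1))/4$.

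The missing idea is the opposite constraint: forbid both halves from being substantially filled at the same time. The paper takes $\cA_{m,\tau}=\{S : h(|S\cap L|/m)+h(|S\cap R|/m)\le\tau\}$ with $\tau$ slightly above $1$, a ``fuzzy'' two-bucket order. Then $|\cA_{m,\tau}|=\sum_{i,j}\mathbf{1}_{h(i/m)+h(j/m)\le\tau}\binom{m}{i}\binom{m}{j}=2^{\tau m+O(\log m)}$. Maximal chains are in bijection with triples consisting of a monotone lattice path from $(0,0)$ to $(m,m)$ staying inside the region, an ordering of $L$, and an ordering of $R$. So $c(\cA_{m,\tau})=(m!)^2S[m,m]$, where $S[i,0]=S[0,j]=1$, $S[i,j]=S[i-1,j]+S[i,j-1]$ inside the region, and $S[i,j]=0$ outside. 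With $m=50000$ and $\tau=1.032$ this gives $\eta\approx 1/3.1860$.

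This structure is also what makes verification possible. At $n=10^5$ one has $|\cA|\approx 2^{51600}$, so your $O(|\cA|\,n)$ dynamic program over sets is out of reach. Your regular-degree formula does not apply either, since the down-degree of $S$ depends on which of $(i-1,j)$ and $(i,j-1)$ lie in the region. The exact certificate comes instead from the double binomial sum and the $O(m^2)$ lattice-path recurrence.
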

By combining Lemma~\ref{lem:algo} and Lemma~\ref{lem:setSystemConstruction}, we obtain our main theorem:
\begin{theorem}[Main theorem]
Any permutation problem of bounded degree over an idempotent semi-ring can be solved using $S$ space and time $T$, for some $S$ and $T$ satisfying $S\cdot T \leq 3.1861^N$. 
\end{theorem}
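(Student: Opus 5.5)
The main theorem follows by composing Lemma~\ref{lem:algo} with Lemma~\ref{lem:setSystemConstruction}. The only care needed is in the exponential constant and in the regime $n = O(\sqrt{N})$ that Lemma~\ref{lem:algo} requires.

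First, I fix the set system $\cA \subseteq 2^{[n]}$ given by Lemma~\ref{lem:setSystemConstruction}, so that $\eta(\cA) > 1/3.1861$. Its universe size $n$ is a fixed constant, independent of $N$. In particular $n = O(\sqrt{N})$ holds trivially for all $N \ge 1$, and the hypothesis of Lemma~\ref{lem:algo} is met. Since $\eta(\cA)$ is a fixed real number strictly larger than $1/3.1861$, I write $\eta(\cA)^{-1} = 3.1861 - \varepsilon$ for some constant $\varepsilon > 0$.

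Second, I apply Lemma~\ref{lem:algo} to $\cA$. This gives an algorithm for any permutation problem of bounded degree over an idempotent semi-ring on instances of size $N$, running in space $S$ and time $T$ with
\[
S \cdot T = \eta(\cA)^{-N} N^{O(\sqrt{N})} = (3.1861-\varepsilon)^N \cdot 2^{O(\sqrt{N}\log N)}.
\]

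Third, I absorb the subexponential factor. Since $\sqrt{N}\log N = o(N)$, there is an $N_0$ such that for all $N \ge N_0$,
\[
2^{O(\sqrt{N}\log N)} \le \left(\frac{3.1861}{3.1861-\varepsilon}\right)^N,
\]
and hence $S\cdot T \le 3.1861^N$. For the finitely many $N < N_0$, the instance size is constant, so the problem is solved in constant time and space by brute force; this is consistent with reading the bound asymptotically in the usual $O^*$ sense.

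The substantive difficulty is not in this composition but in the two lemmas. The real obstacle is constructing the set system with efficiency above $1/3.1861$ in Lemma~\ref{lem:setSystemConstruction}, and showing in Lemma~\ref{lem:algo} that a cover of $S_n$ by relabeled copies of $\cA$ can be derandomized within an $N^{O(\sqrt N)}$ overhead. The only delicate point at this level is the strict inequality in Lemma~\ref{lem:setSystemConstruction}: it is exactly what provides the slack $\varepsilon$ needed to swallow the $N^{O(\sqrt N)}$ factor.
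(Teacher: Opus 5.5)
Your proposal is correct and follows the paper's route: the main theorem is obtained by combining Lemma~\ref{lem:algo} with Lemma~\ref{lem:setSystemConstruction}, applied to the fixed-size system $\cA_{m,\tau}$ with $m=50000$ and $\tau=1.032$. Your use of the slack between $\eta(\cA)^{-1}\approx 3.1860$ and $3.1861$ to absorb the $N^{O(\sqrt{N})}$ factor is the same point the paper makes in its remark after Lemma~\ref{lem:algo}, where rounding the constant down suppresses sub-exponential factors.
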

We also show various limitations of our frameworks. Most specifically, within the family of bipartite posets that are \emph{regular} (i.e. each vertex has the same number of incident arcs) we show the following:
\begin{restatable}{lemma}{bipreg}\label{lem:bipreg}
    If \(P\) is a regular bipartite poset, then $\eta(P)< 1/3.6$.
\end{restatable}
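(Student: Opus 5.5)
The plan is to pair an upper bound on $\lambda(P)$ with a lower bound on $\alpha(P)$, both depending only on $k$ (the size of each part) and the common degree $d$, and then show that $\lambda(P)/(\alpha(P)^2 n!) < 3.6^{-n}$ for every $d$.

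\textbf{Normalisation.} I read ``bipartite poset'' as a height-two poset. Its minimal elements form $X$, its maximal elements form $Y$, and every cover relation goes from $X$ to $Y$. Regularity with degree $d\ge 1$ forces $d|X| = d|Y|$, so $|X|=|Y|=k$ and $n=2k$. The degenerate case $d=0$ is an antichain, with $\lambda = n!$ and $\alpha = 2^n$, so $\eta = 1/4$. The two ends of the range serve as sanity checks.
\begin{itemize}
\item For $d=1$ the poset is a perfect matching. Then $\lambda = n!/2^k$ and $\alpha = 3^k$, so $\eta = 18^{-1/2} \approx 1/4.24$.
\item For $d=k$ the poset is complete bipartite. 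Then $\lambda = (k!)^2$ and $\alpha = 2^{k+1}-1$, so $\eta \approx 1/4$.
\end{itemize}
The dangerous regime is therefore intermediate $d$, as in the cyclic example of Lemma~\ref{lem:construction}. The final inequality must be checked uniformly in $d$ and in $d/k$.

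\textbf{Lower bound on antichains.} For $B\subseteq Y$, every set $B\cup A$ with $A\subseteq X\setminus N(B)$ is an antichain, where $N(B)$ denotes the set of lower covers of elements of $B$. Since $|N(B)|\le \min(k, d|B|)$, this gives
\[
\alpha(P)\;\ge\;\sum_{s=0}^{k}\binom{k}{s}2^{\,k-\min(k,ds)}.
\]
This is at least $2^{k+1}-1$, and it is much larger when $d$ is small relative to $k$. Symmetrically one may also start from subsets of $X$; I will keep whichever estimate is better.

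\textbf{Upper bound on linear extensions.} This is the step I expect to be the main obstacle. As a first tool, recall that a $d$-regular bipartite graph decomposes into $d$ perfect matchings (König). A linear extension must respect each matching, and a single matching already gives $\lambda \le n!/2^k$. Combined with $\alpha\ge 2^{k+1}$ this yields only $\eta\le 1/\sqrt 8$, which is far too weak, so I must use all $d$ lower covers of each $y$.

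My plan is to encode a linear extension as an $\{X,Y\}$-word of length $2k$, together with the order inside $X$ and the order inside $Y$. Suppose the element $y$ is placed after exactly $t$ elements of $X$, forming a prefix set $T$. Then $N(y)\subseteq T$ is required. Double counting edges between $T$ and the elements $y$ with $N(y)\subseteq T$ bounds the number of available choices for the next $y$ by roughly $t$ minus the number of $y$'s already placed. To make this bound sharp, I would refine it with an entropy (Shearer-type) argument: the event that a uniformly random permutation places $y$ after all of $N(y)$ has probability $1/(d+1)$. Shearer's lemma applied to the cover of $X\cup Y$ by the sets $\{y\}\cup N(y)$ (each $x$ is covered $d$ times) should give
\[
\frac{\lambda(P)}{n!} \;\lesssim\; (d+1)^{-k\cdot c(d)}
\]
for an explicit exponent $c(d)$. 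If that is too weak, I would try the prefix-counting bound directly.

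\textbf{Optimisation.} Finally I substitute both bounds into $\left(\lambda/(\alpha^2 n!)\right)^{1/n}$. For each $d$ I optimise over the ratio $d/k$ using entropy estimates of the binomial sum, and verify that the maximum over all $d$ stays below $1/3.6$. For small $d$ the antichain bound dominates, and for large $d$ the extension bound does. The genuinely delicate case is moderate $d$, such as $d\approx 6$. There I expect to need a numerical check of a one-variable function, and to need the sharper (entropy) form of the $\lambda$ bound rather than the matching bound.
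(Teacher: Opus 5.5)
There is a genuine gap, and it is the step you flagged yourself: you never obtain an upper bound on $\lambda(P)$, and the Shearer step as described does not produce one. The sets $\{y\}\cup N(y)$ cover each $y$ only once (and each $x$ exactly $d$ times). So Shearer's lemma applies only with multiplicity $1$, i.e.\ as plain subadditivity, which counts the $X$-coordinates $d$ times and gives nothing useful. The per-$y$ probability $1/(d+1)$ also cannot be multiplied over $y$. For $K_{k,k}$ that would give $\lambda/n!\le (k+1)^{-k}$, while in fact $\lambda/n!=\binom{2k}{k}^{-1}\ge 4^{-k}$. Your exponent $c(d)$ is therefore the entire content of the step, and it is left undetermined. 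The paper does not prove this step either; it imports the entropy bound of Brightwell and Tetali~\cite{Brightwell2023},
\[
\lambda(P)\le n!\binom{2d}{d}^{-n/(2d)},
\]
which is attained by disjoint copies of $K_{d,d}$. It is therefore the best possible bound in terms of $n$ and $d$ alone, and you need a result of exactly this strength, cited or reproved.

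Even with this optimal bound, your argument would still fail, because your antichain estimate is too weak. Your identity $\alpha(P)=\sum_{B\subseteq Y}2^{k-|N(B)|}$ is exact. But the pointwise bound $|N(B)|\le d|B|$ gives at most $\sum_s\binom{k}{s}\bigl(2^{k-ds}+1\bigr)\le 2\bigl(2(1+2^{-d})\bigr)^k$. So your two bounds together certify at most $1/\eta(P)\ge 2^{1/k}\cdot 2(1+2^{-d})\binom{2d}{d}^{1/(2d)}$. For large $k$ this is about $3.585$ at $d=5$ and $3.588$ at $d=6$, both below $3.6$. Since the $\lambda$-side is already tight, the fix must come on the $\alpha$-side.

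The paper averages over all $B$ with $|B|=i$. It uses $\sum_{|B|=i}|X\setminus N(B)|=k\binom{k-d}{i}$, since each $x$ is avoided by $\binom{k-d}{i}$ such $B$. Convexity of $2^x$ then gives $\alpha(P)\ge\sum_i\binom{k}{i}2^{k\binom{k-d}{i}/\binom{k}{i}}$. This replaces your exponent $k\bigl(h(q)+\max\{0,1-dq\}\bigr)$ by $k\bigl(h(q)+(1-q)^d\bigr)$.

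The paper also uses the invariance $\eta(P^c)=\eta(P)$, so one may let $k\to\infty$ with $d$ fixed, and no optimisation over $d/k$ is needed. The lemma then reduces to $\max_q 2^{h(q)+q^d}\binom{2d}{d}^{1/(2d)}>3.6$ for each $d$. This is checked with $q=1/2$ for $d\ge 8$, with $q=7/8$ for $d\le 5$, and numerically for $d=6,7$. At $d=6$ the margin is only about $0.002$, which is why the union bound $|N(B)|\le d|B|$ cannot suffice.
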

This lemma is obtained by combining an entropy-based upper bound of the number of linear extensions by Brightwell and Tetali~\cite{Brightwell2023} with a fairly direct lower bound on the number of ideals. Since Brightwell and Tetali were able to extend their bound to (specific, but non-trivial) non-bipartite posets, we believe our method may also be useful for obtaining strong upper bounds on the efficiency of general posets.

More generally, we show that within our most general setting of arbitrary set systems, it is not possible to get an algorithm with $S \cdot T < 3.015^{N}$:

\begin{restatable}{lemma}{imprb}\label{lem:imprb}
    If $\cA$ is a set system, then $\eta(\cA) < 1/3.015$.
\end{restatable}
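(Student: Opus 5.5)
The plan is to bound $c(\cA)$ by splitting every maximal chain at a middle level and then inducting on $n$. Write $\cA_k := \{A\in\cA : |A|=k\}$ and $a_k := |\cA_k|$. For $S\in\cA$ put $\cA^{\downarrow S}:=\{T\in\cA: T\subseteq S\}$, a set system on ground set $S$, and $\cA^{\uparrow S}:=\{T\setminus S : T\in\cA,\ T\supseteq S\}$, a set system on $[n]\setminus S$. A maximal chain of $\cA$ is exactly a permutation $\sigma$ all of whose prefix sets $\{\sigma_1,\dots,\sigma_j\}$ lie in $\cA$. Every such chain passes through exactly one set of $\cA_k$ for each $k$. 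Hence, for every $k$,
\[
 c(\cA)=\sum_{S\in\cA_k} c\bigl(\cA^{\downarrow S}\bigr)\, c\bigl(\cA^{\uparrow S}\bigr). \qquad (\ast)
\]
Two trivial estimates come with this. First, $c(\cA)\le n!\cdot a_k/\binom nk$ for every $k$, since the $k$-prefix of the chain must land in $\cA_k$. Second, $|\cA|\ge \sum_k a_k$.

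Using only these gives ratio roughly $4^{-n}$, so the whole gain must come from exploiting $(\ast)$ recursively. Define
\[
 g(n):=\max_{\cA\subseteq 2^{[n]}} \frac{c(\cA)}{n!\,|\cA|^2}.
\]
The goal is a recursive inequality for $g$. Apply $(\ast)$ with $k=\lfloor n/2\rfloor$. By induction, each factor satisfies
\[
 c(\cA^{\downarrow S})\le g(k)\,k!\,|\cA^{\downarrow S}|^2, \qquad c(\cA^{\uparrow S})\le g(n-k)\,(n-k)!\,|\cA^{\uparrow S}|^2 .
\]
The key is then to control $\sum_{S\in\cA_k}|\cA^{\downarrow S}|^2|\cA^{\uparrow S}|^2$ in terms of $|\cA|^2$. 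Two facts are available for this:
\begin{itemize}
\item each $T\in\cA$ with $|T|=j<k$ lies in $\cA^{\downarrow S}$ for at most $\binom{n-j}{k-j}$ sets $S\in\cA_k$, and symmetrically from above;
\item $|\cA^{\downarrow S}|\le 2^k$ and $|\cA^{\uparrow S}|\le 2^{n-k}$.
\end{itemize}
I would parametrize the contribution of each level by the densities $t_j:=a_j/\binom nj$ and turn the resulting bound into an optimization problem over these densities. To make the recursion tight, split at several levels rather than one. This yields a bound of the form $g(n)\le \max_{\text{profiles}}F(\text{profile})$ that is multiplicative in the sub-blocks.

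Since the target constant $3.015$ is not a natural closed form, I expect the final step to be a finite computation. I would prove the statement for a block length $n_0$ by a computer-verified optimization over (coarsened) level profiles, using Kruskal--Katona/LYM-type constraints between consecutive levels. Those constraints express that a $k$-set in $\cA$ carrying many chains forces many of its $(k-1)$-subsets into $\cA$. I would then bootstrap to all $n$ via the submultiplicativity that $(\ast)$ provides.

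The main obstacle is the step where the sum over $S\in\cA_k$ of the products $|\cA^{\downarrow S}|^2|\cA^{\uparrow S}|^2$ has to be compared with $|\cA|^2$. The lower and upper families of different $S$ overlap heavily, and a naive Cauchy--Schwarz bound loses exactly the factor that separates $4$ from $3.015$. I would first try weighting each $T\in\cA$ by the fraction of chains through it, which is a fractional/entropy-style accounting, so that the overlaps are charged only once. If that is not tight enough, I would fall back to an LP relaxation over the level densities and solve it numerically to certify the constant $3.015$.
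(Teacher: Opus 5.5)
Your proposal is a plan, not a proof. The step you flag as the main obstacle, bounding $\sum_{S\in\cA_k}|\cA^{\downarrow S}|^2|\cA^{\uparrow S}|^2$ against $\binom{n}{k}|\cA|^2$, is the whole content of the lemma. As written, nothing in the proposal even reaches $\eta(\cA)\le 1/3$. (Your two trivial estimates give only $c(\cA)/(n!\,|\cA|^2)\lesssim 2^{-n}$, i.e.\ $\eta\le 1/2$. An upper bound of order $4^{-n}$ is impossible, since bucket orders already exceed it. The easy $1/3$ comes from encoding a chain by $A_{n/3},A_{2n/3}\in\cA$ together with three permutations of $n/3$ elements.)

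More seriously, your proposed finish cannot work. Products give $g(k+m)\ge g(k)g(m)$, so $g(n)^{1/n}$ converges to its supremum. A computer bound on $g(n_0)$, or on a relaxation of it, at a fixed block length therefore says nothing about large $n$ unless you have a genuine reverse inequality. But $(\ast)$ with the inductive hypothesis can at best give $g(n)\le G_n\, g(k)g(n-k)$ with $G_n>1$. Indeed, if $g(n)\le g(\lfloor n/2\rfloor)g(\lceil n/2\rceil)$ held for all $n$, induction would give $g(n)\le g(1)^n=4^{-n}$, contradicting the $26$-element bucket order with $\eta\approx 1/3.9271$. To bootstrap from $n_0$ you would need $G_n$ polynomial in $n$, so that the losses summed over the recursion tree stay $O(n\log n_0/n_0)$. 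That amounts to saying optimal systems are products up to polynomial factors, a strong structural statement that $(\ast)$ does not supply and that you give no route to.

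The paper does not recurse at all. It cuts each chain at the three levels $n/3$, $n/2$, $2n/3$, giving $c(\cA)\le|\cT|\,((n/3)!)^2((n/6)!)^2$, where $\cT$ is the set of tuples $(W,X,Y,Z)$ with $W,\ W\cup X,\ W\cup X\cup Y\in\cA$. Compared with the two-cut bound this saves a factor $\binom{n/3}{n/6}\approx 2^{n/3}$. That is exactly what the middle set costs if $|\cT|$ is bounded naively by $|\cA|^2 2^{n/3}$.

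The gain is then extracted isoperimetrically:
\begin{itemize}
\item If a given $(W,Z)$ has fewer than $2^{h(\delta)n/3+1}$ admissible middle splits, these tuples are counted directly.
\item Otherwise, the Frankl--F\"uredi form of Harper's theorem gives, for a large fraction of the $X$, a set $\mu(X)$ (the $Y$-part of another admissible split) with $|X\triangle\mu(X)|\le(1-2\delta)n/3$.
\item The triple $(W\cup X,\ \mu(X)\cup Z,\ X\triangle\mu(X))$ then re-encodes the tuple by two members of $\cA$ (the second via its complement $W\cup X'$) plus one small set.
\end{itemize}
Balancing $h(\delta)/3$ against $h((1-2\delta)/3)$ at $\delta\approx 0.41069$ yields $\eta(\cA)\le 2^{0.326}(1/3)^{2/3}(1/6)^{1/3}<1/3.015$.

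This trade of middle-level freedom against a second member of $\cA$, via an isoperimetric inequality, is the missing idea. Your weighting and LP suggestions do not identify any mechanism of that kind.
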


To prove Lemma~\ref{lem:imprb}, we first provide an easier $\eta(\cA) \leq 1/3$ upper bound, which can be obtained by encoding the maximal chain $A_0 \subset A_1 \subset \ldots \subset A_n$ with $A_{n/3},A_{2n/3}$ and $3$ permutations of $S_{n/3}$. See Lemma~\ref{lem:basicUpperbound} for details. To improve this bound, we observe that, if instead we aim to encode the maximal chain with $A_{n/3},A_{n/2},A_{2n/3}$ and $2$ permutations of $S_{n/3}$ and $2$ permutations of $S_{n/6}$ then this can be done with $A_{n/2}$ and $B := A_{n/3} \cup ([n] \setminus A_{2n/3})$. Hence, if $B \in \cA$ as well this leads to an improvement. We use an isoperimetric inequality (Theorem~\ref{thm:ff} by Frankl and F\"uredi~\cite{FranklFuredi1981}) to show that often there is such a $B$ that is close to a set in $\cA$. We believe our proof strategy has the potential for even higher lower bounds and it also supports the natural line of attack of finding efficient set systems via isoperimetric inequalities.

\subsection*{Related Work}
\paragraph{Ideals versus Linear Extensions.}
As mentioned, there appear to be only very few papers that study the trade-off between the number of ideals (or equivalently, anti-chains) and the number of linear extensions of a poset. Kahn and Kim~\cite{kahn1992entropy} relate the number of linear extensions of a poset $P$ with the graph entropy of the comparability graph (which is about the entropy of a distribution over anti-chains of $P$).

\paragraph{Set Families with Many (Maximal) Chains.}
There seems to be more literature about the maximum number of chains in set systems. In particular, a research line started by Alon and Frankl~\cite{AlonF85} (see also e.g.~\cite{hunter2024disjoint}) considers the maximum number of chains of bounded length in set systems.

As suggested also by Johnson, Leader and Russell~\cite{JohnsonLR15}, Open Problem~\ref{opposet} seems similar to related problems amenable to compression techniques, and indeed they observe that one can assume that the set family $\cA$ is left-compressed. 

\paragraph{Counting Linear Extensions.}
For the computational part of the paper, we need to compute the number of linear extensions and ideals of fixed posets. Both these problems are $\#P$-complete, even on posets of height $2$ (see~\cite{DittmerP20} for a proof of $\#P$-hardness, and counting ideals of bipartite posets coincides with counting independent sets in bipartite graphs which is well-known to be $\#P$-complete). The problem of counting linear extensions of an $n$-element poset can be solved in $O^*(2^n)$ time~\cite{KangasHNK16} or in $n^{O(t)}$ time if $t$ is the treewidth of the graph. There are also polynomial time approximation schemes and algorithms that run well on many instances, see e.g.~\cite{TalvitieK24} for more details.

\paragraph{Note on Parallel Work.} We recently learned that an improved space-time tradeoff for TSP has
also been obtained independently and concurrently in~\cite{Dallant26}. Their work appears on arXiv simultaneously with ours and achieved $S\cdot T \leq 3.572^N$. We first achieved $S\cdot T\leq 3.7493^N$ in our first arXiv version and, after reading~\cite{Dallant26}, improved our bound to $S\cdot T \leq 3.1861^N$, written in Lemma~\ref{lem:setSystemConstruction}. We also learned from the authors of~\cite{Dallant26} that they recently found bounds that are optimal in some precise sense.

\subsection*{Organization}
In Section~\ref{sec:algo} we prove Lemma~\ref{lem:algo}, in Section~\ref{sec:con} we prove Lemma~\ref{lem:construction} and disprove the conjectures of~\cite{JohnsonLR15}. In Section~\ref{sec:upp} we provide a barrier beyond which our paradigm cannot improve. In Section~\ref{sec:conc} we provide concluding remarks and avenues for further research.
In Appendix~\ref{sec:gs} we provide the details for the (folklore) $O^*(4^N)$ time polynomial space algorithm for \TSP. In Appendix~\ref{sec:MissingProofs} we provide postponed technical proofs towards Lemma~\ref{lem:bipreg}.
In Appendix~\ref{sec:AppendixImplementation} we provide more details on the implementations of our computational results.

\section{From Chain Efficiency to Space-Time Tradeoffs}
\label{sec:algo}
In this section, we show that a set system with high chain efficiency implies a space and time efficient algorithm for \TSP and other permutation problems, generalizing the bucket order-based observation from~\cite{KoivistoP10}:

\lemAlgo*

We first observe that a set system on a small universe can be extended to one on a larger universe. For a set system $\cA$, we let $\cA^{k}$ denote the $k$-th Cartesian power of $\cA$ (i.e. $\cA^k=\underbrace{\cA \times \ldots\times\cA}_{k\text{ times}}$)\vspace{-0.9em}, and prove the following:
\begin{lemma}\label{lem:power_efficiency}
    Let $\cA$ be a set system, then $\eta(\cA^{k}) = \eta(\cA)$.
\end{lemma}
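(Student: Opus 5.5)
We identify $\cA^k$ with a set system on the disjoint union of $k$ copies of $[n]$, i.e.\ on a universe of size $kn$. A tuple $(A^{(1)},\ldots,A^{(k)})$ corresponds to the set $\bigcup_i \{i\}\times A^{(i)}$. Then $|\cA^k| = |\cA|^k$ immediately, and the universe size becomes $kn$. By Definition~\ref{def:chaineff}, it therefore suffices to show
\[
c(\cA^k) = \binom{kn}{n,\ldots,n}\, c(\cA)^k = \frac{(kn)!}{(n!)^k}\, c(\cA)^k .
\]
Granting this,
\[
\eta(\cA^k)^{kn} = \frac{(kn)!\,c(\cA)^k/(n!)^k}{|\cA|^{2k}\,(kn)!} = \left(\frac{c(\cA)}{|\cA|^2\, n!}\right)^{k} = \eta(\cA)^{kn},
\]
and taking $kn$-th roots gives the claim.

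To prove the chain count, take a maximal chain $B_0\subset B_1\subset\cdots\subset B_{kn}$ in $\cA^k$. Each step adds exactly one element, which lies in a single copy $i\in[k]$. Consequently the projection onto copy $i$ changes exactly in that step and stays the same in all other steps.

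Reading off which copy is touched at each step gives a word in $[k]^{kn}$ in which each letter appears exactly $n$ times, since copy $i$ must grow from $\emptyset$ to $[n]$. Recording the successive distinct projections onto copy $i$ gives a maximal chain of $\cA$: each of those intermediate sets must lie in $\cA$ because $B_j\in\cA^k$.

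Conversely, given any such word and any $k$ maximal chains of $\cA$, we interleave them according to the word. Every intermediate tuple has all its coordinates in $\cA$, so it lies in $\cA^k$, and consecutive tuples differ by exactly one element. Hence we obtain a maximal chain of $\cA^k$.

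These two maps are mutually inverse, which yields the product formula. The only step requiring care is this bijection, specifically checking that the interleaving stays inside $\cA^k$ and that a maximal chain adds exactly one element per step. Both follow directly from the Cartesian-product structure and from the definition of a maximal chain, in which the chain has length equal to the universe size.
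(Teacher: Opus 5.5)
Your proposal is correct and follows essentially the same route as the paper: both establish $c(\cA^k)=c(\cA)^k\cdot\frac{(kn)!}{(n!)^k}$ by encoding a maximal chain of $\cA^k$ as $k$ maximal chains of $\cA$ together with the sequence recording which component grows at each step, and then simplify the efficiency expression. You spell out the bijection, and the identification of $\cA^k$ with a set system on a universe of size $kn$, in more detail than the paper does.
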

\begin{proof}
    By the definition of Cartesian product, we have $|\cA^{k}| = |\cA|^k$. 
    We also note that any maximal chain of $\cA^{k}$ can be described by a tuple formed by the $k$ maximal chains of $k$ copies of $\cA$ and a description of the index of the component from which the next element is taken at each step of the maximal chain.
    Thus, this gives $c(\cA^{k}) = c(\cA)^{k}\cdot \frac{(kn)!}{(n!)^k}$. 
    Then,
    \[\eta(\cA^{k}) = \left(\frac{c(\cA^{k})}{|\cA^{k}|^{2}\cdot (kn)!}\right)^{1/(kn)} = \left(\frac{c(\cA)^{k} \cdot (kn)!}{|\cA|^{2k} \cdot (kn)! \cdot (n!)^k}\right)^{1/(kn)} = \left(\frac{c(\cA)}{|\cA|^{2}\cdot n!}\right)^{1/n} = \eta(\cA).\]
\end{proof}
Furthermore, to obtain a deterministic algorithm, we use the following lemma that is a straightforward application of the probabilistic method:
\begin{lemma}\label{lem:det}
    Let $\cA \subseteq 2^{[n]}$.
	There exists a family $\mathcal{F}$ of $\frac{n!}{c(\cA)}\cdot \poly(n)$ permutations of $[n]$ such that for every permutation $\pi$ of $[n]$ there exists $\pi' \in \mathcal{F}$ such that $\pi' \cdot \pi$ corresponds to a maximal chain of $\cA$, where $(\pi' \cdot \pi)(i) = \pi'(\pi(i))$. Such a family can be found in $\poly(n!)$ time.
\end{lemma}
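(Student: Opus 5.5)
We will prove Lemma~\ref{lem:det} by the probabilistic method and then derandomize it with the greedy set-cover algorithm. Throughout, identify a permutation $\sigma$ of $[n]$ with the maximal chain $\emptyset \subset \{\sigma(1)\} \subset \{\sigma(1),\sigma(2)\} \subset \cdots \subset [n]$ in $2^{[n]}$. Let $C \subseteq S_n$ be the set of permutations whose associated chain lies entirely in $\cA$, so $|C| = c(\cA)$. We may assume $c(\cA) \geq 1$, since otherwise the statement is vacuous.

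For a fixed $\pi$, the set of $\pi'$ with $\pi' \cdot \pi \in C$ is $\{\tau \pi^{-1} : \tau \in C\}$. This is a translate of $C$ by a group element, so it has exactly $c(\cA)$ elements. Hence a uniformly random $\pi' \in S_n$ covers $\pi$ with probability exactly $p := c(\cA)/n!$.

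Now draw $t := \lceil \frac{n!}{c(\cA)} \cdot 2 n \ln n \rceil$ permutations independently and uniformly at random. For a fixed $\pi$, the probability that none of them covers $\pi$ is $(1-p)^t \leq e^{-pt} \leq n^{-2n}$. A union bound over the $n! \leq n^n$ choices of $\pi$ gives a failure probability below $1$ for $n \geq 2$. So a family $\cF$ of size $\frac{n!}{c(\cA)} \cdot O(n \log n) = \frac{n!}{c(\cA)} \cdot \poly(n)$ exists.

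For the constructive part, view the problem as set cover. The universe is $S_n$, and each candidate $\pi'$ corresponds to the set $U_{\pi'} := \{\pi : \pi'\pi \in C\} = \pi'^{-1} C$. Each such set has size $c(\cA)$, and every $\pi$ lies in exactly $c(\cA)$ of these sets. Averaging therefore shows that if $R$ elements are still uncovered, some $\pi'$ covers at least $R \cdot c(\cA)/n!$ of them. Consequently the greedy algorithm, which repeatedly picks the $\pi'$ covering the most uncovered permutations, shrinks $R$ by a factor $(1-p)$ per step. It finishes within $\frac{n!}{c(\cA)} \ln(n!) + 1 = \frac{n!}{c(\cA)} \cdot O(n\log n)$ steps.

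Each greedy step scans all $n!$ candidates. For each candidate it counts the uncovered $\pi$ by checking, for every $\tau \in C$, whether $\pi'^{-1}\tau$ is uncovered. This costs $O((n!)^2 \poly(n))$ per step, so the total running time is $\poly(n!)$. Computing $C$ itself takes $n! \cdot \poly(n) \cdot |\cA|$ time, and $|\cA| \leq 2^n \leq n!$ for $n \geq 4$, so this is also within $\poly(n!)$.

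The only slightly delicate point is the averaging argument that each greedy step makes a $(1-p)$-factor of progress. It rests on the regularity of the incidence structure: every $\pi$ is covered by exactly $c(\cA)$ choices of $\pi'$. That regularity comes from the group-translate structure established at the start, and once it is in place the remaining steps are routine.
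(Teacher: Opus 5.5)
Your proposal is correct and follows essentially the same route as the paper: a probabilistic existence argument over random $\pi'$, then the greedy set-cover algorithm on the universe $S_n$ with sets $\{\pi : \pi'\cdot\pi \in C\}$. The only difference is how you analyze greedy: you use a direct averaging argument based on regularity (every $\pi$ lies in exactly $c(\cA)$ of the sets), whereas the paper combines the probabilistic bound on the optimal cover with the generic $H(c(\cA))$ approximation ratio; your version makes the probabilistic step unnecessary and gives an $O(n\log n)$ overhead instead of the paper's $(n\ln n)^2$.
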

\begin{proof}
	We use the probabilistic method. If $\pi'$ is a uniformly randomly chosen permutation, then for any permutation $\pi$ the permutation $\pi'\cdot \pi$ is also a uniformly random permutation. Hence, if we construct $\mathcal{F}$ as $\ell:=\frac{n!}{c(\cA)}n \ln n$ uniformly and independently chosen random permutations of $[n]$, then
	\[
	\Pr[\forall \pi' \in \cF : \pi' \cdot \pi \text{ does not correspond to a maximal chain of }\cA] = \left(1 - \frac{c(\cA)}{n!} \right)^\ell \leq \mathrm{e}^{-\ell \frac{c(\cA)}{n!}} \leq n^{-n},
	\]
    where we use the standard inequality $1+x \leq \mathrm{e}^x$ in the first inequality.
	Taking a union bound over the at most $n! < n^{n}$ permutations $\pi$, we see that with probability less than one there exists a $\pi$ such that for all $\pi' \in \cF$ we have that $\pi' \cdot \pi$ does not correspond to a maximal chain of $\cA$.
    In particular, this implies that with positive probability, for every $\pi$ there is a permutation $\pi' \in \cF$ such that $\pi' \cdot \pi$ is represented by a maximal chain of $\cA$.
    Hence, there exists an $\cF$ with this property.
	
    We note that constructing such a family $\cF$ is equivalent to the set cover problem with the universe being $S_n$ and for each $\pi' \in S_n$ a set 
    \[
        S(\pi'):=  \{\pi \in S_n: \pi'\cdot \pi \text{ corresponds to a maximal chain of } \cA\} \subseteq S_n.
    \]
    We already established above that the optimal family of permutations that covers the universe $S_n$ has a size at most $\frac{n!}{c(\cA)}n\ln n$. 
    Thus, a classical greedy algorithm~\cite{Johnson73} with approximation ratio $H(c(\cA)) \le H(n!) \le n \ln n$ where $H(d) = \sum_{i=1}^{d}\frac{1}{i}$ yields a family of permutations of size $\frac{n!}{c(\cA)}\cdot(n\ln n)^2$ and, if naively implemented, runs in time $O((n!)^3\cdot n) = \poly(n!)$.   
\end{proof}
The main purpose of the above lemma is to show that a single set system together with a permutation family of size $\frac{n!}{c(\cA)}\cdot \poly(n)$ is enough to cover all permutations $S_n$. Equipped with this lemma we are able to prove Lemma~\ref{lem:algo}, which we first recall for convenience:

\lemAlgo*
\begin{proof}
    Let $f$, $f_1,\ldots,f_N$ be the function defining the permutation problem that needs to be solved, and let it be of degree $d=O(1)$.
	Let $g$ be a parameter to be set later, let $s = \lceil N/(g\cdot n) \rceil$ and let $N'=g\cdot n\cdot s$. We treat the permutation problem $f$, $f_1,\ldots,f_N$ as a permutation problem over $N'$ elements by defining the function $f_j(\{\sigma_1,\ldots,\sigma_j\},\sigma_{j-d+1},\ldots,\sigma_{j-1},\sigma_j)$ to be $0$ if $\sigma_j\neq j$ for all $j > N$ (hence enforcing that only permutations $\sigma$ satisfying $\sigma_j=j$ for $j > N$ can contribute to the end result).
    
    We divide $[N']$ in groups $V_1,\ldots,V_s$ of size $g\cdot n$ each.	
    For each $i \in [s]$, let $\cA_{i} \subseteq 2^{V_{i}}$ be an arbitrary set system such that $\cA_{i}$ is isomorphic to $\cA^{g}$.\footnote{Set systems $\cA\subseteq 2^U$ and $\cB\subseteq 2^{U'}$ are \emph{isomorphic} if there is a bijection $\varphi: U \rightarrow U'$ such that $A \in \cA$ if and only if $\{\varphi(a):a \in A\} \in\cB$.} 
    For each $i=1,\ldots,s$, apply Lemma~\ref{lem:det} to $\cA_{i}$ to get a permutation family $\cF_i$ of size at most $\frac{(g\cdot n)!}{c(\cA^{g})}\cdot\poly(g\cdot n)$.
	
	Fix $(\pi'_1,\ldots,\pi'_s)\in \cF_1 \times \cdots \times \cF_s$ and define $\cA^{*}$ to be $\cA_{1} \times \cA_{2} \times \cdots \times \cA_{s}$ and $\pi'=P(\pi'_1,\ldots,\pi'_s) \in S_{N'}$ to be the permutation that maps each $a \in V_i$ to $\pi'_i(a)$.
    Letting $C(\cA^*)$ denote the set of permutations corresponding to maximal chains of $\cA^*$ we also define
    \begin{equation}\label{eq:subrout}
        f_{\pi'} := \bigoplus_{\pi'\cdot \sigma \in C(\cA^*)} f(\sigma) = \bigoplus_{\pi'\cdot \sigma \in C(\cA^*)} \bigotimes_{j=1}^{N'} f_j(\{\sigma_1,\ldots,\sigma_j\},\sigma_{j-d+1},\ldots,\sigma_{j-1},\sigma_j).
    \end{equation}
    We have that $\cup_{(\pi'_1,\ldots,\pi'_s) \in \cF_1,\ldots,\cF_s}\{ \sigma \in S_{N'}: P((\pi'_1,\ldots,\pi'_s))\cdot \sigma \in C(\cA^*)\}=S_{N'}$ by the construction of $\cF_1,\ldots,\cF_s$ and $\cA_i$ being isomorphic to $\cA^g$. Since the $\oplus$ operation is idempotent, we therefore have
    \[
    \bigoplus_{\sigma \in S_{N'}} f(\sigma) = \bigoplus_{(\pi'_1,\ldots,\pi'_s)\in \cF_1 \times \cdots \times \cF_s} f_{P(\pi'_1,\ldots,\pi'_s)}.
    \]
    Hence, we can focus on evaluating~\eqref{eq:subrout} for each $\pi'$, since afterwards we can sum all outcomes with a $\left(\frac{(g\cdot n)!}{c(\cA^g)}\cdot \poly(g\cdot n)\right)^s$ multiplicative factor in the running time.

    We do this with a small variant of the relatively standard dynamic programming algorithm (which also occurred in~\cite{KoivistoP10}).
    We will use the notation $\pi'(X)=\{\pi'(e) : e\in X\}$ to check whether a set $X$ is in the set system $\cA$ after having applied the permutation $\pi'$ on it.
    For all $X \subseteq [N]$ such that $\pi'(X) \in \cA$ and $x_{\max\{1,|X|-d+1\}},\ldots,x_{|X|} \in X$ define 
    \[
    \begin{aligned}
        g(X,x_{\max\{1,|X|-d+1\}},\ldots,x_{|X|}) = \bigoplus_{\sigma} \bigotimes_{j=1}^{|X|}f_{j}(\{\sigma_1,\ldots,\sigma_j\},\sigma_{\max\{1,j-d+1\}},\ldots,\sigma_{j-1},\sigma_j),
    \end{aligned}
    \]  
    where the $\oplus$ runs over all permutations $\sigma$ of $X$ with $\sigma_i=x_i$ for all $i=\max\{1,|X|-d+1\},\ldots,|X|$ such that $\pi'(\{\sigma_1,\ldots,\sigma_j\}) \in \cA$ for every $j=1,\ldots,|X|$. By straightforward evaluation, entries $g(X,x_{\max\{1,|X|-d+1\}},\ldots,x_{|X|})$ satisfying $|X| \leq d$ can be evaluated in polynomial time since $d=O(1)$. Moreover, for $|X| > d$, we have the following recurrence that is easy to verify:
    If $\pi'(X \setminus x_{|X|}) \in \cA$, then $g(X,x_{|X|-d},\ldots,x_{|X|})$ is equal to
    \[
        \bigoplus_{x_{|X|-d-1} \in X \setminus \{ x_{|X|-d},\ldots,x_{|X|}\} } g(X \setminus x_{|X|},x_{|X|-d-1},\ldots,x_{|X|-1}) \otimes f_{|X|}(X,x_{|X|-d},\ldots,x_{|X|}),
    \]
    and if $\pi'(X \setminus x_{|X|}) \notin \cA$, then $g(X,x_{|X|-d},\ldots,x_{|X|})=0$. Hence, equipped with this recurrence, we can compute all table entries $g$, and in particular 
    \[
        f_{\pi'} := \bigoplus_{\text{distinct }x_1,\ldots,x_d\in [N']}g([N'],x_1,\ldots,x_d)
    \]
    in time $O^*(|\cA^*|)=O^*(|\cA|^{g\cdot s})$.
    The runtime of invoking Lemma~\ref{lem:det} is $\poly((g\cdot n)!)$ and hence the running time of this algorithm is
	\[
		\poly((g\cdot n)!)+\prod_{i \in [s]} |\cF_i| \cdot |\cA^{*}| \le \poly((g\cdot n)!)\cdot\left(\frac{(g\cdot n)!}{c(\cA^{g})}\cdot\poly(g\cdot n)\right)^s |\cA|^{g \cdot s}.
	\]
	The space usage is $|\cA|^{g \cdot s}+\poly((g\cdot n)!)$. Hence, our space-time product is at most
	\[
	\begin{aligned}
		&\poly((g\cdot n)!)\cdot\left(\frac{(g\cdot n)!}{c(\cA^{g})}\right)^s\cdot |\cA|^{2\cdot g \cdot s}\cdot (g\cdot n)^{O(s)}\\
        \leq&\left(\frac{|\cA^{g}|^2(g\cdot n)!}{c(\cA^{g})}\right)^s \cdot (g\cdot n)^{O(s+g\cdot n)}\\
		=&\eta(\cA^{g})^{-g\cdot s\cdot n}\cdot (g\cdot n)^{O(s+g\cdot n)}\\
		\leq &\eta(\cA)^{-(N+g\cdot n)} \cdot(g\cdot n)^{O(N/(g \cdot n)+g\cdot n)}.
	\end{aligned}
	\]
    For the last inequality, we use Lemma~\ref{lem:power_efficiency} to conclude that $\eta(\cA^g)=\eta(\cA)$, that $N' \leq N + g\cdot n$ and that $s = O(N/ (g\cdot n))$.
    Setting $g = \sqrt{N}/n$, we obtain that the space-time product is at most
    $\eta(\cA)^{-N}N^{O(\sqrt{N})}$.
\end{proof}

\section{Construction of Efficient Posets and Set Systems}
\label{sec:con}
In this section, we show our construction of set systems with high efficiency.
In Subsection~\ref{subsec:impkoiv} we present a simple construction with better efficiency and an analytical analysis; in Subsection~\ref{subsec:best}, we describe our best construction; and in Subsection~\ref{subsec:counter}, we observe that the bucket order construction from~\cite{KoivistoP10} was, coincidentally, conjectured by~\cite{JohnsonLR15} to be optimal from the perspective of set systems.
Thus we use our construction to disprove the conjecture of~\cite{JohnsonLR15}.

\subsection{A Simple Improvement over~\cite{KoivistoP10}\label{subsec:impkoiv}}
We start with the optimal construction from~\cite{KoivistoP10}, the complete bipartite poset (``bucket order'') with $13$ vertices on both sides, which has efficiency $1/3.9271$ and show that by removing a perfect matching from it, the efficiency of the poset is further improved.

\begin{lemma}\label{lem:impkoiv}
    Let $P_{n}$ be a bipartite poset with parts $X =\{x_{0},\ldots,x_{n-1}\}$ and $Y = \{y_0,\ldots,y_{n-1}\}$ such that $(x_{i}, y_{j}) \in P_{n}$ if and only if $i\neq j$. Then $\alpha(P_{n}) = 2^{n+1}+n-1$ and $\lambda(P_{n}) = (n-1)!\cdot n! \cdot (n+1)$. In particular, $\eta(P_{13}) > 1/3.9162.$
\end{lemma}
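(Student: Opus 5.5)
The plan is to prove the two counting formulas directly from the structure of $P_n$ and then substitute $n=13$ into the definition of the efficiency of a poset. In $P_n$ every $x_i$ is minimal, every $y_j$ is maximal, and $y_j$ lies above exactly the $x_i$ with $i\neq j$.

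\emph{Ideals.} I will classify an ideal $I$ by the set $T = I \cap Y$.
\begin{itemize}
\item If $T=\emptyset$, then $I$ is an arbitrary subset of $X$. This gives $2^n$ ideals.
\item If $T=\{y_j\}$, then $I$ must contain $X\setminus\{x_j\}$, and $x_j$ is optional. This gives $2$ ideals for each $j$, so $2n$ in total.
\item If $|T|\ge 2$, say $y_j,y_k\in T$ with $j\neq k$, then $I$ contains all $x_i$ with $i\ne j$ and all $x_i$ with $i\ne k$, hence all of $X$. Conversely, $X\cup T$ is an ideal for every such $T$. This gives $2^n-1-n$ ideals.
\end{itemize}
Summing yields $\alpha(P_n)=2^n+2n+2^n-1-n=2^{n+1}+n-1$.

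\emph{Linear extensions.} First I fix the relative order of the $x$'s, which can be done in $n!$ ways. Let $x_k$ be the last of them and $x_\ell$ the second-to-last.
\begin{itemize}
\item Each $y_j$ with $j\neq k$ lies above $x_k$, so it must appear after $x_k$.
\item $y_k$ lies above every $x_i$ with $i\ne k$, so it must appear after $x_\ell$. It has no other constraints.
\end{itemize}
Hence everything after $x_\ell$ consists of $x_k$, $y_k$ and the $n-1$ other $y$'s, with the only constraint that $x_k$ precedes those $n-1$ other $y$'s.
\begin{itemize}
\item The $n-1$ other $y$'s can be ordered in $(n-1)!$ ways.
\item $x_k$ is then forced to precede all of them.
\item $y_k$ can be inserted into any of the $n+1$ gaps of the resulting sequence of length $n$.
\end{itemize}
Different choices give different linear extensions, and every linear extension arises this way: the order of the $x$'s determines $k$, and the remaining data is read off uniquely. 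Therefore $\lambda(P_n)=n!\,(n-1)!\,(n+1)$.

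\emph{Numerics.} For $n=13$ the poset has $26$ elements, with $\alpha(P_{13})=2^{14}+12=16396$ and $\lambda(P_{13})=13!\,12!\,14$. By the definition of the efficiency of a poset,
\[
\eta(P_{13})=\left(\frac{13!\cdot 12!\cdot 14}{16396^2\cdot 26!}\right)^{1/26},
\]
and it remains to check that this exceeds $1/3.9162$. I would do this by comparing logarithms: it suffices to show
\[
\ln\!\left(16396^2\cdot 26!\right)-\ln\!\left(13!\cdot 12!\cdot 14\right) < 26\ln 3.9162 .
\]
As a sanity check, $\binom{26}{13}(2^{14}-1)^2 \approx 3.9271^{26}$ for the bucket order. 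Our expression differs from it by two factors. The first is $\lambda(P_{13})/(13!)^2 = 14/13$. The second is $\bigl(16396/16383\bigr)^2$. The net gain factor is therefore about $1.0769/1.0016\approx 1.075$. Taking the $26$-th root gives about $1.0028$, and $3.9271/1.0028 \approx 3.916$. So the claimed bound holds, but only narrowly.

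\emph{Main obstacle.} The only delicate step is the linear extension count, specifically making sure that the interleaving argument neither over- nor under-counts. Handling $x_k$ and $y_k$ separately as above resolves this. The final numeric inequality is tight, so it should be verified with a few extra digits of precision rather than a crude estimate.
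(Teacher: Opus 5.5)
Your proof is correct and follows the paper's route. The ideal count is the same case split on $|I\cap Y|\in\{0,1,\ge 2\}$. Your linear-extension count is a repackaging of the paper's two cases: inserting $y_k$ into the first gap, directly before $x_k$, gives the paper's $n\,((n-1)!)^2$ extensions with $x_k$ in position $n+1$, and the other $n$ gaps give the $(n!)^2$ extensions with all of $X$ before $Y$.

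Your numerical check also goes through as you expected. Since $26!/(13!\,12!\cdot 14)=9657700$, we get $\eta(P_{13})^{-26}=16396^2\cdot 9657700\approx 2.5963\cdot 10^{15}<3.9162^{26}$, i.e. $1/\eta(P_{13})\approx 3.91615$.
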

\begin{proof}
    To count the number of ideals, we consider cases where we take 0, 1 or at least 2 vertices from $Y$ and count their contribution respectively.
    We obtain
    \[\alpha(P_{n}) = \binom{n}{0}2^{n} + \binom{n}{1}2^{1} + \sum_{i=2}^{n}\binom{n}{i} = 2^{n+1}+n-1.\]
    For the possible configurations for linear extensions of $P_{n}$, either all elements of $X$ lie before any element of $Y$ or there is a single element $x_{i}$ that lies on the $(n+1)$-th position after $y_{i}$. 
    We obtain
    \[\lambda(P_{n}) = (n!)^{2} + n\cdot ((n-1)!)^2 = \left(n! + (n-1)!\right)\cdot n! = (n-1)!\cdot n! \cdot (n+1).\]
    Taking $n=13$, we have that $\eta(P_{13})$ equals
    \[\left(\frac{12!\cdot 13! \cdot 14}{(2^{14}+13-1)^2\cdot 26!}\right)^{1/26} \approx 1/3.9161.\]
\end{proof}

\subsection{Our Best Currently Verifiable Construction for Posets}\label{subsec:best}
Motivated by the efficiency improvement from $d$-regular bipartite posets where $d < n$,
we propose an improved construction of a $d$-regular bipartite poset for a small constant $d$ with an efficiency of $1/3.7492$ as verified by a computer program.
\lemConstruction*
\begin{proof}
Let $P$ be a $6$-regular bipartite poset defined as above.
This poset has $\alpha(P) = 2125130762$ and 
\[\lambda(P) = 5463391192321648360195359004759601753062414786866369527808000000.\]
Together, we have $\eta(P) = (\lambda(P)/(\alpha(P)^2\cdot 58!))^{1/58} \approx 1/3.7492$.
In the Appendix~\ref{sec:AppendixImplementation}, we explain how we compute the number of ideals and linear extensions of a $d$-regular bipartite poset and provide a link to the code used for the computation.
\end{proof}

\subsection{A Counterexample to the Conjecture of~\cite{JohnsonLR15}}\label{subsec:counter}
Building on the efficiency improvement from various $d$-regular bipartite posets,
we disprove the conjecture of~\cite{JohnsonLR15}.
Specifically, the authors of~\cite{JohnsonLR15} proposed the following notion:
\begin{definition}[Tower of $t$-cubes~\cite{JohnsonLR15}]
Let $X$ be an $n$-element set, and suppose that $n=tk$ for some integers $t,k \ge 1$.
Partition $X$ into pairwise disjoint blocks
\[
X_1,\dots,X_k
\]
with $|X_i|=t$ for all $i \in [k]$.
The \emph{tower of $t$-cubes} is the family
\[
\mathcal{T}_t
=
\Bigl\{
A \subseteq X :
(X_1 \cup \cdots \cup X_s) \subseteq A \subseteq (X_1 \cup \cdots \cup X_{s+1})
\text{ for some } 0 \le s \le k-1
\Bigr\}.
\]
\end{definition}

We note that this definition coincides with bucket orders from the perspective of set systems.
For example, consider $t = n/2$ for $\cT_{n/2}$ and a two-level bucket order with $n/2$ elements on each level, then a set $A$ is in $\cT_{n/2}$ if and only if it corresponds to an ideal of the bucket order, and a sequence $A_{0},A_{1},\ldots,A_{n} \in \cT_{n/2}$ is a maximal chain if and only if it is represented by a linear extension of the bucket order poset.

Johnson, Leader, and Russell~\cite{JohnsonLR15} conjectured that these constructions are extremal for the number of maximal chains in the Boolean lattice. More precisely, they stated the following:

\begin{conjecture}[Johnson, Leader, and Russell~\cite{JohnsonLR15}, Conjecture 5]
If $|\mathcal{A}| = |\mathcal{T}_t|$, then
\[
c(\mathcal{A}) \le c(\mathcal{T}_t).
\]
\end{conjecture}

In contrast to the above conjecture, we prove that it does not hold in general.
Indeed, we provide a counterexample showing that the tower-of-cubes construction is not always extremal for the number of maximal chains.
Our previous improved construction does have better efficiency, but it does not directly answer the conjecture since given the same size of universe, the size of the set system from our construction is larger than $\cT_{t}$. Thus we add some dummy vertices and arcs in our poset construction to address this issue.

\begin{theorem}
    There exists a set system $\cA$ where $|\cA| < |\cT_{n/2}|$ while $c(\cA) > c(\cT_{n/2})$.
\end{theorem}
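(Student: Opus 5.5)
The plan is to build a counterexample directly from the construction of Lemma~\ref{lem:impkoiv} (or, if needed, the $6$-regular poset of Lemma~\ref{lem:construction}), taken as a Cartesian power and then padded with dummy elements so that its size falls below that of the tower. First I compute the competitor. For $n=2m$, $\cT_{n/2}$ consists of all subsets of $X_1$ together with all supersets of $X_1$ inside $X$. Hence $|\cT_{n/2}| = 2^{m+1}-1$ and $c(\cT_{n/2}) = (m!)^2$. Its efficiency is therefore $\left((m!)^2/((2^{m+1}-1)^2 (2m)!)\right)^{1/(2m)}$, which tends to $1/4$ from below and is maximized near $m=13$ at $1/3.9271$.

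Next I would take a poset $P$ on $n_0$ elements with strictly larger efficiency, e.g.\ $P_{13}$ from Lemma~\ref{lem:impkoiv} with $\eta(P_{13})>1/3.9162$, and form the ideal family $\cA_0$ of $P$. By Lemma~\ref{lem:power_efficiency}, the power $\cA_0^k$ on $kn_0$ elements keeps the same efficiency. Its size $\alpha(P)^k$ grows like $(2^{14}+12)^k$, while a tower on the same universe has size only about $2^{kn_0/2+1}$. Same efficiency does not mean same size, so this mismatch is exactly the issue the paper flags.

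To fix the mismatch, I add $r$ dummy elements to the universe and force them to be placed in a rigid way, e.g.\ as a chain of dummies appended (or prepended) to the poset, which corresponds to the tower-of-$1$-cubes factor. Concretely, take the product of $\cA_0^k$ with a chain on $r$ elements, $\{\emptyset,\{d_1\},\{d_1,d_2\},\dots\}$. The size becomes $\alpha(P)^k+r$ up to an additive adjustment, and the number of maximal chains becomes $\lambda(P)^k\binom{kn_0+r}{r}$. The tower on $N=kn_0+r$ elements has size about $2^{N/2+1}$, which now grows with $r$, and chain count $((N/2)!)^2$. I would choose $k$ and $r$ so that $2^{N/2+1}-1$ just exceeds the size of the padded family. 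Equivalently, I could add dummy arcs to make the poset stricter while keeping the ideal count below the tower size. Either route gives $|\cA|<|\cT_{N/2}|$.

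It remains to verify $c(\cA)>c(\cT_{N/2})$. Since the sizes are now comparable, the efficiency comparison ($1/3.916$ versus roughly $1/3.93$ or worse for the tower at that size) should translate into a chain-count advantage, provided the loss from the dummies is small. I expect this final numeric balancing to be the main obstacle. The dummies add universe size without adding many chains, which decreases efficiency, so one must check concretely, probably by explicit computation for a small choice such as $k=1$ with a suitable $r$, that the inequality survives. If $P_{13}$ is too weak, I would switch to the $6$-regular poset of Lemma~\ref{lem:construction}, whose much larger efficiency margin easily absorbs the padding loss.
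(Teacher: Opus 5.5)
Your proposal never exhibits a concrete system, and the padding mechanism it relies on, a chain of dummies, cannot work.

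First, the two counts you state come from different constructions. The Cartesian product of $\cA_0^k$ with the chain system $\{\emptyset,\{d_1\},\ldots,\{d_1,\ldots,d_r\}\}$ has $(r+1)\alpha(P)^k$ sets, not $\alpha(P)^k+r$, and it has $\lambda(P)^k\binom{kn_0+r}{r}$ maximal chains. Putting the chain on top of $P$ (ordinal sum) does give $\alpha(P)^k+r$ sets, but then only $\lambda(P)^k$ maximal chains.

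Neither version helps. Going from $N_0$ to $N=N_0+r$ elements multiplies the tower's size by about $2^{r/2}$.
\begin{itemize}
\item In the product version your size grows by the factor $r+1$. Your chain count relative to the tower's changes by $\binom{N}{N/2}\big/\bigl(r!\binom{N_0}{N_0/2}\bigr)\approx 2^r/r!$. Take $P_{13}$ with $k=1$, where $2^{14}+12>2^{14}-1=|\cT_{13}|$ and the chain margin is only $14/13$. The size first falls below the tower's at $r=6$, and there $c(\cA)/c(\cT_{16})=\frac{14}{13}\binom{32}{6}/3360^2\approx 0.086$.
\item In the ordinal-sum version you gain no chains, while the tower gains a factor $\bigl((N_0+2)/2\bigr)^2$ per pair of dummies. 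For the poset of Lemma~\ref{lem:construction} with two dummies on top, the size $2125130764$ is indeed below $2^{31}-1$, but $\lambda(P)/(30!)^2\approx 0.078$.
\item Cartesian powers only widen the size mismatch ($\alpha(P)^k$ versus roughly $2^{kn_0/2+1}$), exponentially in $k$.
\item Adding arcs to $P_{13}$ does not help either. Each missing arc $(x_j,y_j)$ removes exactly one ideal, so all $13$ of them only bring you down to $2^{14}-1$, which is the tower itself. Any other added relation at least halves $\lambda$, against a margin of only $14/13$.
\end{itemize}

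Your closing heuristic is exactly where the difficulty lies. Since $c(\cA)=\eta(\cA)^n|\cA|^2n!$, an efficiency advantage gives $c(\cA)>c(\cT_{n/2})$ only when $|\cA|$ lands in the narrow window $\bigl(\eta(\cT_{n/2})/\eta(\cA)\bigr)^{n/2}|\cT_{n/2}|<|\cA|<|\cT_{n/2}|$. The whole content of the theorem is an explicit, verified example that hits this window.

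The paper gets one by \emph{targeted} padding. It takes the $7$-regular circulant bipartite poset on $16+16$ elements ($(x_i,y_j)\in P$ iff $(i-j)\bmod 16\in\{0,\ldots,6\}$) and adds one extra arc $(x_0,y_8)$. It then hangs each dummy above a single element: $d_0$ above $y_8$ and $d_1$ above $y_{15}$. A dummy placed above a single $y$ multiplies the number of ideals only by $1+p_y<2$, where $p_y$ is the fraction of ideals containing $y$. It multiplies the number of linear extensions by the average number of slots after $y$. This is what lets the example keep pace with the tower's growth from $32$ to $34$ elements, roughly a factor $2$ in size and $17^2$ in chains. Exact computation then gives $\alpha(P)=260553<2^{18}-1$ and $\lambda(P)\approx 1.316\cdot 10^{29}>(17!)^2\approx 1.265\cdot 10^{29}$.

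Your argument needs this kind of selectively attached padding, or another concrete tuning, together with an actual count. Uniform chain padding cannot supply it.
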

\begin{proof}
    We give a specific set system based on a modified $7$-regular bipartite poset. 
    
    Let $V = \{x_{0},y_{0},x_{1},y_{1},\ldots,x_{15},y_{15}, d_{0}, d_{1}\}$ where $|V| = 34$, 
    we have a poset $P$ as
    \begin{itemize}
        \item base part: $(x_{i}, y_{j}) \in P$ if and only if $(i-j) \mod 16 \in \{0, 1, \ldots, 6\}$;
        \item dummy elements and arcs: $(x_{0},y_{8})$, $(y_{8},d_{0})$, $(y_{15}, d_{1}) \in P$.
    \end{itemize}
    We have $\alpha(P) = 260553$ and \[\lambda(P) = 131576429145341435860520294400\] as verified by a counting program for general posets, which is also described in the appendix~\ref{sec:AppendixIrregularPoset}.
    Then by Observation~\ref{obs:posetEfficiency}, 
    we define \[\cA := \{X \subseteq V \mid X \text{ is an ideal of } P\},\]
    and we note that $|\cA| = \alpha(P)$ and $c(\cA) = \lambda(P)$.
    Finally, consider $\cT_{n/2}$ with $n = 34$, it gives
    \[|\cT_{n/2}| = 2^{18} - 1 = 262143 > |\cA|, \text{ and }\frac{c(\cT_{n/2})}{c(\cA)}  = \frac{(17!)^2}{\lambda(P)} \approx 0.96 < 1.\]
\end{proof}
Using a similar technique, we can also obtain a construction that serves as a counterexample to Conjecture~6 of~\cite{JohnsonLR15} for the ``generalized tower-of-cubes''.

\subsection{Our Best Set System Construction}
We show a simple construction of set systems with better efficiency where the size of the set system and the number of maximal chains contained in it can be computed easily.

Throughout this and subsequent sections, we denote \( h(p) := -p\log_2 p - (1-p)\log_2(1-p) \) with the convention that $h(0) = h(1) = 0$ for the binary entropy function. It is well known that
\[
\binom{n}{pn} = 2^{h(p)n + O(\log n)}.
\]
For \( x \in (0,1) \), we define \( h^{-1}(x) \) as the inverse of \( h \) restricted to \( [0,1/2] \).
Let $n=2m$, and fix an arbitrary bipartition of $[n]$ as $L \sqcup R = [n]$ with $|L| = |R| = m$. 
For a parameter $\tau \in [1,2]$, define
\[
\cA_{m,\tau} := \left\{S \subseteq [n] \middle\vert\ h\left(\frac{|L\cap S|}{m}\right) + h\left(\frac{|R\cap S|}{m}\right) \le \tau \right\}.
\]
Thus membership in $\mathcal{A}_{m,\tau}$ only depends on the number of chosen elements from the two parts $L$ and $R$, and we have 
\[|\cA_{m,\tau}| = \sum_{i=0}^{m}\sum_{j=0}^{m} \mathbf{1}_{h(i/m) + h(j/m) \le \tau} \binom{m}{i} \binom{m}{j},\]
where $\mathbf{1}_{\{\cdot\}}$ is the indicator function.
\begin{definition}
For $0\leq i,j \leq m$, a sequence $\sigma \in \{\mathbb{L},\mathbb{R}\}^{i+j}$ is called an \emph{$(i,j)$-sequence} if it contains exactly $i$ occurrences of $\mathbb{L}$ and exactly $j$ occurrences of $\mathbb{R}$. We denote it by
\[|\sigma|_{\mathbb{L}} = i \text{ and } |\sigma|_{\mathbb{R}} = j.\]
Furthermore, an $(i,j)$-sequence $\sigma$ is \emph{$\tau$-admissible} 
if for every prefix $\omega$ of $\sigma$, it holds that
    \begin{equation}\label{eq:contr}
        h\left(\frac{|\omega|_{\mathbb{L}}}{m}\right) + h\left(\frac{|\omega|_{\mathbb{R}}}{m}\right) \leq \tau.
    \end{equation}
\end{definition}

\begin{lemma}
For every $m \ge 1$ and $\tau \in [1, 2]$,
\[c(\cA_{m,\tau}) = (m!)^2 \cdot |\{\sigma : \sigma \text{ is a $\tau$-admissible $(m,m)$-sequence}\}|.\]
\end{lemma}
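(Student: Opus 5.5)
We prove the identity with an explicit bijection. On one side are the maximal chains of $\cA_{m,\tau}$. On the other side are triples $(\sigma,\pi_L,\pi_R)$, where $\sigma$ is a $\tau$-admissible $(m,m)$-sequence, $\pi_L$ is a linear ordering of $L$, and $\pi_R$ is a linear ordering of $R$.

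First, a maximal chain $\emptyset=A_0\subset A_1\subset\cdots\subset A_n=[n]$ in $2^{[n]}$ is the same as an ordering $e_1,\ldots,e_n$ of $[n]$, where $A_k=\{e_1,\ldots,e_k\}$. The chain lies in $\cA_{m,\tau}$ if and only if every prefix set $A_k$ belongs to $\cA_{m,\tau}$.

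We map the ordering to a triple as follows. Let $\sigma_k=\mathbb{L}$ if $e_k\in L$ and $\sigma_k=\mathbb{R}$ if $e_k\in R$. Let $\pi_L$ be the subsequence of $e_1,\ldots,e_n$ consisting of the elements of $L$, and define $\pi_R$ in the same way for $R$. The resulting $\sigma$ has exactly $m$ letters $\mathbb{L}$ and $m$ letters $\mathbb{R}$, so it is an $(m,m)$-sequence.

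The key observation is that, for the length-$k$ prefix $\omega$ of $\sigma$,
\[
|A_k\cap L|=|\omega|_{\mathbb{L}} \quad\text{and}\quad |A_k\cap R|=|\omega|_{\mathbb{R}}.
\]
Membership in $\cA_{m,\tau}$ depends only on these two counts. Hence $A_k\in\cA_{m,\tau}$ holds exactly when $\omega$ satisfies~\eqref{eq:contr}. Consequently, the whole chain lies in $\cA_{m,\tau}$ if and only if $\sigma$ is $\tau$-admissible. The empty prefix causes no trouble: $h(0)+h(0)=0\le\tau$, so $A_0=\emptyset$ is always in the family.

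Conversely, take any admissible $\sigma$ together with orderings $\pi_L,\pi_R$. Interleave them by reading $\sigma$ from left to right. At each $\mathbb{L}$ take the next unused element of $\pi_L$, and at each $\mathbb{R}$ take the next unused element of $\pi_R$. This produces an ordering of $[n]$. By the same count argument, every prefix set lies in $\cA_{m,\tau}$, so the ordering is a maximal chain of $\cA_{m,\tau}$.

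The two maps are mutually inverse, so the correspondence is a bijection. There are $(m!)^2$ choices of $(\pi_L,\pi_R)$ for each admissible $\sigma$, which gives $c(\cA_{m,\tau})=(m!)^2$ times the number of $\tau$-admissible $(m,m)$-sequences.

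No real obstacle arises. The one point that needs care is checking that admissibility is imposed on every prefix of $\sigma$, including the empty prefix and the full sequence. This matches the requirement that every set $A_0,\ldots,A_n$ of the chain lies in $\cA_{m,\tau}$.
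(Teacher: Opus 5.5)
Your proof is correct and follows essentially the same route as the paper. The paper also builds a bijection from maximal chains of $\cA_{m,\tau}$ to triples $(\sigma,\pi_{\mathbb{L}},\pi_{\mathbb{R}})$: it records the $\mathbb{L}/\mathbb{R}$ pattern and the relative orders within $L$ and $R$, and it inverts the map by interleaving. Your direct remark that membership in $\cA_{m,\tau}$ depends only on the prefix counts, including the empty and full prefixes, is if anything slightly cleaner than the paper's contradiction phrasing.
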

\begin{proof}
    Let $\mathcal{S}$ denote the set of all $\tau$-admissible $(m,m)$-sequences and $S_{m}$ denote the set of all permutations of $m$ elements.
    We show that there exists a bijection $\phi$ between the maximal chains supported by $\cA_{m,\tau}$ and $\mathcal{S} \times (S_{m})^2$. We will use the following notation: A \emph{prefix subset} of a maximal chain $\emptyset \subset \{x_1\} \subset \{x_1,x_2\} \subset  \cdots \subset [2m]$
    is a set $\{x_1,\ldots,x_i\}$, for some $i=0,\ldots,2m$.
    
    Take an arbitrary maximal chain supported by $\cA_{m,\tau}$ as
    \[
    \emptyset \subset \{x_1\} \subset \{x_1,x_2\} \subset  \cdots \subset [2m].
    \]
    We define $\phi(x_1,\ldots,x_{2m})=(\sigma,\pi_\mathbb{L},\pi_\mathbb{R})$ where $\sigma$ is the $(m,m)$-sequence where the $t$-th symbol is $\mathbb{L}$ if $x_{t} \in L$, and is $\mathbb{R}$ if $x_{t} \in R$.
    Since all prefix subsets of the maximal chain are  in $\cA_{m,\tau}$, the corresponding $(m,m)$-sequence is $\tau$-admissible by definition of $\cA_{m,\tau}$.

    We define $\pi_{\mathbb{L}}$ to be the order in which the elements of $L$ appear in $(x_1,\ldots,x_{2m})$, and define $\pi_{\mathbb{R}}$ to be the order in which the elements of $R$ appear in $(x_1,\ldots,x_{2m})$. 
    Furthermore, it is clear that $(\sigma,\pi_{\mathbb{L}},\pi_{\mathbb{R}})$ uniquely determines $(x_1,\ldots,x_{2m})$.

    Conversely, given $(\sigma,\pi_{\mathbb{L}},\pi_{\mathbb{R}}) \in \mathcal{S} \times (S_{m})^2$, we construct a maximal chain as follows.
    We read $\sigma$ from left to right, and whenever the next symbol is $\mathbb{L}$, insert the next unused element in the ordering $\pi_{\mathbb{L}}$.
    A similar process applies to elements in $\pi_{\mathbb{R}}$.
    Take any prefix subset $c'$ of the constructed maximal chain $c$, similarly, it corresponds to an $(i,j)$-sequence by counting the occurrences of elements from $L$ or $R$. 
    If the prefix subset is not included by $\cA_{m,\tau}$, then the $(i,j)$-sequence cannot be $\tau$-admissible and thus the $(m,m)$-sequence is not $\tau$-admissible as well.
    The contradiction indicates that $c$ is supported by $\cA_{m,\tau}$.
    Furthermore if $(\sigma,\pi_{\mathbb{L}},\pi_{\mathbb{R}})\neq (\sigma',\pi'_{\mathbb{L}},\pi'_{\mathbb{R}})$ then applying the process to $(\sigma,\pi_{\mathbb{L}},\pi_{\mathbb{R}})$ clearly results in a different maximal chain then after applying the process to $(\sigma',\pi'_{\mathbb{L}},\pi'_{\mathbb{R}})$.
\end{proof}

The number of $\tau$-admissible $(m,m)$-sequences $S[m,m]$ is computed by the following recurrence.
\begin{lemma}
For $0 \leq i,j \leq m$, let $S[i,j]$ be the number of $\tau$-admissible $(i,j)$-sequences. Then we have
\[
    S[i,j]   =
    \begin{cases}
        1, & \text{if $i=0$ or $j=0$},\\
        S[i-1,j] + S[i,j-1], & \text{if } i, j \geq 1, \text{ and } h(i/m) + h(j/m) \le \tau,\\
        0, &  \text{if } i, j \geq 1, \text{ and } h(i/m) + h(j/m) > \tau.
    \end{cases}
\]
\end{lemma}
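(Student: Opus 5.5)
Each case follows by conditioning on the last symbol of a $\tau$-admissible $(i,j)$-sequence $\sigma$, using that admissibility is a condition on every prefix, including $\sigma$ itself. We treat the three cases in turn.

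\textbf{Boundary case $i=0$ or $j=0$.} There is exactly one $(i,0)$-sequence, namely $\mathbb{L}^i$, and one $(0,j)$-sequence, namely $\mathbb{R}^j$. Its prefixes have counts $(k,0)$ with $0\le k\le m$. Condition~\eqref{eq:contr} then reads $h(k/m)+h(0)=h(k/m)\le 1\le\tau$, which holds because $h\le 1$ and $\tau\in[1,2]$. So the unique sequence is admissible and $S[i,j]=1$. This case is exactly where the hypothesis $\tau\ge 1$ is used.

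\textbf{Case $i,j\ge 1$ with $h(i/m)+h(j/m)>\tau$.} Every $(i,j)$-sequence has itself as a prefix, with counts $(i,j)$. That prefix violates~\eqref{eq:contr}, so no sequence is admissible and $S[i,j]=0$.

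\textbf{Case $i,j\ge 1$ with $h(i/m)+h(j/m)\le\tau$.} We partition the $\tau$-admissible $(i,j)$-sequences by their last symbol. Removing a final $\mathbb{L}$ gives a bijection with the $\tau$-admissible $(i-1,j)$-sequences. Indeed, the prefixes of $\omega\mathbb{L}$ are the prefixes of $\omega$ together with $\omega\mathbb{L}$ itself. The full sequence $\omega\mathbb{L}$ satisfies~\eqref{eq:contr} by the case hypothesis, so $\omega\mathbb{L}$ is admissible if and only if $\omega$ is. The same argument applies to a final $\mathbb{R}$ and the $(i,j-1)$-sequences. Both $(i-1,j)$ and $(i,j-1)$ have nonnegative entries because $i,j\ge1$, so the recursive terms are well defined. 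Summing the two classes gives $S[i,j]=S[i-1,j]+S[i,j-1]$.

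The only step needing care is the boundary case: one must check that the all-$\mathbb{L}$ and all-$\mathbb{R}$ sequences are always admissible, which is where $\tau\ge1$ enters. The other two cases are immediate from the prefix-closed definition of admissibility.
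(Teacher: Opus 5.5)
Your proof is correct and follows the paper's argument. The boundary case uses $h(0)=0$, $h\le 1$ and $\tau\ge 1$; the zero case uses that every sequence is its own prefix; and the recurrence comes from deleting the last symbol. You also make explicit the converse direction, that appending a symbol to an admissible shorter sequence keeps it admissible because the full sequence satisfies the admissibility condition, which the paper leaves implicit.
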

\begin{proof}
If $i=0$, then the only sequence can be $\mathbb{R}\cdots \mathbb{R}$ (the sequence with $j$ copies of symbol $\mathbb{R}$ ). This sequence is a $\tau$-admissible $(0,j)$-sequence since $\tau \geq 1$ and the first term of~\eqref{eq:contr} is always equal to $0$. 
The case $j=0$ is symmetric.

Now assume that $i,j \ge 1$, if $h(i/m)+h(j/m) > \tau$, then no $(i,j)$-sequence can be $\tau$-admissible, since the sequence itself is also one of the prefix sequences.

Otherwise, if $h(i/m)+h(j/m)\leq \tau$, 
then we note every $\tau$-admissible $(i,j)$-sequence ends either with symbol $\mathbb{L}$ or with symbol $\mathbb{R}$. Deleting the last symbol gives, respectively, a $\tau$-admissible $(i-1,j)$-sequence or a $\tau$-admissible $(i,j-1)$-sequence.
This gives the recurrence.
\end{proof}
Thus, for any fixed choice of $m$ and $\tau$, the efficiency of $\cA_{m,\tau}$ can be verified by evaluating the above dynamic program together with the explicit formula for $|\cA_{m,\tau}|$.

In the Appendix~\ref{sec:AppendixImplementation}, we provide a link to the code used for computation. It turns out that with the choice of $m=50000$ and $\tau = 1.032$, we have $\eta(\cA_{m,\tau}) \approx 1/3.1860$, which proves Lemma~\ref{lem:setSystemConstruction}.

\section{Limits of our Framework: Upper Bounds on Chain Efficiency}
\label{sec:upp}

In this section, we derive upper bounds on the chain efficiency of set families. In Subsection~\ref{Sec:BasicUpperBound}, we present a basic upper bound, which we subsequently refine in Subsection~\ref{sec:improvedBound} using more advanced techniques.
Finally, in Subsection~\ref{sec:ubBiPoset}, we give an improved upper bound for regular bipartite posets, which gives a stronger bound on the efficiency of interesting families of posets, including bucket orders and those described in Subsection~\ref{sec:con}. 
This indicates that our construction in Subsection~\ref{subsec:best} is quite competitive within the family of regular bipartite posets.

\subsection{Warm-up: Basic Upper Bound}~\label{Sec:BasicUpperBound}
We begin with an initial upper bound on chain efficiency.
\begin{lemma}\label{lem:basicUpperbound}
    Let $n\ge3$ and let $\cT \subseteq 2^{[n]}$. The efficiency of $\cT$ is at most $\frac{1+\frac{1}{\poly(n)}}{3}$.
\end{lemma}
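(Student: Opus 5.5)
We follow the encoding idea sketched in the introduction. We show $c(\cT) \le |\cT|^2 \cdot ((n/3)!)^3 \cdot \poly(n)$, and then compare with $n!$. Assume first that $3 \mid n$ and write $k = n/3$; the general case is handled by splitting $n$ into three nearly equal parts $k_1+k_2+k_3=n$ with $|k_i - n/3| \le 1$, which only changes polynomial factors.

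\emph{Encoding.} A maximal chain $A_0 \subset \cdots \subset A_n$ of $\cT$ is determined by the following data:
\begin{itemize}
\item the two sets $A_{k}, A_{2k} \in \cT$;
\item the order in which the elements of $A_k$ are added (a permutation of $k$ elements);
\item the order in which the elements of $A_{2k}\setminus A_k$ are added (a permutation of $k$ elements);
\item the order in which the elements of $[n]\setminus A_{2k}$ are added (a permutation of $k$ elements).
\end{itemize}
Given $A_k$ and $A_{2k}$, each of these three orderings is a bijection between $[k]$ and a known $k$-set, so there are at most $(k!)^3$ choices. There are at most $|\cT|^2$ choices for the pair $(A_k,A_{2k})$. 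The map is injective, so
\[
c(\cT)\le |\cT|^2 (k!)^3.
\]

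\emph{Estimate.} It follows that
\[
\eta(\cT)^n=\frac{c(\cT)}{|\cT|^2 n!}\le \frac{(k!)^3}{(3k)!}=\binom{3k}{k,k,k}^{-1}.
\]
By Stirling, the multinomial coefficient is at least $3^{n}/\poly(n)$; concretely it is $\Theta(3^{n}/n)$. Hence $\eta(\cT)\le 3^{-1}\cdot \poly(n)^{1/n}$. Since $\poly(n)^{1/n} = e^{O(\log n/n)}$, we have $\poly(n)^{1/n}\le 1+O(\log n/n) \le 1+1/\poly(n)$ for suitable polynomial, which gives the stated form.

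\emph{Where care is needed.} The only care needed is in the lower bound on the multinomial coefficient for non-divisible $n$ and small $n\ge 3$. For this we use the fact that the largest multinomial coefficient $\binom{n}{k_1,k_2,k_3}$ is at least the average $3^n/\binom{n+2}{2}$, since there are $\binom{n+2}{2}$ compositions of $n$ into three parts. This gives $\eta(\cT)\le \frac{1}{3}\binom{n+2}{2}^{1/n}$ uniformly in $n$, without divisibility assumptions.
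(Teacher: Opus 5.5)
Your proposal is correct and follows the paper's proof. The paper encodes a maximal chain by $A_{n/3}, A_{2n/3}\in\cT$ together with three permutations of $n/3$ elements, which gives $c(\cT)\le|\cT|^2((n/3)!)^3$, and then bounds $n!/((n/3)!)^3$ from below by $3^n/\poly(n)$. Your averaging bound $\max_{k_1+k_2+k_3=n}\binom{n}{k_1,k_2,k_3}\ge 3^n/\binom{n+2}{2}$ is legitimate because the encoding works for any split. It is a slightly cleaner alternative to the paper's crude Stirling estimate, since it removes the divisibility assumption and gives a bound uniform in $n$.
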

\begin{proof}
 For convenience, assume that $n$ is divisible by three. With any maximal chain $P:=A_0,\dots,A_n$ of $\mathcal{T}$ we define $L(P):= A_{\frac{n}{3}}$  and  $R(P):= A_{\frac{2n}{3}}$. 

Note that $P$ can be described by $L(P),R(P)$ and three permutations $\pi_1$, $\pi_2$ and $\pi_3$ on $n/3$ elements, where $\pi_1$ is a permutation of $L(P)$, $\pi_2$ is a permutation of $R(P)\setminus L(P)$, and $\pi_3$ is a permutation of $[n]\setminus R(P)$. Hence it holds that:

\begin{equation}\label{eq:lb}
    c(\mathcal{T}) \leq |\mathcal T|^2 ((n/3)!)^3.
\end{equation}

Note that~\eqref{eq:lb} immediately implies that
\[
    \frac{1}{\eta(\cT)} = \left( \frac{|\cT|^2 \cdot  n!}{c(\cT)} \right)^{1/n} \geq \left( \frac{|\mathcal{T}|^2 \cdot  n!}{|\mathcal{T}|^2 ((n/3)!)^3} \right)^{1/n} \geq \left(\frac{3^n}{(\mathbf{e}\cdot n)^3}\right)^{1/n} \geq \frac{3}{(\mathbf{e}\cdot n)^{3/n}},
\]
where we used in the second inequality $(\frac{n}{\mathbf{e}})^n\leq n! \leq \mathbf{e}\cdot n (\frac{n}{\mathbf{e}})^n$ to conclude that
\[
\frac{n!}{((n/3)!)^3}\geq \frac{(n/\mathbf{e})^n}{(\mathbf{e}\cdot n)^3(n/(3\mathbf{e}))^{n}} \geq \frac{3^n}{(\mathbf{e}\cdot n)^3}.
\]
Finally, to complete the proof we need to show that,
$$(\mathbf{e}\cdot n)^{3/n}\le 1+\frac{6\log_2 n}{n}.$$
To see this, observe that

$$\left(1+\frac{6\log_2 n}{n}\right)^n=\left(\left(1+\frac{6\log_2 n}{n}\right)^{n/6 \log_2 n}\right)^{6\log_2 n}\ge (2^{6})^{\log_2 n}=n^{6}\ge (n\cdot \mathbf{e})^3,$$
where we used the fact that $(1 + 1/x)^x \ge 2$, for every positive integer $x\ge 1$.
\end{proof}
\subsection{An Improved Upper Bound on the Efficiency of Set Systems}\label{sec:improvedBound}

To present our improved upper bound, we introduce several notions and results from coding theory, including the Hamming distance and extremal properties of Hamming balls.

\begin{definition}[Hamming distance, Hamming ball]
Let $X$ be a finite set. For any two subsets $A,B \subseteq X$, the \emph{Hamming distance} between $A$ and $B$ is defined as
\[
d(A,B) = |A \triangle B|,
\]
where $A \triangle B = (A \setminus B) \cup (B \setminus A)$ denotes the symmetric difference.

A \emph{Hamming ball centered at a set $X$} is a set system $\cA$ with the property that, if $d(A,X) < d(A',X)$ and $A' \in \cA$, then also $A \in \cA$.

\end{definition}

\begin{definition}[Distance between set systems]
Let $\mathcal{A}, \mathcal{B} \subseteq \mathcal{P}(X)$ be two nonempty set systems. The distance between $\mathcal{A}$ and $\mathcal{B}$ is defined as
\[
d(\mathcal{A}, \mathcal{B}) = \min \{ d(A,B) : A \in \mathcal{A},\ B \in \mathcal{B} \}.
\]
\end{definition}

The following theorem, a form of Harper’s theorem~\cite{Harper1966} due to Frankl and Füredi, shows that Hamming balls are extremal with respect to distances between set systems.
\begin{theorem}[Frankl and Füredi~\cite{FranklFuredi1981}]\label{thm:ff}
Let $\mathcal{A}, \mathcal{B} \subseteq \mathcal{P}(X)$ be nonempty set systems. Then there exist Hamming balls $\mathcal{A}_0$ centered at $X$ and $\mathcal{B}_0$ centered at $\emptyset$ such that
\[
|\mathcal{A}_0| = |\mathcal{A}|, \qquad |\mathcal{B}_0| = |\mathcal{B}|,
\]
and
\[
d(\mathcal{A}_0, \mathcal{B}_0) \ge d(\mathcal{A}, \mathcal{B}).
\]
\end{theorem}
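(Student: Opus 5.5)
We would derive this from Harper's vertex-isoperimetric inequality on the hypercube, applied to Hamming neighbourhoods. For a set system $\mathcal{A}\subseteq\mathcal{P}(X)$ and $t\ge 0$, write $N_t(\mathcal{A})=\{C\subseteq X : d(C,A)\le t \text{ for some } A\in\mathcal{A}\}$. Let $\delta:=d(\mathcal{A},\mathcal{B})$. If $\delta=0$ there is nothing to prove: take any Hamming balls of the right sizes. So assume $\delta\ge 1$. Then $\mathcal{B}\cap N_{\delta-1}(\mathcal{A})=\emptyset$, and consequently
\[
|\mathcal{B}|\le 2^{|X|}-|N_{\delta-1}(\mathcal{A})|.
\]

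The key step is Harper's theorem. Order $\mathcal{P}(X)$ by the simplicial order centred at $X$: sets closer to $X$ come first, and ties are broken by a fixed lexicographic rule. Let $\mathcal{A}_0$ be the initial segment of this order with $|\mathcal{A}_0|=|\mathcal{A}|$. Then $\mathcal{A}_0$ is a Hamming ball centred at $X$. Harper's theorem states that $|N_t(\mathcal{A})|\ge|N_t(\mathcal{A}_0)|$ for every $t$, and moreover that $N_t(\mathcal{A}_0)$ is again an initial segment of the simplicial order. We would cite this, or sketch its standard proof. That proof applies the coordinate compressions $C_i$, which replace $A$ by $A\cup\{i\}$ whenever the latter is not already present. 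Each compression preserves $|\mathcal{A}|$ and does not increase $|N_1(\cdot)|$. After these, one compresses within codimension-one subcubes until the family is a simplicial initial segment, and then iterates $N_1$ to reach $N_t$. We expect this to be the main obstacle. The delicate part is showing that the final family, which is stable under all compressions, has neighbourhood no larger than that of the initial segment; the standard route is induction on $|X|$ applied to the two halves of the cube.

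Given Harper's theorem, the complement $\mathcal{C}:=\mathcal{P}(X)\setminus N_{\delta-1}(\mathcal{A}_0)$ is a final segment of the simplicial order centred at $X$. Hence $\mathcal{C}$ is closed under moving closer to $\emptyset$: if $C'\in\mathcal{C}$ and $d(C,\emptyset)<d(C',\emptyset)$, then $d(C,X)>d(C',X)$, so $C$ lies later in the order and $C\in\mathcal{C}$. The inequalities above give
\[
|\mathcal{C}|=2^{|X|}-|N_{\delta-1}(\mathcal{A}_0)|\ge 2^{|X|}-|N_{\delta-1}(\mathcal{A})|\ge|\mathcal{B}|.
\]
Now let $\mathcal{B}_0$ consist of the $|\mathcal{B}|$ sets of $\mathcal{C}$ of smallest size, ordered by the reversed simplicial order. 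Then $\mathcal{B}_0$ is a Hamming ball centred at $\emptyset$, because $\mathcal{C}$ already contains every set closer to $\emptyset$ than any of its members. Finally, since $\mathcal{B}_0\subseteq\mathcal{C}$ is disjoint from $N_{\delta-1}(\mathcal{A}_0)$, we obtain $d(\mathcal{A}_0,\mathcal{B}_0)\ge\delta=d(\mathcal{A},\mathcal{B})$, as required.
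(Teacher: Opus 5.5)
The paper gives no proof of this statement; it is imported from Frankl and F\"uredi and used only as a black box in Corollary~\ref{cor:DistanceUpperBound} and Lemma~\ref{lem:cons}. Your argument is correct and takes a different route from the cited source.

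The steps all go through. For $\delta=d(\mathcal{A},\mathcal{B})\ge 1$, $\mathcal{B}$ misses $N_{\delta-1}(\mathcal{A})$. Harper's inequality then leaves at least $|\mathcal{B}|$ sets outside $N_{\delta-1}(\mathcal{A}_0)$. That complement is closed under passing to smaller sets, so its $|\mathcal{B}|$ smallest members form a ball $\mathcal{B}_0$ centred at $\emptyset$ at distance at least $\delta$ from $\mathcal{A}_0$. Both $\mathcal{A}_0$ and $\mathcal{B}_0$ are Hamming balls in the paper's loose sense.

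Your argument runs in the opposite direction from Frankl and F\"uredi. They prove the two-family separation statement directly, by shifting both families simultaneously. They then recover Harper's vertex-isoperimetric theorem as the special case $\mathcal{B}=\mathcal{P}(X)\setminus N_t(\mathcal{A})$. So your reduction is the easy half of an equivalence, and the real content sits in Harper's theorem, which is fine to cite here.

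Two small remarks. First, your last step does not need $N_{\delta-1}(\mathcal{A}_0)$ to be a simplicial initial segment. The $t$-neighbourhood of any Hamming ball centred at $X$ is again such a ball, so its complement is automatically closed downwards. Nestedness is needed only inside Harper's theorem, to pass from $N_1$ to $N_t$.

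Second, your sketch of Harper's proof misplaces the difficulty. Induction on $|X|$ is what shows that a single codimension-one compression does not enlarge $N_1$. A family stable under all such compressions need not be an initial segment, and the standard proof finishes by listing the few exceptional stable families and checking them by hand.
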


In particular, among all pairs of set systems of given sizes, Hamming balls maximize the minimum distance.

\begin{corollary}\label{cor:DistanceUpperBound}
Let $\cX,\cY \subseteq 2^{U}$ be non-empty set families of size at least $2^{\lambda n}$, then $d(\cX_0,\cY_0)\le(1-2h^{-1}(\lambda))|U|$.
\end{corollary}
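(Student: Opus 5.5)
Throughout, $n=|U|$, $\lambda\le 1$ (otherwise no family of size at least $2^{\lambda n}$ exists), and $\cX_0,\cY_0$ are the Hamming balls given by Theorem~\ref{thm:ff} for $\cX,\cY$: $\cX_0$ is centered at $U$, $\cY_0$ at $\emptyset$, and $|\cX_0|=|\cX|$, $|\cY_0|=|\cY|$. Set $t:=h^{-1}(\lambda)\,n\in[0,n/2]$. The plan has three steps: (i) show each ball reaches radius at least $t$ from its center; (ii) use the ball property to realize any smaller radius exactly; (iii) exhibit a pair of nested sets, one in each ball, whose distance is at most $n-2t$.

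\textbf{Step (i): each ball reaches radius at least $t$.} Let $a$ be the largest distance $d(A,U)$ over $A\in\cX_0$. Suppose for contradiction that $a<t$. Then $a<n/2$, and every set in $\cX_0$ lies within distance $a$ of $U$. Using the standard estimate $\sum_{i\le k}\binom{n}{i}\le 2^{h(k/n)n}$ for $k\le n/2$, we get
\[
|\cX_0|\le \sum_{i\le a}\binom{n}{i}\le 2^{h(a/n)n}.
\]
Since $h$ is strictly increasing on $[0,1/2]$ and $a/n<h^{-1}(\lambda)$, we have $h(a/n)<\lambda$. Hence $|\cX_0|<2^{\lambda n}$, contradicting $|\cX_0|=|\cX|\ge 2^{\lambda n}$. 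So $a\ge t$, and since $a$ is an integer, $a\ge\lceil t\rceil$. The same argument with center $\emptyset$ shows that $\cY_0$ contains a set of size at least $\lceil t\rceil$.

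\textbf{Step (ii): every smaller radius is realized.} By the definition of a Hamming ball, if $\cX_0$ contains a set at distance $a$ from $U$, then it contains every set at distance strictly less than $a$ from $U$. Therefore $\cX_0$ contains all sets at distance exactly $a'$ from $U$ for every integer $a'\le a$. Likewise, $\cY_0$ contains all sets of size $b'$ for every integer $b'$ up to its maximal radius.

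\textbf{Step (iii): a nested pair at small distance.} Put $a':=\lceil t\rceil$ and $b':=\min(\lceil t\rceil,\,n-a')$; both are at most the respective maximal radii. Pick any $A\subseteq U$ with $|A|=n-a'$; it lies in $\cX_0$. Since $b'\le n-a'$, we may pick $B\subseteq A$ with $|B|=b'$; it lies in $\cY_0$. Then
\[
d(\cX_0,\cY_0)\le |A\triangle B| = n-a'-b'.
\]
We now bound $n-a'-b'$ in the two cases defining $b'$.
\begin{itemize}
\item If $b'=\lceil t\rceil$, then $n-a'-b'=n-2\lceil t\rceil\le n-2t=(1-2h^{-1}(\lambda))n$.
\item If $b'=n-a'$, then $n-a'-b'=0$. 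This is at most $n-2t$ because $t\le n/2$.
\end{itemize}
In both cases the claimed bound holds exactly, with no additive slack.

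The only delicate points are strictness and rounding. Step (i) uses the strict monotonicity of $h$ on $[0,1/2]$ together with integrality of $a$ to obtain $a\ge\lceil t\rceil$ rather than $a\ge t-O(1)$. Step (iii) handles the case $a'+b'>n$, which can occur when $\lambda=1$ and $n$ is odd, by capping $b'$ at $n-a'$.
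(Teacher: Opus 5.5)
Your Step (ii) contains an off-by-one error, and Step (iii) depends on it. The Hamming-ball property only says that if $\cX_0$ contains a set at distance $a$ from $U$, then it contains every set at distance \emph{strictly less} than $a$. At distance exactly $a$ the ball may contain only some of the sets, so the valid conclusion is ``for every integer $a'<a$'', not ``$a'\le a$''.

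Step (i) only yields $a\ge\lceil t\rceil$. So the value $a'=\lceil t\rceil$ you use in Step (iii) may equal $a$. In that case ``pick any $A$ with $|A|=n-a'$; it lies in $\cX_0$'' is unjustified, and likewise for $B\in\cY_0$. This does happen. For $n=100$ and $t$ slightly above $9$, the value $2^{h(t/n)n}\approx 1.4\cdot 10^{13}$ lies strictly between $\sum_{i\le 9}\binom{100}{i}\approx 2.1\cdot 10^{12}$ and $\sum_{i\le 10}\binom{100}{i}\approx 1.9\cdot 10^{13}$. So a Hamming ball of that size has maximal radius exactly $10$, yet it misses many sets at distance $10$.

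Nor can you simply choose $A$ and $B$ among the sets that happen to lie on the two partial top levels. The ball property says nothing about which sets those are. For example, if every top-level set of $\cX_0$ avoids the element $1$ while every top-level set of $\cY_0$ contains it, no pair with $B\subseteq A$ exists there.

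The ingredient you need is the volume comparison that the paper's one-line proof relies on. A Hamming ball centered at $U$ that misses some set at distance $r$ contains no set at distance greater than $r$. It therefore has fewer than $\sum_{i\le r}\binom{n}{i}$ members. Since $|\cX_0|\ge 2^{h(t/n)n}\ge 2^{h(\lfloor t\rfloor/n)n}\ge\sum_{i\le\lfloor t\rfloor}\binom{n}{i}$, the ball $\cX_0$ must contain \emph{every} set within distance $\lfloor t\rfloor$ of $U$. Likewise, $\cY_0$ contains every set of size at most $\lfloor t\rfloor$.

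Taking $B\subseteq A$ with $|A|=n-\lfloor t\rfloor$ and $|B|=\lfloor t\rfloor$ then gives $d(\cX_0,\cY_0)\le n-2\lfloor t\rfloor$. This is exactly the claimed bound when $h^{-1}(\lambda)n$ is an integer, as the paper tacitly assumes. Otherwise it is off by less than $2$, which is harmless for Lemma~\ref{lem:cons} and Lemma~\ref{lem:imprb}.

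Your stronger claim of ``no additive slack'' for non-integer $t$ is not established. If $t-\lfloor t\rfloor\le 1/2$, you can recover it. Take a set of $\cX_0$ at distance exactly $\lceil t\rceil$ from $U$ (Step (i) together with the ball property provides one), and pair it with a subset of size $\lfloor t\rfloor$. If $t-\lfloor t\rfloor>1/2$, however, you need a nested pair inside the two partial top levels. That requires an additional quantitative argument which your proposal does not supply.
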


\begin{proof}
    
We apply Theorem~\ref{thm:ff} to obtain $\cX_0$ and $\cY_0$ as stated in that theorem. If $|\cX|, |\cY|\ge2^{\lambda n}$, then $\cY_0$ contains all sets of size $h^{-1}(\lambda)n$ and $\cX_0$ contains all sets of size $(1-h^{-1}(\lambda))n$ and hence $d(\cX_0,\cY_0)=(1-2\cdot h^{-1}(\lambda))|U|$.\end{proof}

\begin{lemma}\label{lem:cons}
Let $\cX,\cY \subseteq 2^{U}$ be non-empty set families with $|\cX|,|\cY|\geq 2^{h(\delta) |U|+1}$, then there exists a subset $\cX'\subseteq \cX$ of size $|\cX|/2$ such that for every element $X \in \cX'$ we have  \[d(X,\cY) \leq (1-2\cdot \delta)|U|.\]
\end{lemma}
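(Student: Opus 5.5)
We argue by contradiction, peeling off ``bad'' elements of $\cX$. Let $\cX_{\mathrm{bad}} := \{X \in \cX : d(X,\cY) > (1-2\delta)|U|\}$. It suffices to show $|\cX_{\mathrm{bad}}| < |\cX|/2$. Then the complement $\cX \setminus \cX_{\mathrm{bad}}$ has size at least $|\cX|/2$, and any subfamily $\cX'$ of it of size exactly $|\cX|/2$ (rounded appropriately) has the desired property. So suppose instead that $|\cX_{\mathrm{bad}}| \ge |\cX|/2 \ge 2^{h(\delta)|U|}$.

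By construction, $d(\cX_{\mathrm{bad}},\cY) > (1-2\delta)|U|$, since every member of $\cX_{\mathrm{bad}}$ is farther than $(1-2\delta)|U|$ from every member of $\cY$. Apply Theorem~\ref{thm:ff} to the pair $(\cX_{\mathrm{bad}},\cY)$. This gives a Hamming ball $\cX_0$ centered at $U$ with $|\cX_0| = |\cX_{\mathrm{bad}}|$ and a Hamming ball $\cY_0$ centered at $\emptyset$ with $|\cY_0| = |\cY|$, such that $d(\cX_0,\cY_0) \ge d(\cX_{\mathrm{bad}},\cY) > (1-2\delta)|U|$.

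We now derive the contradiction from the sizes of the balls, essentially as in Corollary~\ref{cor:DistanceUpperBound}; I would redo that argument carefully here rather than cite it, since its statement is loose about sizes and $\lambda$. A Hamming ball centered at $\emptyset$ of size $s$ contains every set of size $<k$ as soon as $s > \sum_{i<k}\binom{|U|}{i}$; it then contains at least one set of size $k$, since it is ``filled'' by distance from its center. Take $k = \lceil \delta |U| \rceil$ (assume $\delta \le 1/2$, otherwise the claim is trivial since $1-2\delta \le 0$ makes the bound vacuous or easy). The standard estimate $\sum_{i\le \delta|U|}\binom{|U|}{i} \le 2^{h(\delta)|U|}$ shows that a ball of size at least $2^{h(\delta)|U|}$ contains some set $B$ with $|B| \le \delta|U|$. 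Symmetrically, $\cX_0$, of size at least $2^{h(\delta)|U|}$, contains some set $A$ with $|U\setminus A| \le \delta|U|$. Then $d(A,B) \le |U\setminus A| + |B| \le 2\delta|U|$, not $(1-2\delta)|U|$.

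So the crude containment argument gives the wrong bound, and the real obstacle is getting the distance estimate right. What actually follows from the ball sizes is $d(\cX_0,\cY_0) \le |U| - |A\cap B|$ type estimates. The correct picture is as follows. Take $B$ to be a set of size about $\delta|U|$ in $\cY_0$. Take $A$ in $\cX_0$ to be any set containing at least $|U|-\delta|U|$ elements, and choose it to contain $B$ if possible. Then $d(A,B) = |A|-|B| \le (1-\delta)|U| - \delta|U| = (1-2\delta)|U|$. Indeed, $\cX_0$ contains all sets with complement of size $\le \delta|U|$, so we can take $A$ with $|A| = \lceil(1-\delta)|U|\rceil$ and $A \supseteq B$. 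This is possible because $|B| \le \delta|U| \le (1-\delta)|U|$.

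This contradicts $d(\cX_0,\cY_0) > (1-2\delta)|U|$. Hence $|\cX_{\mathrm{bad}}| < |\cX|/2$, which proves the lemma.

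The main care points are the rounding in $\delta|U|$ and the use of the $+1$ in the exponent. The $+1$ is exactly what absorbs the halving $|\cX_{\mathrm{bad}}| \ge |\cX|/2$, and the same slack absorbs the $O(1)$ rounding losses in the binomial-sum estimate.
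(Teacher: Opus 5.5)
Your reduction is the paper's argument in contrapositive form. The paper greedily removes from $\cX$ elements found by Corollary~\ref{cor:DistanceUpperBound}, applied to the remaining family, which still has size at least $|\cX|/2\ge 2^{h(\delta)|U|}$. You instead apply Theorem~\ref{thm:ff} once to $(\cX_{\mathrm{bad}},\cY)$. You also correctly see that the $+1$ in the exponent pays for the halving.

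The gap is in the one step that actually uses the size hypothesis, where your inequalities point the wrong way. Largeness matters only because it forces the balls to reach \emph{far} from their centers. Since $|\cY_0|>2^{h(\delta)|U|}\ge\sum_{i<\lceil\delta|U|\rceil}\binom{|U|}{i}$, the ball $\cY_0$ must contain a set $B$ with $|B|=\lceil\delta|U|\rceil\ge\delta|U|$. Your proposal goes wrong here in three places:
\begin{itemize}
\item You conclude only that $\cY_0$ contains some $B$ with $|B|\le\delta|U|$, which $\emptyset$ already satisfies.
\item You then use $d(A,B)\le|U\setminus A|+|B|$, which is false (take $A=U$, $B=\emptyset$).
\item In your ``correct picture'' you derive $|A|-|B|\le(1-\delta)|U|-\delta|U|$ from $|A|\ge(1-\delta)|U|$ and $|B|\le\delta|U|$. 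Those bounds in fact give $d(A,B)\ge(1-2\delta)|U|$.
\end{itemize}
As written, no contradiction is obtained.

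The repair uses only what you already set up. Take $B\in\cY_0$ with $|B|=\lceil\delta|U|\rceil$ as above, and any $A\supseteq B$ with $|U\setminus A|=\lfloor\delta|U|\rfloor$; this is possible since $\delta\le 1/2$. This $A$ lies in $\cX_0$, because a Hamming ball centered at $U$ of size at least $2^{h(\delta)|U|}\ge\sum_{i\le\delta|U|}\binom{|U|}{i}$ contains every set whose complement has at most $\delta|U|$ elements. Then $d(A,B)=|A|-|B|=|U|-\lfloor\delta|U|\rfloor-\lceil\delta|U|\rceil$. This equals $(1-2\delta)|U|$ when $\delta|U|$ is an integer, and otherwise can exceed it by less than $1$. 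That is the same rounding looseness as in the paper's Corollary~\ref{cor:DistanceUpperBound}, and it is irrelevant for the exponential-scale application.

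Finally, your aside about $\delta>1/2$ is wrong. There $h(\delta)<1$, so the hypothesis can hold, while $d(X,\cY)\le(1-2\delta)|U|<0$ is impossible. The lemma is simply false in that range, so you should assume $\delta\le 1/2$, as the application ($\delta\approx 0.41$) does.
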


\begin{proof}
Start with the families $\cX$ and $\cY$, and initialize $\cX' \gets \emptyset$. We iteratively construct $\cX'$ as follows.
As long as $|\cX'|<|\cX|/2$, apply Corollary~\ref{cor:DistanceUpperBound} to the current families $\cX$ and $\cY$. This yields a pair $A \in \cX$ and $B \in \cY$ such that
\[
d(A,B)\le (1-2\delta)|U|.
\]
Add $A$ to $\cX'$, and remove it from $\cX$. That is, update
\[
\cX' \gets \cX' \cup \{A\}
\qquad\text{and}\qquad
\cX \gets \cX \setminus \{A\}.
\]

Repeating this procedure until $|\cX'|=|\cX|/2$, we obtain a subfamily $\cX' \subseteq \cX$ of size $|\cX|/2$ such that for every $A \in \cX'$ there exists some $B \in \cY$ satisfying
\[
d(A,B)\le (1-2\delta)|U|.
\]

\end{proof}

Now we have all the ingredients to prove the main result of this section.
\imprb*

\begin{proof}
    As the initial step we upper bound $c(\cA)$ in terms of $|\cA|$. Let
\[
\begin{aligned}
    \cT := \{(W,X,Y,Z) : &\ W,X,Y,Z \text{ are pairwise disjoint,}\\ &|W|=|Z|=n/3, |X|=|Y|=n/6, W, W\cup X, W\cup X\cup Y \in \cA\}.
\end{aligned}
\]
Note that if $A_0,\ldots,A_n$ is a maximal chain of $\cA$, then there exists $(W,X,Y,Z)\in \cT$ where,
\[
W:= A_{n/3}, \quad X:= A_{n/2} \setminus A_{n/3}, \quad Y:= A_{2n/3} \setminus A_{n/2}, \quad Z:= A_{n} \setminus A_{2n/3}.
\]
Therefore, we have that 
\begin{equation}
c(\cA)\leq |\cT|\cdot (|W|!)(|X|!)(|Y|!)(|Z|!)= |\cT| \cdot ((n/3)!)^2((n/6)!)^2.
\end{equation}
Let $0 < \delta < 1$ be a parameter to be chosen later, and define $(W,X,Y,Z) \in \cT$ to be \emph{frequent} if
\[
    |\{ (X',Y'):  (W,X',Y',Z) \in \cT \}|\geq 2^{h(\delta) n/3+1}.
\]
Define $\cT^+$ to be all frequent tuples in $\cT$ and $\cT^-= \cT \setminus \cT^+$, so we have $|\cT| = |\cT^+|+|\cT^-|$. Note that $|\cT^-|\leq |\cA|^22^{h(\delta) n/3+1}$.
We now upper bound $|\cT^+|$. Fix a frequent $(W,X,Y,Z) \in \cT^+$, and let
\[
\begin{aligned}
 \mathcal{X} &:= \{X' : (W,X',Y' ,Z) \in \cT^+ \text{, for some $Y'$}\}\\
 \mathcal{Y} &:= \{Y' : (W,X',Y',Z) \in \cT^+ \text{, for some $X'$}\}.
\end{aligned}
\]

By Lemma~\ref{lem:cons}, there is a set $\cX' \subseteq \cX$ with $|\cX'|\geq |\cX|/2$ such that for each $X \in \cX'$, there is a $\mu(X) \in \cY$ such that $d(X,\mu(X))\leq (1-2\cdot \delta)\frac{n}{3}$.

Now we present the crucial idea. For each $X \in \cX'$, we encode the tuple $(W,X,Y,Z)$ using the following three sets
\[
W \cup X, \qquad \mu(X) \cup Z, \qquad \text{and} \qquad X \Delta \mu(X).
\]
We show that this encoding uniquely determines $(W,X,Y,Z)$:

\begin{itemize}
\item Observe that both $X$ and $\mu(X)$ are disjoint from $W$. Hence,
\[
(W \cup X) \setminus (X \Delta \mu(X)) = W \cup (X \cap \mu(X)).
\]
\item Similarly, since both $X$ and $\mu(X)$ are disjoint from $Z$, we obtain
\[
(\mu(X) \cup Z) \setminus (X \Delta \mu(X)) = Z \cup (X \cap \mu(X)).
\]
\item Since $W$, $X$, and $Z$ are pairwise disjoint, the sets $W$ and $Z$ can now be recovered as
\[
W = \bigl(W \cup (X \cap \mu(X))\bigr) \setminus \bigl(Z \cup (X \cap \mu(X))\bigr),
\]
\[
Z = \bigl(Z \cup (X \cap \mu(X))\bigr) \setminus \bigl(W \cup (X \cap \mu(X))\bigr).
\]
\item Once $W$ is known, we recover $X$ from $W \cup X$ using the fact that $W$ and $X$ are disjoint:
\[
X = (W \cup X) \setminus W.
\]
\item Finally, $Y$ is uniquely determined by
\[
Y = [n] \setminus (W \cup X \cup Z).
\]
\end{itemize}

Note that $W\cup X \in \cA$ and $[n] \setminus (\mu(X) \cup Z) \in \cA$ (since $(W,X',\mu(X),Z) \in \cT$ for some $X'$). Hence, the number of options for the three sets is $|\cA|^2\binom{n}{(1-2\delta)n/3}$. Thus,
\[
|\cT| = |\cT^-| + |\cT^+| \leq |\cA|^2 \left(2^{h(\delta)n/3+1}+\binom{n}{(1-2\delta)n/3}\right) \leq |\cA|^2 \left(2^{h(\delta)n/3+1} + 2^{h((1-2\delta)/3)n}\right).  
\]
Now we determine the value of $\delta$ that minimizes the expression
\[
    \gamma:= \max\left\{h(\delta) /3, h((1-2\delta)/3)\right\}.
\]
This happens at $\delta\approx 0.41069$ which yields $\gamma \leq 0.326$. Therefore:
\[
\begin{aligned}
    \eta(\cA) &= \left(\frac{c(\cA)}{|\cA|^2\cdot n!}\right)^{1/n}\\
    &\leq \left(\frac{|\cT| ((n/3)!)^2((n/6)!)^2}{|\cA|^2\cdot n!}\right)^{1/n}\\
    &=(|\cT|\cdot|\cA|^{-2})^{1/n}\cdot (1/3)^{2/3} (1/6)^{1/3}\\
    &= 2^\gamma \cdot (1/3)^{2/3} (1/6)^{1/3}\\
    &\leq 0.331643 \leq 1/3.015.
\end{aligned}
\]
\end{proof}
\subsection{Upper Bound on Bipartite Posets}\label{sec:ubBiPoset}
We begin this section by presenting a lower bound on the number of ideals of a bipartite poset.
\begin{lemma}\label{lem:LowerboundIdealsBipartiteRegular}
    Given a \(d\)-regular bipartite poset \(P\) on $2n$ elements,
    \[ \alpha(P)\ge \sum_{i=0}^n \binom{n}{i}2^{n\frac{\binom{n-d}{i}}{\binom{n}{i}}}.\]
\end{lemma}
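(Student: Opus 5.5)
We count ideals of $P$ by their intersection with the lower part. Let $X$ be the minimal side and $Y$ the maximal side, with $|X|=|Y|=n$. Each $y\in Y$ has exactly $d$ predecessors in $X$.

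An ideal is determined by a pair $(S,T)$ with $S\subseteq X$ and $T\subseteq Y$, subject to one condition: every $y\in T$ has all its $d$ predecessors in $S$. So for fixed $S$ the number of ideals with $X$-part $S$ is $2^{f(S)}$, where $f(S)$ is the number of $y\in Y$ with $N(y)\subseteq S$. Hence
\[
\alpha(P)=\sum_{i=0}^n \sum_{S\subseteq X,\,|S|=i} 2^{f(S)}.
\]

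Fix $i$ and apply Jensen's inequality (convexity of $t\mapsto 2^t$) to the inner sum over the $\binom{n}{i}$ sets $S$ of size $i$. This gives
\[
\sum_{|S|=i}2^{f(S)}\ \ge\ \binom{n}{i}\,2^{\mathbb{E}[f(S)]},
\]
with $S$ uniform among $i$-subsets of $X$. By linearity of expectation, $\mathbb{E}[f(S)]=\sum_{y\in Y}\Pr[N(y)\subseteq S]$.

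For a fixed $y$ with $|N(y)|=d$, we have $\Pr[N(y)\subseteq S]=\binom{n-d}{i-d}/\binom{n}{i}$. This does not yet have the form in the statement, whose exponent is $n\binom{n-d}{i}/\binom{n}{i}$, and that is the probability that $S$ avoids a fixed $d$-set. So the right bookkeeping is to parametrize by the complement. Let $U=X\setminus S$ with $|U|=i$. Then $y$ can be added exactly when $N(y)\cap U=\emptyset$, and $\Pr[N(y)\cap U=\emptyset]=\binom{n-d}{i}/\binom{n}{i}$. Summing over the $n$ elements of $Y$ gives $\mathbb{E}[f]=n\binom{n-d}{i}/\binom{n}{i}$. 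Reindexing the outer sum by $|U|=i$ (using $\binom{n}{i}=\binom{n}{n-i}$) yields exactly the claimed bound.

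The only subtlety is this matching of indices. Regularity is used only to make every $|N(y)|=d$ and $|X|=|Y|=n$; for $i>n-d$ the binomial $\binom{n-d}{i}$ is $0$, and the corresponding terms reduce to $\binom{n}{i}$, which is consistent.
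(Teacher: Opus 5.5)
Your proof is correct and follows the paper's argument essentially verbatim. Like the paper, you index ideals by the omitted part $X' = X\setminus I$ (your $U$), note that each such $X'$ gives $2^{|Y\setminus N(X')|}$ ideals, obtain $\sum_{|X'|=i}|Y\setminus N(X')| = n\binom{n-d}{i}$ (your linearity-of-expectation step), and apply convexity of $2^x$ separately for each $i$ (your first parametrization by $S$ was also fine, since $\binom{n-d}{i-d}/\binom{n}{i}$ becomes $\binom{n-d}{i}/\binom{n}{i}$ under $i\mapsto n-i$).
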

\begin{proof}
    Assume the bi-partition is \((X,Y)\), such that for every arc \((x,y)\) it holds that \(x\in X\) and \(y \in Y\). Note that since \(P\) is \(d\)-regular, we have \(|X|=|Y|=n\). Note that given a subset \(X'\) of \(X \) there are exactly \(2^{|Y\setminus N(X')|}\) ideals \(I\) of \(P\) such that \(X\setminus I=X'\).

    We first claim that \[\sum_{X'\subseteq X,|X'|=i}|Y\setminus N(X')|=n\binom{n-d}{i}.\] This holds as for a fixed \(y\in Y\) there are \(\binom{n-d}{i}\) subsets \(X' \subseteq X\) such that \(|X'|=i\) and \(y\notin N(X')\). Furthermore, 
    \[\alpha(P)=\sum_{X'\subseteq X}2^{n-|N(X')|}.\]
    As \(2^x\) is a convex function, then for a fixed \(i\):
    \[\sum_{X'\subseteq X,|X'|=i}2^{n-|N(X')|}\ge\binom{n}{i} 2^{n\frac{\binom{n-d}{i}}{\binom{n}{i}}},\]
    and hence the claim.
\end{proof}

Now we can show the main result of this subsection, which we first recall for convenience:
\bipreg*

\begin{proof}
    Let \(P\) be a \(d\)-regular bipartite poset on \(n\) elements.
    Authors in~\cite{Brightwell2023} showed that:
    \[\lambda(P) \le n!\binom{2d}{d}^{-n/2d}.\]
    Combining this with our lower bound on the number of ideals presented by Lemma~\ref{lem:LowerboundIdealsBipartiteRegular}, we get:
    \[\eta(P)=\left(\frac{\lambda(P)}{\alpha(P)^2 n!}\right)^{1/n}\le \left(\frac{n!\binom{2d}{d}^{-n/2d}}{\alpha(P)^2  n!}\right)^{1/n}\le \frac{\binom{2d}{d}^{-1/2d}}{\left(\sum_{i=0}^{n/2} \binom{n/2}{i}2^{n/2\cdot\frac{\binom{n/2-d}{i}}{\binom{n/2}{i}}}\right)^{2/n}}\]
   Observe that, in order to prove the lemma, we can assume \(n\) to be larger than \(n_0\), for any positive integer \(n_0\). This is based on the fact that by Lemma~\ref{lem:power_efficiency} (which also holds for posets) for every positive integer \(c\), \(P^c\) is also a \(d\)-regular poset on \(nc\) elements with \(\eta(P^c)=\eta(P)\).
    
    In the following claim we describe our desired lower bound for \(n\) which is sufficient to complete our argument (The proof of the claim appears in Appendix~\ref{sec:MissingProofs}).
    \begin{claim}\label{clm:LBOnn0}
        For any positive integer \(d\), and any constants \(q\in(0,1)\) and \(\varepsilon\in (0,\frac{1}{2})\), there exists \(n_0\) such that if \(n>n_0\), then \[\left(\sum_{i=0}^{n/2} \binom{n/2}{i}2^{n/2\cdot\frac{\binom{n/2-d}{i}}{\binom{n/2}{i}}}\right)^{2/n}\ge 2^{(h(q)+q^d)(1-\varepsilon)}.\]
    \end{claim}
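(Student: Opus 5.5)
The claim follows by keeping a single term of the sum and applying Stirling-type estimates. Write $m:=n/2$ and set $i:=\lfloor qm\rfloor$. All terms are positive, so
\[
\sum_{i'=0}^{m} \binom{m}{i'}2^{m\binom{m-d}{i'}/\binom{m}{i'}} \;\ge\; \binom{m}{i}\,2^{m\binom{m-d}{i}/\binom{m}{i}}.
\]
Raising to the power $2/n=1/m$, it suffices to show that for $m$ large
\[
\frac{1}{m}\log_2\binom{m}{i} \;+\; \frac{\binom{m-d}{i}}{\binom{m}{i}} \;\ge\; (h(q)+q^d)(1-\varepsilon).
\]
We bound the two summands separately.

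\textbf{First summand.} By the estimate $\binom{m}{pm}=2^{h(p)m+O(\log m)}$ recalled before the definition of $\cA_{m,\tau}$, together with continuity of $h$ (note $i/m\to q$), we get
\[
\frac{1}{m}\log_2\binom{m}{i}\ge h(q)-o(1).
\]

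\textbf{Second summand.} Expanding the ratio of binomials,
\[
\frac{\binom{m-d}{i}}{\binom{m}{i}}=\prod_{k=0}^{d-1}\frac{m-i-k}{m-k}.
\]
Since $d$ is fixed, each factor tends to $1-q$. Hence this term tends to $(1-q)^d$, not $q^d$.

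This mismatch is the main obstacle. The cleanest fix is to take $i:=\lfloor (1-q)m\rfloor$ instead. Then the ratio tends to $q^d$, and by the symmetry $h(1-q)=h(q)$ the first term is still $h(q)-o(1)$. So the left-hand side is $h(q)+q^d-o(1)$.

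Because $\varepsilon>0$ and $h(q)+q^d>0$ for $q\in(0,1)$, there is some $n_0$ beyond which $h(q)+q^d-o(1)\ge (h(q)+q^d)(1-\varepsilon)$. This proves the claim.

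Two technical points remain. First, $n$ may be odd or $m$ may not divide evenly; the floor handles $i$, and $m$ can be treated as $\lfloor n/2\rfloor$, or we can restrict to even $n$ via the power argument (Lemma~\ref{lem:power_efficiency}) already used in the surrounding proof. Second, the $o(1)$ terms must be made explicit enough to extract $n_0$. For this, use $\binom{m}{i}\ge 2^{h(i/m)m}/(m+1)$ and the uniform continuity of $h$. For the product, $\frac{m-i-k}{m-k}\ge q-\frac{d+1}{m-d}$, so the ratio is at least $\bigl(q-\frac{d+1}{m-d}\bigr)^d$ once $m$ is large enough that the base is positive.
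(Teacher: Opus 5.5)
Your proof is correct and takes essentially the same route as the paper: keep the single term with $i\approx(1-q)m$, use the symmetry $h(1-q)=h(q)$, and bound $\frac1m\log_2\binom{m}{i}$ and the ratio $\prod_{k=0}^{d-1}\frac{m-i-k}{m-k}$ separately. The differences are cosmetic. The paper substitutes $q\mapsto 1-q$, restricts to $q\le 1/2$, takes $i=\lceil mq\rceil$ and writes out an explicit $n_0$. Your choice $i=\lfloor(1-q)m\rfloor$, combined with continuity of $h$, instead covers every $q\in(0,1)$ directly and makes the direction of the ratio bound immediate.
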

    So we assume that \(n\) is at least as large as \(n_0\) described in Claim~\ref{clm:LBOnn0}.
    We only need to prove the following inequality
    \[\max_{0<q<1}2^{h(q)+q^d}\cdot{\binom{2d}{d}^{1/2d}}> 3.6.\] The proof appears in Appendix~\ref{sec:MissingProofs}.

\end{proof}

\section{Concluding Remarks}\label{sec:conc}
We improved the best space-time tradeoffs for permutation problems based on a new algorithm that can use any efficient poset (and even more general, set system) for a space and time efficient algorithm, by reducing the design of efficient dynamic programming algorithms to a clean problem in extremal combinatorics.

Note we are still very far from answering the ambitious Open Problem~\ref{op}\textbf{(b)}: Even our most general method via set systems can not directly lead to an $S$ space and $T$ time algorithm for any $S$ and $T$ satisfying $S\cdot T \leq O^{*}(3.015^{N})$ by Lemma~\ref{lem:imprb}. Nevertheless, we believe that our general connection between designing space/time efficient algorithms and questions in extremal combinatorics addressing set systems and their maximal chains may be of use here, simply since it seems to be key in the regime of low exponential space as well.

We believe Open Problem~\ref{opposet} and the question of set systems with optimal efficiency are much more tractable.
A first intuition may suggest that this is a typical isoperimetric problem and that the optimal set system (and its corresponding set of maximal chains) are those that are clustered in some sense. But, unfortunately, the most natural candidates of such set systems (i.e. Hamming balls or (co)-lex prefixes) do not seem very efficient.

As already observed in~\cite{JohnsonLR15} as a first concrete step in this direction, there exist set systems of optimal efficiency that are \emph{left-compressed}. A set system is left-compressed if $i < j$ and $A \in \cA$ such that $i \notin A$ but $j \in A$ then $A \setminus \{j\} \cup \{i\} \in \cA$. As a direction towards further research, we will strengthen this observation here:
Given two permutations $\pi,\pi'$ we say that $\pi \preceq \pi'$ if $\pi$ can be obtained from $\pi'$ by iteratively swapping values $\pi'_i$ and $\pi'_j$ at locations $i<j$ such that $\pi'_j < \pi'_i$. Since $\cA$ is left-compressed it follows that, if $\pi \preceq \pi'$ and $\pi'$ is a maximal chain of $\cA$, then so is $\pi$.

With $A = \{a_1,\ldots,a_\ell\} \subseteq [n]$ with $a_1 < \ldots < a_\ell$ and $[n] \setminus A = \{b_1,\ldots,b_{n-\ell}\}$ with $b_1 < \ldots < b_{n-\ell}$ we associate the permutation $\pi(A) = (a_1,\cdots,a_\ell,b_1,\cdots,b_{n-\ell})$. Intuitively, $\pi(A)$ is the unique minimal (under the relation $\preceq$) permutation that has a prefix with elements equal to $A$. Then the observation is that, since in a set system $\cA$ of optimal efficiency every set occurs in some maximal chain, there exists a set system $\cA^*$ of optimal efficiency such that if $\pi(A) \preceq \pi'(A')$ and $A \in \cA^*$ then $A'\in \cA^*$.
Note that this is a stronger property than $\cA^*$ being left-compressed (for example $\{4\} \in \cA^*$ now implies $\{1,2,4\} \in \cA^*$ since $1243 \preceq 4123$).

\section{AI Disclosure}
We used ChatGPT 5.4, ChatGPT 5.5, and Gemini Pro to assist with implementing code for our algorithms to compute the chain efficiencies of some specific posets. The tool materially affected Section~\ref{sec:con}, and a link to all the implemented codes and also additional details to understand the algorithms used in the codes are provided in Appendix~\ref{sec:AppendixImplementation}. All content, including the implemented code and references, has been reviewed and verified by the authors for accuracy and originality.

\bibliographystyle{abbrv}
\bibliography{ref}

@inproceedings{KoivistoP10,
	author       = {Mikko Koivisto and
		Pekka Parviainen},
	editor       = {Moses Charikar},
	title        = {A Space-Time Tradeoff for Permutation Problems},
	booktitle    = {Proceedings of the Twenty-First Annual {ACM-SIAM} Symposium on Discrete
		Algorithms, {SODA} 2010, Austin, Texas, USA, January 17-19, 2010},
	pages        = {484--492},
	publisher    = {{SIAM}},
	year         = {2010},
	url          = {https://doi.org/10.1137/1.9781611973075.41},
	doi          = {10.1137/1.9781611973075.41},
	bibsource    = {dblp computer science bibliography, https://dblp.org}
}

@article{Brightwell2023,
  author       = {Graham R. Brightwell and
                  Prasad Tetali},
  title        = {The Number of Linear Extensions of the {B}oolean Lattice},
  journal      = {Order},
  volume       = {20},
  number       = {4},
  pages        = {333--345},
  year         = {2003},
  url          = {https://doi.org/10.1023/B:ORDE.0000034596.50352.f7},
  doi          = {10.1023/B:ORDE.0000034596.50352.F7},
  bibsource    = {dblp computer science bibliography, https://dblp.org}
}

@article{Harper1966,
  author  = {Harper, L. H.},
  title   = {Optimal numberings and isoperimetric problems on graphs},
  journal = {Journal of Combinatorial Theory},
  volume  = {1},
  year    = {1966},
  pages   = {385--393}
}

@article{FranklFuredi1981,
  author  = {Frankl, P. and F{\"u}redi, Z.},
  title   = {A short proof for a theorem of {H}arper about {H}amming-spheres},
  journal = {Discrete Mathematics},
  volume  = {34},
  year    = {1981},
  pages   = {311--313}
}

@article{hunter2024disjoint,
  title={Disjoint pairs in set systems and combinatorics of low rank matrices},
  author={Hunter, Zach and Milojevi{\'c}, Aleksa and Sudakov, Benny and Tomon, Istv{\'a}n},
  journal={arXiv preprint arXiv:2411.13510},
  year={2024}
}

@article{JohnsonLR15,
  author       = {J. Robert Johnson and
                  Imre Leader and
                  Paul A. Russell},
  title        = {Set Systems Containing Many Maximal Chains},
  journal      = {Comb. Probab. Comput.},
  volume       = {24},
  number       = {3},
  pages        = {480--485},
  year         = {2015},
  doi          = {10.1017/S0963548314000510},
}

@inproceedings{kahn1992entropy,
  title={Entropy and sorting},
  author={Kahn, Jeff and Kim, Jeong Han},
  booktitle={Proceedings of the twenty-fourth annual ACM symposium on Theory of computing},
  pages={178--187},
  year={1992}
}

@article{DittmerP20,
  author       = {Samuel J. Dittmer and
                  Igor Pak},
  title        = {Counting Linear Extensions of Restricted Posets},
  journal      = {Electron. J. Comb.},
  volume       = {27},
  number       = {4},
  pages        = {4},
  year         = {2020},
  url          = {https://doi.org/10.37236/8552},
  doi          = {10.37236/8552},
  bibsource    = {dblp computer science bibliography, https://dblp.org}
}

@inproceedings{KangasHNK16,
  author       = {Kustaa Kangas and
                  Teemu Hankala and
                  Teppo Mikael Niinim{\"{a}}ki and
                  Mikko Koivisto},
  editor       = {Subbarao Kambhampati},
  title        = {Counting Linear Extensions of Sparse Posets},
  booktitle    = {Proceedings of the Twenty-Fifth International Joint Conference on
                  Artificial Intelligence, {IJCAI} 2016, New York, NY, USA, 9-15 July
                  2016},
  pages        = {603--609},
  publisher    = {{IJCAI/AAAI} Press},
  year         = {2016},
  url          = {http://www.ijcai.org/Abstract/16/092},
  bibsource    = {dblp computer science bibliography, https://dblp.org}
}

@article{GurevichS87,
  author       = {Yuri Gurevich and
                  Saharon Shelah},
  title        = {Expected Computation Time for {H}amiltonian Path Problem},
  journal      = {{SIAM} J. Comput.},
  volume       = {16},
  number       = {3},
  pages        = {486--502},
  year         = {1987},
  url          = {https://doi.org/10.1137/0216034},
  doi          = {10.1137/0216034},
  bibsource    = {dblp computer science bibliography, https://dblp.org}
}

@book{FominK10,
  author       = {Fedor V. Fomin and
                  Dieter Kratsch},
  title        = {Exact Exponential Algorithms},
  series       = {Texts in Theoretical Computer Science. An {EATCS} Series},
  publisher    = {Springer},
  year         = {2010},
  url          = {https://doi.org/10.1007/978-3-642-16533-7},
  doi          = {10.1007/978-3-642-16533-7},
  isbn         = {978-3-642-16532-0},
  bibsource    = {dblp computer science bibliography, https://dblp.org}
}

@inproceedings{NederlofW21,
  author       = {Jesper Nederlof and
                  Karol Wegrzycki},
  editor       = {Samir Khuller and
                  Virginia Vassilevska Williams},
  title        = {Improving {S}chroeppel and {S}hamir's algorithm for subset sum via orthogonal vectors},
  booktitle    = {{STOC} '21: 53rd Annual {ACM} {SIGACT} Symposium on Theory of Computing,
                  Virtual Event, Italy, June 21-25, 2021},
  pages        = {1670--1683},
  publisher    = {{ACM}},
  year         = {2021},
  url          = {https://doi.org/10.1145/3406325.3451024},
  doi          = {10.1145/3406325.3451024},
  bibsource    = {dblp computer science bibliography, https://dblp.org}
}

@inproceedings{AustrinKKM13,
  author       = {Per Austrin and
                  Petteri Kaski and
                  Mikko Koivisto and
                  Jussi M{\"{a}}{\"{a}}tt{\"{a}}},
  editor       = {Fedor V. Fomin and
                  Rusins Freivalds and
                  Marta Z. Kwiatkowska and
                  David Peleg},
  title        = {Space-Time Tradeoffs for {S}ubset {S}um: An Improved Worst Case Algorithm},
  booktitle    = {Automata, Languages, and Programming - 40th International Colloquium,
                  {ICALP} 2013, Riga, Latvia, July 8-12, 2013, Proceedings, Part {I}},
  series       = {Lecture Notes in Computer Science},
  pages        = {45--56},
  publisher    = {Springer},
  year         = {2013},
  url          = {https://doi.org/10.1007/978-3-642-39206-1\_5},
  doi          = {10.1007/978-3-642-39206-1\_5},
  bibsource    = {dblp computer science bibliography, https://dblp.org}
}

@article{BansalGN018,
  author       = {Nikhil Bansal and
                  Shashwat Garg and
                  Jesper Nederlof and
                  Nikhil Vyas},
  title        = {Faster Space-Efficient Algorithms for {S}ubset {S}um, $k$-{S}um, and Related Problems},
  journal      = {{SIAM} J. Comput.},
  volume       = {47},
  number       = {5},
  pages        = {1755--1777},
  year         = {2018},
  url          = {https://doi.org/10.1137/17M1158203},
  doi          = {10.1137/17M1158203},
  bibsource    = {dblp computer science bibliography, https://dblp.org}
}

@inproceedings{DinurDKS12,
  author       = {Itai Dinur and
                  Orr Dunkelman and
                  Nathan Keller and
                  Adi Shamir},
  editor       = {Reihaneh Safavi{-}Naini and
                  Ran Canetti},
  title        = {Efficient Dissection of Composite Problems, with Applications to Cryptanalysis,
                  Knapsacks, and Combinatorial Search Problems},
  booktitle    = {Advances in Cryptology - {CRYPTO} 2012 - 32nd Annual Cryptology Conference,
                  Santa Barbara, CA, USA, August 19-23, 2012. Proceedings},
  series       = {Lecture Notes in Computer Science},
  pages        = {719--740},
  publisher    = {Springer},
  year         = {2012},
  url          = {https://doi.org/10.1007/978-3-642-32009-5\_42},
  doi          = {10.1007/978-3-642-32009-5\_42},
  bibsource    = {dblp computer science bibliography, https://dblp.org}
}

@article{Nederlof26,
  author       = {Jesper Nederlof},
  title        = {An invitation to "Fine-grained complexity of {NP} -complete problems"},
  journal      = {Comput. Sci. Rev.},
  volume       = {61},
  pages        = {100919},
  year         = {2026},
  url          = {https://doi.org/10.1016/j.cosrev.2026.100919},
  doi          = {10.1016/J.COSREV.2026.100919},
  bibsource    = {dblp computer science bibliography, https://dblp.org}
}

@inproceedings{Woeginger04,
  author       = {Gerhard J. Woeginger},
  editor       = {Rodney G. Downey and
                  Michael R. Fellows and
                  Frank K. H. A. Dehne},
  title        = {Space and Time Complexity of Exact Algorithms: Some Open Problems (Invited Talk)},
  booktitle    = {Parameterized and Exact Computation, First International Workshop,
                  {IWPEC} 2004, Bergen, Norway, September 14-17, 2004, Proceedings},
  series       = {Lecture Notes in Computer Science},
  pages        = {281--290},
  publisher    = {Springer},
  year         = {2004},
  url          = {https://doi.org/10.1007/978-3-540-28639-4\_25},
  doi          = {10.1007/978-3-540-28639-4\_25},
  bibsource    = {dblp computer science bibliography, https://dblp.org}
}

@article{FominK13,
  author       = {Fedor V. Fomin and
                  Petteri Kaski},
  title        = {Exact exponential algorithms},
  journal      = {Commun. {ACM}},
  volume       = {56},
  number       = {3},
  pages        = {80--88},
  year         = {2013},
  url          = {https://doi.org/10.1145/2428556.2428575},
  doi          = {10.1145/2428556.2428575},
  bibsource    = {dblp computer science bibliography, https://dblp.org}
}

@article{Bellman62,
  author       = {Richard Bellman},
  title        = {Dynamic Programming Treatment of the {T}ravelling {S}alesman {P}roblem},
  journal      = {J. {ACM}},
  volume       = {9},
  number       = {1},
  pages        = {61--63},
  year         = {1962},
  url          = {https://doi.org/10.1145/321105.321111},
  doi          = {10.1145/321105.321111},
  bibsource    = {dblp computer science bibliography, https://dblp.org}
}

@inproceedings{HeldK61,
  author       = {Michael Held and
                  Richard M. Karp},
  editor       = {Thomas C. Rowan},
  title        = {A dynamic programming approach to sequencing problems},
  booktitle    = {Proceedings of the 16th {ACM} national meeting, {ACM} 1961, {USA}},
  pages        = {71},
  publisher    = {{ACM}},
  year         = {1961},
  url          = {https://doi.org/10.1145/800029.808532},
  doi          = {10.1145/800029.808532},
  bibsource    = {dblp computer science bibliography, https://dblp.org}
}

@article{HorowitzS74,
  author       = {Ellis Horowitz and
                  Sartaj Sahni},
  title        = {Computing Partitions with Applications to the {K}napsack Problem},
  journal      = {J. {ACM}},
  volume       = {21},
  number       = {2},
  pages        = {277--292},
  year         = {1974},
  url          = {https://doi.org/10.1145/321812.321823},
  doi          = {10.1145/321812.321823},
  bibsource    = {dblp computer science bibliography, https://dblp.org}
}

@article{SchroeppelS81,
  author       = {Richard Schroeppel and
                  Adi Shamir},
  title        = {A ${T}={O}(2^{n/2})$, ${S}={O}(2^{n/4})$ Algorithm for Certain {NP}-Complete Problems},
  journal      = {{SIAM} J. Comput.},
  volume       = {10},
  number       = {3},
  pages        = {456--464},
  year         = {1981},
  url          = {https://doi.org/10.1137/0210033},
  doi          = {10.1137/0210033},
  bibsource    = {dblp computer science bibliography, https://dblp.org}
}

@article{Karp82,
  author       = {Richard M. Karp},
  title        = {Dynamic programming meets the principle of inclusion and exclusion},
  journal      = {Oper. Res. Lett.},
  volume       = {1},
  number       = {2},
  pages        = {49--51},
  year         = {1982},
  url          = {https://doi.org/10.1016/0167-6377(82)90044-X},
  doi          = {10.1016/0167-6377(82)90044-X},
  bibsource    = {dblp computer science bibliography, https://dblp.org}
}

@article{Woeginger08,
  author       = {Gerhard J. Woeginger},
  title        = {Open problems around exact algorithms},
  journal      = {Discret. Appl. Math.},
  volume       = {156},
  number       = {3},
  pages        = {397--405},
  year         = {2008},
  url          = {https://doi.org/10.1016/j.dam.2007.03.023},
  doi          = {10.1016/J.DAM.2007.03.023},
  bibsource    = {dblp computer science bibliography, https://dblp.org}
}

@inproceedings{BelovaCKM24,
  author       = {Tatiana Belova and
                  Nikolai Chukhin and
                  Alexander S. Kulikov and
                  Ivan Mihajlin},
  editor       = {Timothy M. Chan and
                  Johannes Fischer and
                  John Iacono and
                  Grzegorz Herman},
  title        = {Improved Space Bounds for {S}ubset {S}um},
  booktitle    = {32nd Annual European Symposium on Algorithms, {ESA} 2024, Royal Holloway,
                  London, United Kingdom, September 2-4, 2024},
  series       = {LIPIcs},
  pages        = {21:1--21:17},
  publisher    = {Schloss Dagstuhl - Leibniz-Zentrum f{\"{u}}r Informatik},
  year         = {2024},
  url          = {https://doi.org/10.4230/LIPIcs.ESA.2024.21},
  doi          = {10.4230/LIPICS.ESA.2024.21},
  bibsource    = {dblp computer science bibliography, https://dblp.org}
}

@article{dividecolor,
author = {Chen, Jianer and Kneis, Joachim and Lu, Songjian and M\"{o}lle, Daniel and Richter, Stefan and Rossmanith, Peter and Sze, Sing-Hoi and Zhang, Fenghui},
title = {Randomized {D}ivide-and-{C}onquer: Improved Path, Matching, and Packing Algorithms},
journal = {SIAM Journal on Computing},
volume = {38},
number = {6},
pages = {2526-2547},
year = {2009},
doi = {10.1137/080716475},
URL = {https://doi.org/10.1137/080716475},
eprint = { https://doi.org/10.1137/080716475}
}

@article{Bjorklund14,
  author       = {Andreas Bj{\"{o}}rklund},
  title        = {Determinant Sums for Undirected {H}amiltonicity},
  journal      = {{SIAM} J. Comput.},
  volume       = {43},
  number       = {1},
  pages        = {280--299},
  year         = {2014},
  url          = {https://doi.org/10.1137/110839229},
  doi          = {10.1137/110839229},
  bibsource    = {dblp computer science bibliography, https://dblp.org}
}

@inproceedings{Nederlof20,
  author       = {Jesper Nederlof},
  editor       = {Konstantin Makarychev and
                  Yury Makarychev and
                  Madhur Tulsiani and
                  Gautam Kamath and
                  Julia Chuzhoy},
  title        = {Bipartite {TSP} in ${O}(1.9999^n)$ time, assuming quadratic
                  time matrix multiplication},
  booktitle    = {Proceedings of the 52nd Annual {ACM} {SIGACT} Symposium on Theory
                  of Computing, {STOC} 2020, Chicago, IL, USA, June 22-26, 2020},
  pages        = {40--53},
  publisher    = {{ACM}},
  year         = {2020},
  url          = {https://doi.org/10.1145/3357713.3384264},
  doi          = {10.1145/3357713.3384264},
  bibsource    = {dblp computer science bibliography, https://dblp.org}
}

@inproceedings{Johnson73,
  author       = {David S. Johnson},
  editor       = {Alfred V. Aho and
                  Allan Borodin and
                  Robert L. Constable and
                  Robert W. Floyd and
                  Michael A. Harrison and
                  Richard M. Karp and
                  H. Raymond Strong},
  title        = {Approximation Algorithms for Combinatorial Problems},
  booktitle    = {Proceedings of the 5th Annual {ACM} Symposium on Theory of Computing,
                  April 30 - May 2, 1973, Austin, Texas, {USA}},
  pages        = {38--49},
  publisher    = {{ACM}},
  year         = {1973},
  url          = {https://doi.org/10.1145/800125.804034},
  doi          = {10.1145/800125.804034},
  bibsource    = {dblp computer science bibliography, https://dblp.org}
}

@article{AlonF85,
  author       = {Noga Alon and
                  Peter Frankl},
  title        = {The maximum number of disjoint pairs in a family of subsets},
  journal      = {Graphs Comb.},
  volume       = {1},
  number       = {1},
  pages        = {13--21},
  year         = {1985},
  url          = {https://doi.org/10.1007/BF02582924},
  doi          = {10.1007/BF02582924},
  bibsource    = {dblp computer science bibliography, https://dblp.org}
}

@article{TalvitieK24,
  author       = {Topi Talvitie and
                  Mikko Koivisto},
  title        = {Approximate Counting of Linear Extensions in Practice},
  journal      = {J. Artif. Intell. Res.},
  volume       = {81},
  pages        = {643--681},
  year         = {2024},
  url          = {https://doi.org/10.1613/jair.1.16374},
  doi          = {10.1613/JAIR.1.16374},
  bibsource    = {dblp computer science bibliography, https://dblp.org}
}

@article{Dallant26,
  author       = {Justin Dallant and
                  L{\'{a}}szl{\'{o}} Kozma},
  title        = {Improved space-time tradeoff for {TSP} via extremal set systems},
  journal      = {CoRR},
  volume       = {abs/2604.05645},
  year         = {2026}
}
\appendix

\section{\TSP in $O^*(4^N)$ time and polynomial space}\label{sec:gs}
The algorithm is a (small) adjustment of the $4^N N^{O(\log N)}$ time algorithm of~\cite{GurevichS87} (see also e.g.~\cite[Section 10.1]{FominK10}) and seems to be folklore. The adjustment that avoids the $N^{O(\log N)}$ term in the running time is to simultaneously compute minimal lengths of paths between all pairs of end points. We assume the \TSP instance is given by a weight function $w$ that outputs for each pair of elements $i,j \in \{1,\ldots,N\}$ a weight $w(i,j)$. Note that the idea presented here was already present in previous work on the \textsc{$k$-Path} problem~\cite{dividecolor}. The pseudocode of the algorithm is as follows:

\begin{algorithm}	
  \caption{TSP in $4^N$ time and $N^{O(1)}$ space.}
  \label{alg:ls}
	\begin{algorithmic}[1]
    \REQUIRE $\mathtt{TSP}(X)$ \hfill\algcomment{$X$ is a subset of cities}
    \ENSURE A table $L$ that stores, for each $s,t \in X$, the minimum length of a $st$-path visiting $X$.
        \IF{$|X| \leq 3$}
            \FOR{distinct $s,t \in X$}
                \STATE Compute and store $L[s,t]$ with brute-force
            \ENDFOR 
        \ENDIF
		\STATE $L[s,t]=\infty$
		\FOR{$Y \subseteq X$, $|Y|=\lceil |X|/2 \rceil $}
			\STATE $L_l = \mathtt{TSP}(Y)$
            \STATE $L_r = \mathtt{TSP}(X \setminus Y)$
            \FOR{distinct $s,u \in Y$ and distinct $v,t \in X \setminus Y$}
                \STATE $L[s,t] = \min \{ L[s,t], L_l[s,u]+w(u,v)+L_r[v,t]\}$
            \ENDFOR
		\ENDFOR
		\STATE \algorithmicreturn \enspace$L$
	\end{algorithmic}
\end{algorithm}

\noindent It is clear that the algorithm is correct since any path from $s$ to $t$ that visits $X$ can be decomposed into a path from $s$ to $u$ that visits $Y \subseteq X$, an edge $(u,v)$ and a path from $v$ to $t$ that visits $X \setminus Y$.

The algorithm clearly uses only polynomial space, and if $T(N)$ is the number of recursive calls of $\mathtt{TSP}(X)$ when $|X|=N$, then we have the recurrence
\[
    T(N) \leq
    \begin{cases}
     1,& \text{if $N \leq 3$}\\
    2 \cdot 2^{N} \cdot T(\lceil N/2\rceil), & \text{otherwise.}\\
    \end{cases}
\]
And it is easily seen that $T(N) \leq O(4^{N})$. Hence Algorithm $\mathtt{TSP}(\{1,\ldots,N\})$ runs in time $O^*(4^N)$.

\section{Omitted Proofs towards Lemma~\ref{lem:bipreg}}\label{sec:MissingProofs}

\begin{proof}[Proof of Claim~\ref{clm:LBOnn0}]

    Note that as \(h(p)=h(1-p)\) then we instead aim for proving that for any positive integer \(d\), and fixed constants \(q\in(0,\frac{1}{2}]\) and \(\varepsilon\in (0,\frac{1}{2})\), there exists \(n_0\) such that if \(n>n_0\), then \[\left(\sum_{i=0}^{n/2} \binom{n/2}{i}2^{n/2\cdot\frac{\binom{n/2-d}{i}}{\binom{n/2}{i}}}\right)^{2/n}\ge 2^{(h(q)+(1-q)^d)(1-\varepsilon)}.\]
    For notational simplicity let \(m=n/2\). Let
    \[m_0:=\max\left\{\frac{d+1}{1-q},\frac{d^2q}{\varepsilon(1-q)},(1/(\varepsilon+h(q)(1-\varepsilon/2)-1))^2+2,\frac{4}{2^{\varepsilon/2}-1}\right\}.\] We show that it is sufficient that if \(m>m_0\) our inequality holds. In order to do this we first prove the following useful claim. Here we use \(\lceil x \rceil\) to denote the ceiling of \(x\).

    \begin{claim}\label{clm:helper}
        For \(m>m_0\) the following inequalities hold:
        \[\binom{m-d}{\lceil mq\rceil}/\binom{m}{\lceil mq\rceil}\ge (1-q)^d(1-\varepsilon),\]
     \[m(1-\varepsilon)h(q)\le \log_2 \binom{m}{\lceil mq\rceil}.\]
    \end{claim}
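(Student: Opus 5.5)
The plan is to bound both quantities directly, with $k:=\lceil mq\rceil$, where $q\in(0,\tfrac12]$ by the reduction made just before this claim and $m>m_0$. The constraints in $m_0$ exist so that each error term below is at most an $\varepsilon$-fraction. I would do the binomial ratio first because it is elementary, and then the entropy estimate.

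\textbf{Step 1: the ratio of binomials.} Write the ratio as a telescoping product,
\[
\frac{\binom{m-d}{k}}{\binom{m}{k}}=\prod_{t=0}^{d-1}\frac{m-k-t}{m-t}\;\ge\;\Bigl(\frac{m-k-d+1}{m}\Bigr)^{d}.
\]
Since $k\le mq+1$, each factor is at least $1-q-\frac{d}{m}=(1-q)\bigl(1-\frac{d}{(1-q)m}\bigr)$. The condition $m>\frac{d+1}{1-q}$ makes this quantity positive and ensures $\frac{d}{(1-q)m}<1$. Bernoulli's inequality then gives
\[
\frac{\binom{m-d}{k}}{\binom{m}{k}}\;\ge\;(1-q)^d\Bigl(1-\frac{d^2}{(1-q)m}\Bigr).
\]
It remains to make $\frac{d^2}{(1-q)m}\le\varepsilon$. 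This is where the term $\frac{d^2q}{\varepsilon(1-q)}$ in $m_0$ is meant to enter. I expect this step to be the main obstacle, because that term is smaller than $\frac{d^2}{\varepsilon(1-q)}$ by a factor $q$, so it does not obviously suffice. I would first try to account for the missing factor $q$ with a sharper estimate of the factors. If that fails, I would enlarge $m_0$ to include $\frac{d^2}{\varepsilon(1-q)}$. This is harmless for the application in the proof of Lemma~\ref{lem:bipreg}, since only the existence of some $n_0$ is needed.

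\textbf{Step 2: the entropy bound.} Use the standard estimate
\[
\binom{m}{k}\ge\frac{2^{m\,h(k/m)}}{m+1}.
\]
If $k/m\le\tfrac12$, then $q\le k/m\le\tfrac12$ and $h$ is increasing on $[0,\tfrac12]$, so $h(k/m)\ge h(q)$. If instead $k/m>\tfrac12$, which can only happen when $q$ is within $1/m$ of $\tfrac12$, then $k/m\le\tfrac12+\tfrac1m$. In that case $h(k/m)\ge h(\tfrac12-\tfrac1m)$, which lies within $O(1/m^2)$ of $1\ge h(q)$. In both cases this yields
\[
\log_2\binom{m}{k}\ge m\,h(q)-O(\log m)-O(1/m).
\]

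\textbf{Step 3: absorbing the errors.} It remains to check that $\log_2(m+1)$ plus the small correction is at most $\varepsilon m h(q)$. The terms $(1/(\varepsilon+h(q)(1-\varepsilon/2)-1))^2+2$ and $\frac{4}{2^{\varepsilon/2}-1}$ in $m_0$ appear designed for exactly this: $\sqrt{m}$ dominates $\log_2(m+1)$, and $2^{\varepsilon/2}$ bounds the factor $(m+1)^{1/m}$. I would verify this by dividing by $m$ and comparing $\frac{\log_2(m+1)}{m}$ with $\varepsilon h(q)$ using these thresholds. If the stated thresholds turn out to be slightly off, the fallback is again to enlarge $m_0$, since $\frac{\log_2(m+1)}{m}\to0$ while $\varepsilon h(q)>0$ is a fixed constant.
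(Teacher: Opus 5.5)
Your route is the paper's route: a factor-by-factor lower bound on $\prod_{t=0}^{d-1}\frac{m-k-t}{m-t}$ followed by Bernoulli, and the entropy lower bound on $\binom{m}{k}$ with a case split on whether $\lceil mq\rceil\le m/2$. Your individual steps are correct, and your Step~1 is more careful than the paper's. The paper writes $\bigl(1-\frac{\lceil mq\rceil}{m-d}\bigr)^d>\bigl(1-\frac{mq}{m-d}\bigr)^d=\bigl(1-q-\frac{dq}{m}\bigr)^d$. Here the inequality goes the wrong way, because $\lceil mq\rceil\ge mq$. The equality is also false, since $\frac{mq}{m-d}=q+\frac{dq}{m-d}$.

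The one thing to change is your hedging. Enlarging $m_0$ is not a fallback but a necessity: with the paper's $m_0$ both inequalities are false, so no sharper estimate can rescue them.

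For the first inequality, take $d=20$, $q=10^{-6}$, $\varepsilon=0.49$. Then $m_0\approx 21.6$, attained by $\frac{4}{2^{\varepsilon/2}-1}$. At $m=22$ we get $\lceil mq\rceil=1$, and the ratio is $\frac{2}{22}$, far below $(1-q)^d(1-\varepsilon)\approx 0.51$. The additive $1$ in $\lceil mq\rceil\le mq+1$ costs a factor of roughly $1-\frac{d}{m}$, so one needs $m\gtrsim d/\varepsilon$. The term $\frac{d^2q}{\varepsilon(1-q)}$ does not provide this. A careful version of your Step~1 gives the sufficient condition $m-d\ge\frac{d(dq+1)}{\varepsilon(1-q)}$, but its $q$-free part cannot be dropped.

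For the second inequality, take $d=1$, $q=\varepsilon=0.01$. Then $m_0\approx 1152$, yet $\log_2\binom{1200}{12}\approx 93.8<95.98\approx 1200(1-\varepsilon)h(q)$. The term $(1/(\varepsilon+h(q)(1-\varepsilon/2)-1))^2+2$ stays bounded as $h(q)\to 0$, so it cannot force $\log_2(m+1)\le\varepsilon m h(q)$. Also, the paper uses $\frac{4}{2^{\varepsilon/2}-1}$ only in the odd case $\lceil mq\rceil=\frac{m+1}{2}$, to get $\log_2\frac{m+1}{m}<\frac{\varepsilon}{2}$; it is not what absorbs $\log_2(m+1)$.

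So state your thresholds outright, for example:
$m>\frac{d+1}{1-q}$, $m\ge\frac{d^2}{\varepsilon(1-q)}$, and $\log_2(m+1)+\frac{4}{m}\le\varepsilon m h(q)$.
The $\frac{4}{m}$ covers your second case via $h(p)\ge 4p(1-p)$. With these thresholds your proof is complete. As you note, this suffices for Claim~\ref{clm:LBOnn0} and Lemma~\ref{lem:bipreg}, which only need the existence of some $n_0$.
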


    \begin{proof}
        We begin by proving the first inequality. Note that as \(m>\frac{d+1}{1-q}\) we have \(m-d>\lceil mq \rceil\). Now, \[\binom{m-d}{\lceil mq\rceil}/\binom{m}{\lceil mq\rceil}\ge \left(\frac{m-d-\lceil mq\rceil}{m-d}\right)^d=\left(1-\frac{\lceil mq \rceil}{m-d}\right)^d\]
        \[>\left(1-\frac{mq}{m-d}\right)^d=\left(1-q-\frac{dq}{m}\right)^d\]
        As \(m>\frac{dq}{(1-q)\varepsilon}\), then 
        \[\left(1-q-\frac{dq}{m}\right)^d>\left((1-q)(1-\frac{\varepsilon}{d})\right)^d\ge (1-q)^d(1-\varepsilon).\]
    
        Now we prove the second inequality. As \(m\) is an integer, then
        \[\binom{m}{\lceil mq\rceil}>\frac{2^{m\cdot h(\lceil mq \rceil/m)}}{m}.\]
        Since \(q\in(0,\frac{1}{2}]\)
        either we have \(\lceil mq\rceil \le m/2\) and then
        \[ h(\lceil mq\rceil/m)> h(q),\]
        or otherwise \(\lceil mq\rceil=\frac{m+1}{2}\) and then
        \[h\left(\lceil mq\rceil/m\right)=h\left(\frac{m+1}{2m}\right)>\log_2\frac{2m}{m+1}=1+\log_2\frac{m}{m+1}>1-\frac{\varepsilon}{2}= h(\frac{1}{2})(1-\varepsilon/2),\]
        where in the last inequality we used the fact that \(m>\frac{4}{-1+2^{\varepsilon/2}}\).
        Therefore in both cases
        \[h(\lceil mq \rceil/m)>h(q)(1-\varepsilon/2).\]
        Now as \(2^{m\cdot h(\lceil mq \rceil/m)}/m=2^{m\cdot h(\lceil mq \rceil/m)-\log_2 m}\) and \(m\ge (1/(\varepsilon+h(q)(1-\varepsilon/2)-1))^2+2\) we have \[m\cdot h(q)(1-\varepsilon/2)-\log_2m>m\cdot h(q)(1-\varepsilon),\]
        and thus 
        \[m(1-\varepsilon)h(q)\le \log_2 \binom{m}{\lceil mq\rceil}.\]
        
    \end{proof}

    Using Claim~\ref{clm:helper}, by setting \(i=\lceil mq\rceil\), we get
    \[\binom{m}{i}2^{m\cdot\frac{\binom{m-d}{i}}{\binom{m}{i}}}\ge 2^{m\cdot (h(q)+(1-q)^d)(1-\varepsilon)}.\]

\end{proof}

\begin{claim}
    For every positive integer \(d\),
    \[\max_{0<q<1}(2^{h(q)+q^d})\cdot{\binom{2d}{d}^{1/2d}}> 3.6.\]
\end{claim}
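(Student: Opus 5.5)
We need to show that for every positive integer $d$ there is a $q\in(0,1)$ with $2^{h(q)+q^d}\binom{2d}{d}^{1/2d}>3.6$. The plan is to handle small $d$ by explicit numerical choices of $q$, and large $d$ by an asymptotic argument with an explicit threshold. The key tension is that $\binom{2d}{d}^{1/2d}$ increases to $2$, while $\max_q 2^{h(q)+q^d}$ decreases to $2$. Their product therefore tends to $4$, but for small $d$ neither factor is near its limit.

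\textbf{Large $d$.} I would take $q=1/2$, so that $h(q)=1$ and the product is at least $2\cdot 2^{2^{-d}}\cdot\binom{2d}{d}^{1/2d}$. Then I use the standard bound $\binom{2d}{d}\ge 4^d/(2\sqrt{d})$. This gives $\binom{2d}{d}^{1/2d}\ge 2\,(2\sqrt d)^{-1/2d}$, and hence the product is at least $4\,(2\sqrt d)^{-1/(2d)}$. This exceeds $3.6$ once $(2\sqrt d)^{1/(2d)}<10/9$, i.e.\ once $\ln(2\sqrt d)<2d\ln(10/9)\approx 0.21\,d$. That inequality already holds for $d\ge 8$ or so: at $d=8$ the left side is $\ln(2\sqrt 8)\approx 1.73<1.69$? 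This is borderline, so I would instead take the threshold $d\ge 10$. There $\ln(2\sqrt{10})\approx1.84<2.1$, and monotonicity in $d$ (the left side grows logarithmically, the right side linearly) finishes this case.

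\textbf{Small $d$.} For $1\le d\le 9$, I would exhibit a concrete $q$ and check numerically. Larger $q$ helps the $q^d$ term, so I would take $q$ somewhat above $1/2$. For $d=1$, take $q=1/2$: this gives $2^{1.5}\cdot 2^{1/2}=4>3.6$. For $d=2$, $\binom{4}{2}^{1/4}\approx1.565$; at $q=0.6$ we get $h\approx0.971$ and $q^2=0.36$, so $2^{1.331}\approx2.516$ and the product is $\approx3.94$. For $d=3$, $\binom{6}{3}^{1/6}\approx1.648$; at $q=0.6$ we get $2^{0.971+0.216}\approx2.277$, and the product is $\approx3.75$. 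For $d=4,\dots,9$, $\binom{2d}{d}^{1/2d}$ equals roughly $1.70, 1.74, 1.77, 1.79, 1.81, 1.83$, while $q=1/2$ already gives at least $2^{1+2^{-d}}$, i.e.\ $2.18, 2.09, 2.04, 2.02, 2.01, 2.004$. The products are about $3.71, 3.64, 3.61, 3.62, 3.64, 3.66$. The tight cases are $d=5,6$, where I would improve $q$ slightly (for example $q\approx0.55$) to gain margin, and verify each value to three decimals.

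\textbf{Main obstacle.} The main difficulty is the narrow margin around $d=5,6,7$: there the product is only barely above $3.6$, so the numerics must be done carefully. The cleanest fix is to optimize $q$ per $d$ and report rigorous lower bounds (rounding $\binom{2d}{d}^{1/2d}$ and $2^{h(q)+q^d}$ down). The crossover threshold between the small-$d$ and large-$d$ regimes should also be chosen so that the asymptotic bound is comfortably valid.
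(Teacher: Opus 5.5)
There is a genuine gap: your argument fails for $d=4,5,6,7$. Your overall structure is the same as the paper's: explicit choices of $q$ for small $d$, and $q=1/2$ with a binomial estimate for large $d$. The large-$d$ part via $\binom{2d}{d}\ge 4^d/(2\sqrt d)$ is correct, and it even avoids the monotonicity of $\binom{2d}{d}^{1/2d}$ that the paper invokes. Your cases $d=1,2,3$ are also fine.

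\textbf{The numerical table for $d=4,\dots,7$ is wrong.} Your values of $2^{1+2^{-d}}$ are shifted by one index: $2.18$ is $2^{1+2^{-3}}$, not $2^{1+2^{-4}}$. The correct values are $2^{1+2^{-4}}\approx 2.089$, $2^{1+2^{-5}}\approx 2.044$, $2^{1+2^{-6}}\approx 2.022$ and $2^{1+2^{-7}}\approx 2.011$. With these, $q=1/2$ gives products $\approx 3.552,\ 3.553,\ 3.572,\ 3.597$ for $d=4,5,6,7$, all below $3.6$. Only $d=8,9$ survive at $q=1/2$, with $\approx 3.623$ and $3.648$.

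\textbf{Nudging $q$ slightly above $1/2$ does not repair this.} At $q=0.55$ you get $\approx 3.582$ for $d=5$ and $\approx 3.584$ for $d=6$. For $d=5$, even the local maximum of $h(q)+q^5$ near $q\approx 0.65$ gives only $\approx 3.5996$. The case $d=4$ alone is rescued this way, e.g.\ $q=0.6$ gives $\approx 3.647$.

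\textbf{The missing idea.} For moderate $d$, the maximum of $h(q)+q^d$ that matters sits at $q$ close to $1$, not near $1/2$. Write $q=1-\epsilon$; then $h(q)\approx \epsilon\log_2(1/\epsilon)+\epsilon\log_2 e$ while $q^d\approx 1-d\epsilon$. Hence $\epsilon$ of order $2^{-d}$ makes $h(q)+q^d-1$ a constant factor larger than the $2^{-d}$ you get at $q=1/2$. The entropy term loses only about $\epsilon\log_2(1/\epsilon)$, which roughly cancels the linear loss $d\epsilon$ in $q^d$.

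This is what the paper uses:
\begin{itemize}
\item $q=7/8$ for all $d\le 5$ (e.g.\ $\approx 3.615$ at $d=5$);
\item $q\approx 0.975$ for $d=6$ ($\approx 3.602$);
\item $q=0.99$ for $d=7$ ($\approx 3.609$);
\item $q=1/2$ only for $d\ge 8$.
\end{itemize}
The case $d=6$ stays extremely tight even at the optimal $q$, with a margin of about $0.002$. Already $\epsilon=2^{-6}$ gives only $\approx 3.597$ there. So for $d=6$ you must optimize $q$ carefully and round rigorously.
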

\begin{proof}
    Note that since \(\binom{2d}{d}^{1/2d}\)is an increasing function for \(d\ge 1\) and since \(\binom{16}{8}^{1/16}\ge 1.807\), then for any \(d\ge 8\),
    \[(2^{h({1/2})+{1/2}^d}) \cdot{\binom{2d}{d}^{1/2d}}\ge  2\cdot 1.807 > 3.61.\]

    For \(d\le 5\), we set $q=7/8$ to derive
    \[\binom{2d}{d}^{1/2d}\cdot2^{(7/8)^d+h({7/8})}=\binom{2d}{d}^{1/2d}\cdot8\cdot 7^{-7/8}\cdot 2^{(7/8)^d}>\binom{2d}{d}^{1/2d}\cdot1.457569\cdot 2^{(7/8)^d}>3.6.\]
   
     For \(d= 6\) and \(d=7\), setting \(q\) to \( 0.9750364898053781\) and \(0.99\) respectively yields\[\binom{2d}{d}^{1/2d}\cdot 2^{h(q)+q^d}>3.6.\]

\end{proof}

\section{Computation of the Number of Ideals and linear Extensions}\label{sec:AppendixImplementation}

In order to compute the number of ideals and linear extensions of the posets described in Section~\ref{sec:con}, we implemented the corresponding algorithms in code. The full implementation is available (anonymously, e.g., via an incognito browser) at this link.\footnote{
\href{https://drive.google.com/drive/folders/1d03qB7IYZbs6C21Z1fqsOCWyv6L_Kdq0?usp=drive_link}{https://drive.google.com/drive/folders/1d03qB7IYZbs6C21Z1fqsOCWyv6L{\textunderscore}Kdq0?usp=drive{\textunderscore}link}}

For the sake of completeness, and to assist the reader in understanding the implementation, we provide in this appendix a high-level description of the algorithms used in the code. 

In the following subsections we discuss the algorithm we used to efficiently compute the number of ideals and linear extensions of \(P\), where \(P\) is a regular bipartite poset as described in Section~\ref{sec:con}.

Let
\[
P=P_{n,D}
\]
be a bipartite poset with bipartition \(X,Y\) where
\[
    X=\{x_0,\dots,x_{n-1}\}\quad  \text{ and } \quad Y=\{y_0,\dots,y_{n-1}\}.
\]
Also \((x_i,y_j)\) is an arc if and only if
\[
(i - j)\bmod n\in D.
\]

\subsection{Counting ideals.}
Because \(P\) is a bipartite poset, every ideal is determined by a subset
\(X'\subseteq X\) together with an arbitrary subset \(Y' \subseteq Y\setminus N(X')\).
As discussed earlier in Section~\ref{sec:ubBiPoset} there are exactly \(2^{|Y\setminus N(X')|}\) ideals \(I\) that have \(X\setminus I=X'\), and hence
\[
\alpha(P)= \sum_{X' \subseteq X}2^{|Y\setminus N(X')|}.
\]
This gives a direct exact algorithm exponential in \(n\).

For larger instances, we use a dynamic programming algorithm which can be formulated as computing the number of cyclic walks\footnote{A cyclic walk of length $n$ in a digraph is a sequence of arcs $a_1,\ldots,a_n$ of $G$ such that the head of $a_i$ equals the tail of $a_{(i+1) \bmod n}$ for each $i$.} of length $n$ in an exponentially sized directed multi-graph $G$. Specifically, assume that $D \subseteq \{0,\ldots,w\}$. For each $i \in \{0,\ldots,n-1\}$ and $W \subseteq D$ we add a vertex $(i,W)$ to $G$, and we add one arc from $(i,W)$ to $((i+1) \bmod\ n,W')$ if
\begin{equation}
 \label{eqsing}
 \{w+1: w \in W\} \setminus \{1\} = W' \setminus \{w\}, \text{ and } W\cap D \neq \emptyset,
\end{equation}
and we add two arcs instead from 
$(i,W)$ to $((i+1) \bmod\ n,W')$ if
\begin{equation}
\label{eqdoubl}
 \{w+1: w \in W\} \setminus \{1\} = W' \setminus \{w\}, \text{ and } W\cap D = \emptyset.
\end{equation}
We claim that the number of cyclic walks of length $n$ of this graph equals the number of ideals of $P$: If condition~\eqref{eqsing} applies, vertex $x_i$ needs to be included in the ideal since we decided that any of its out-neighbors are included. If condition~\eqref{eqdoubl} applies we have the freedom to either include $x_i$ in the ideal or not.
We are interested in cyclic walks since the graph is defined in a modular fashion.

In our code, the number of such cyclic walks is computed in $O(n \cdot 2^w\cdot 2^w)$ time with standard dynamic programming techniques.

\subsection{Exact Computation of \(\lambda(P)\)}

\paragraph{Basic exact recurrence.}
For a subset \(S\subseteq X\), let
\[
e(S)=|\{ y \in Y \mid N(y)\subseteq S\}|.
\]

Let \(F(S,t)\) denote the number of linear extensions whose first \(|S|+t\)
positions contain exactly \(S\) and exactly \(t\) 
elements from \(Y\). Since the poset is bipartite, the next element in such a partial
extension \(\pi\) is either

\begin{itemize}
\item one of the \(n-|S|\) elements \(x_i\) in \(X\setminus S\), or
\item one of the \(e(S)-t\) elements of \(Y\) such as \(y_j\) that is not in \(\pi\) yet but \(N(y_j)\subseteq S\).
\end{itemize}

Therefore
\[
F(S,t)
=
\sum_{x_i\notin S} F(S\cup\{x_i\},t)
+
\bigl(e(S)-t\bigr)F(S,t+1),
\]
with boundary condition
\[
F(X,t)=(n-t)!.
\]
Then the total number of linear extensions of \(P\) is
\[
\lambda(P)=F(\varnothing,0).
\]
Now in order to do this more efficiently we exploit the cyclic symmetry of \(P\).

\paragraph{Cyclic symmetry.}
The poset \(P\) remains the same if one applies a cyclic-shift by \(s\) units on the indices of elements in \(Y\) and \(X\) simultaneously. 
 Hence if \(S\subseteq X\) is rotated by
\(s\), the resulting subset \(\sigma^s(S)\) determines an isomorphic
subproblem, that is,
\[
F(\sigma^s(S),t)=F(S,t)
\]
for all \(S\subseteq X\), \(t\), and \(s\).
Therefore if \(S_1\) can be reached from \(S_2\) by a cyclic shift we say that they are isomorphic and we call a family of subsets of \(X\) a class if it is a maximal family of pairwise isomorphic subsets of \(X\).
We observed that \(F(S,t)\) depends only on the class that \(S\) belongs to. So, in order to get an improved running time we  may compute the recurrence
using only one representative from each class this preserves exactness while significantly reducing the time and space.

\paragraph{Summary.}
In our computations, \(\alpha(P)\) was obtained exactly by the transfer
automaton described above, and for \(\lambda(P)\), we used orbit-compressed dynamic programming. In particular, for \(P\) with \(n=29\) and \(D=(0,1,3,6,10,15)\), the
reported values of \(\alpha(P)\) and \(\lambda(P)\) are computed exactly in seconds.

\subsection{Counting for Irregular Posets}\label{sec:AppendixIrregularPoset}

Let $\mathcal{I}$ be the family of all ideals of $P$. Let $\lambda(I)$ denote the number of linear extensions when it is restricted to the ideal $I \in \mathcal{I}$.
An ideal $I$ is built by adding valid elements one by one, since an element $v \notin I$ can be added to form a new ideal $I \cup \{v\}$ if and only if all its predecessors are already in $I$. 

Equivalently, the number of ways to construct a linear extension over an ideal $I$ is the sum of the ways to construct linear extensions over all valid preceding ideals:
$$\lambda(I) = \sum_{v \in I: v\text{ maximal}} \lambda(I \setminus \{v\}).$$

Based on above observations, we employ a pure dynamic programming
over all possible subsets of vertices to compute the exact number of ideals and linear extensions. This method clearly does not use the structure of the poset to optimize the computation, and thus only works for posets with few vertices.

\end{document}